\definecolor{blu3}{rgb}{.1,.0,.4}
\newtheorem{theorem}{Theorem}
\newtheorem{corollary}[theorem]{Corollary}
\newtheorem{lemma}[theorem]{Lemma}
\newcommand{\RR}{\ensuremath{\mathbb R}}  
\DeclareMathOperator{\polylog}{polylog}
\DeclareMathOperator{\cell}{GVR}
\DeclareMathOperator{\GVD}{GVD}
\DeclareMathOperator{\bis}{bis}
\DeclareMathOperator{\interior}{int}
\DeclareMathOperator{\exterior}{ext}
\DeclareMathOperator{\AVD}{AVD}
\DeclareMathOperator{\AVR}{AVR}
\DeclareMathOperator{\AJ}{AJ}
\DeclareMathOperator{\AD}{AD}
\DeclareMathOperator{\GD}{GD}
\DeclareMathOperator{\diam}{diam}
\DeclareMathOperator{\counting}{count_\le}
\DeclareMathOperator{\adding}{sum}
\newcommand\numberthis{\addtocounter{equation}{1}\tag{\theequation}}
\newcommand\eps{\varepsilon}
\def\DEF#1{\textbf{\emph{#1}}}
\def\dart#1#2{#1\mathord\shortrightarrow#2}
\begin{document}

\title{Subquadratic Algorithms for the Diameter and \\
	the Sum of Pairwise Distances in Planar Graphs\thanks{A preliminary version of this work was presented at SODA 2017~\cite{Cab17}.}}

\author{Sergio Cabello\thanks{Department of Mathematics, IMFM, and
                Department of Mathematics, FMF, University of Ljubljana, Slovenia.
                Supported by the Slovenian Research Agency, program P1-0297. Email address: sergio.cabello@fmf.uni-lj.si}
}

\maketitle

\begin{abstract}
	We show how to compute for $n$-vertex planar graphs
	in $O(n^{11/6}\polylog(n))$ expected time the diameter and 
	the sum of the pairwise distances.
	The algorithms work for directed graphs with real weights and no negative cycles.
	In $O(n^{15/8}\polylog(n))$ expected time we can also compute 
	the number of pairs of vertices at distance smaller than a given threshold.
	These are the first algorithms for these problems using time $O(n^c)$ for some constant $c<2$, 
	even when restricted to undirected, unweighted planar graphs.

    \medskip
    \textbf{Keywords:} planar graph, diameter, Wiener index, distances in graphs, distance counting, Voronoi diagram.
\end{abstract}

\section{Introduction}
Let $G$ be a directed graph with $n$ vertices and 
arc-lengths $\lambda\colon E(G)\rightarrow \RR$.
The length of a walk in $G$ is the sum of the arc-lengths along
the walk. We assume that $G$ has no cycle of negative length.
The \DEF{distance} between two vertices $x$ and $y$ of $G$, denoted
by $d_G(x,y)$, is the minimum length over all paths in $G$ from $x$ to $y$. 
While it is common to use the term distance, this is not necessarily a metric.
This scenario is an extension of the more common case where the graph $G$
is undirected and the lengths are positive. In that case $d_G(\cdot,\cdot)$ is indeed a metric.

In this paper we are interested in computing basic information about the distances between vertices in the graph $G$.
The \DEF{diameter} of $G$ is 
\[
	\diam(G) ~:=~ \max \{ d_G(x,y) \mid x,y \in V(G)\},
\]
the \DEF{sum of the pairwise distances} of $G$ is
\[
	\adding(G) ~:=~ \sum_{(x,y)\in (V(G))^2} d_G(x,y),
\]
and, for any $\delta\in \RR$, the \DEF{distance counter} of $G$ is 
\[
	\counting(G,\delta) ~:=~ | \{(x,y)\in (V(G))^2 \mid d_G(x,y)\le \delta\}|.
\]

For undirected graphs, the value $\adding(G)$ is essentially equivalent to the average
distance in the graph and the so-called \emph{Wiener index}. The Wiener index is 
a basic topological index used in mathematical chemistry with thousands of publications. 

Computing the diameter, the sum of the pairwise distances, or the distance counter of 
a graph is a fundamental problem in graph algorithms.
The obvious way to compute them
is via solving the all-pairs shortest path problem (APSP) 
explicitly and then extract the relevant information.
A key question is whether one can avoid the explicit computation
of all the pairwise distances. 

Roditty and Vassilevska Williams~\cite{RodittyW13}
show that, for arbitrary graphs with $n$ vertices and $O(n)$ edges, 
one cannot compute the diameter in $O(n^{2-\delta_0})$ time, 
for some constant $\delta_0>0$, 
unless the strong exponential time hypothesis (SETH) fails.
In fact, their proof shows that for undirected, unweighted graphs
we cannot distinguish in $O(n^{2-\delta_0})$ time
between sparse graphs that have diameter $2$ or larger,
assuming the SETH. This implies the same conditional lower bound
for computing the sum of the pairwise distances or the distance counter in sparse graphs. 
Indeed, an unweighted graph $G$ of $n$ vertices has diameter $2$
if and only if 
\[
	\adding(G) ~=~ \sum_{x\in V(G)} \Bigl( \deg_G(x)+2(n-1-\deg_G(x)) \Bigr)
	~=~ 2n(n-1) - 2 |E(G)|.
\]	
Similarly such a graph $G$ has $\counting(G,2)=n^2$ if and only if  $G$ has diameter $2$.
Thus, if we could compute the sum of pairwise distances or the distance  counter for sparse graphs
in $O(n^{2-\delta_0})$ time, we could also distinguish in the same time
whether the graph has diameter $2$ or larger, and the SETH fails.

Given such conditional lower bounds, it is natural to 
shift the interest towards identifying families of sparse graphs 
where one can compute the diameter or the sum of pairwise distances in 
truly subquadratic time. Here we provide subquadratic algorithms for directed, planar graphs
with no negative cycles. 
More precisely, we show that the diameter and the sum of the pairwise distances can be computed in
$O(n^{11/6}\polylog(n))$ expected time,
while the distance counter can be computed in $O(n^{15/8}\polylog(n))$ expected time.
There are efficient algorithms for computing 
all the distances in a planar graph~\cite{f-pgdap-91} 
or a specified subset of the distances~\cite{Cabello12,MozesS12}.
However, none of these tools seem fruitful for computing our statistics in subquadratic time.

Note that our algorithms are the first  algorithms using time $O(n^c)$ for some constant $c<2$, 
even when restricted to undirected, unweighted planar graphs.
	
\paragraph{Related work.}
For graphs of bounded treewidth one can compute the
diameter and the sum of pairwise distances in
near-linear time~\cite{AWW16,CabelloK09}.
The distance counter for graphs of bounded treewidth can be handled using the same techniques.
Recently, Husfeldt~\cite{Husfeldt16} has looked at the problem of computing the diameter for undirected, 
unweighted graphs parameterized by the treewidth and the diameter.

For planar graphs,
Wulff-Nilsen~\cite{WN08} gives an algorithm to compute
the diameter and the sum of pairwise distances in unweighted, undirected planar graphs in 
$O(n^2 \log\log n/ \log n)$ time, which is slightly subquadratic.
Wulff-Nilsen~\cite{WN10} extends the result to weighted directed planar graphs with a time bound of
$O(n^2 (\log\log n)^4/ \log n)$.
Note that the running time of these algorithms is not 
of the type $O(n^c)$ for any constant $c < 2$.

Researchers have also looked into near-optimal approximations.
In particular, 
Weimann and Yuster~\cite{WeimannY16} provide a $(1+\eps)$-approximation 
to the diameter of undirected planar graphs in   
$O((n/\eps^4) \polylog(n) + 2^{O(1/\eps)} n)$ time.
As it was mentioned by Goldreich and Ron~\cite{GoldreichR08},
a near-linear time randomized $(1+\eps)$-approximation for the sum
of pairwise distances in undirected planar graphs can be obtained
using random sampling and an oracle for $(1+\eps)$-approximate 
distances~\cite{KKS11,Thorup04}.
See the work by Indyk~\cite{Indyk99} for the average distance in
arbitrary discrete metric spaces.

\paragraph{Our approach.}
Let us describe the high-level idea of our approach.
The main new ingredient is the use of additively-weighted Voronoi diagrams
in pieces of the graph: we make a 
quite expensive preprocessing step in each piece that 
permits the efficient computation of such Voronoi diagrams in each piece
for several different weights.

To be more precise, let $G$ be a planar graph with $n$ vertices. 
We first compute an $r$-division: this is a decomposition
of $G$ into $O(n/r)$ pieces, 
each of them with $O(r)$ vertices and $O(\sqrt{r})$ boundary vertices. 
This means that all the interaction between a piece $P$ and the complement
goes through the $O(\sqrt{r})$ boundary vertices of $P$.

Consider a piece $P$ and a vertex $x$ outside $P$. We would like to
break $P$ into regions according to the boundary vertex of $P$ that is used
in the shortest path from $x$. This can be modeled as an additively-weighted
Voronoi diagram in the piece: each boundary vertex is a weighted site whose
weight equals the distance from $x$. Thus, we have to compute several
such Voronoi diagrams for each piece. 

Assuming that a piece is embedded, one can treat such a Voronoi diagram
as an abstract Voronoi diagram and encode it using the dual graph. 
In particular, a bisector corresponds to a cycle in the dual graph. 
We can precompute all possible Voronoi diagrams for $O(1)$
sites, and that information suffices to compute
the Voronoi diagram using a randomized incremental construction.
Once we have the Voronoi diagram, encoded as a subgraph of the dual graph,
we have to extract the information from each Voronoi region.
Although this is the general idea, several technical details appear.
For example, the technology of abstract Voronoi diagrams can be used
only when the sites are cofacial.

We remark that our algorithms actually compute information for the distances from each vertex $x$ of $G$ separately.
Thus, for each vertex $x$ we compute the furthest vertex from $x$, the sum of the distances from $x$ to
all vertices, and the number of vertices at distance at most $\delta$ from $x$, for a given $\delta\in \RR$. 
Our main result is the following, whose statement makes this clear. 

\begin{theorem}
\label{thm:main}
	Let $G$ be a planar graph with $n$ vertices, real abstract length on its arcs, and no negative cycle.
	In $O(n^{11/6}\polylog(n))$ expected time we can compute $\adding(x,V(G),G)$ and
	$\diam(x,V(G),G)$ for all  vertices $x$ of $G$.
	For a given $\delta\in \RR$, in $O(n^{15/8}\polylog(n))$ expected time 
	we can compute $\counting(x,V(G),G,\delta)$ for all vertices $x$ of $G$.
\end{theorem}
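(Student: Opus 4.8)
The plan is to follow the outline sketched in the introduction, organized around an $r$-division of $G$ and the use of additively-weighted Voronoi diagrams inside each piece. First I would compute an $r$-division of $G$ with $O(n/r)$ pieces, each with $O(r)$ vertices and $O(\sqrt r)$ boundary vertices, and within each piece $P$ compute all pairwise distances between its $O(r)$ vertices and its $O(\sqrt r)$ boundary vertices; this costs $O((n/r)\cdot r^2\polylog n)=O(nr\polylog n)$ using known planar shortest-path tools. I would also compute, for each boundary vertex $b$ of each piece and each vertex $x$ of $G$, the distance $d_G(x,b)$; since the total number of boundary vertices is $O((n/r)\sqrt r)=O(n/\sqrt r)$, running a multiple-source shortest-path / exact distance oracle computation yields these $O(n\cdot n/\sqrt r)$ distances within the claimed budget for suitable $r$. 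With these in hand, for a fixed vertex $x$ and a fixed piece $P$ not containing $x$, every shortest path from $x$ into $P$ enters through some boundary vertex $b$ of $P$, so $d_G(x,v)=\min_b\bigl(d_G(x,b)+d_P(b,v)\bigr)$ for $v\in V(P)$: this is exactly an additively-weighted Voronoi diagram on $P$ with sites the boundary vertices of $P$ weighted by $d_G(x,b)$.

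The second main step is the efficient computation of these Voronoi diagrams. The number of (vertex, piece) pairs is $O(n\cdot n/r)=O(n^2/r)$, so I cannot afford to build each diagram from scratch; instead I would invest an expensive \emph{per-piece} preprocessing that is then reused across all $x$. Concretely, for each piece $P$ I precompute, for every $O(1)$-subset of its boundary vertices, the dual-graph encoding of the corresponding abstract Voronoi diagram over all relevant weight patterns — bisectors are cycles in the dual of (a triangulated copy of) $P$, and since the sites must be cofacial one first adds an apex face incident to all boundary vertices. Having these $O(1)$-site diagrams tabulated, a randomized incremental construction inserts the $O(\sqrt r)$ sites of $P$ one at a time in random order, each insertion taking near-linear time in $|V(P)|=O(r)$ in expectation, so building one Voronoi diagram costs $O(r\polylog n)$ expected time; across all $x$ and all pieces this is $O((n^2/r)\cdot r\polylog n)=O(n^2\polylog n)$ — too slow naively, so the actual bound comes from balancing: the per-piece preprocessing costs roughly $O((n/r)\cdot 2^{O(\sqrt r)}\polylog n)$-type or polynomial-in-$r$ terms, and setting $r=\Theta(n^{1/3})$ (resp.\ a slightly different exponent for the counting version) equalizes the dominant terms at $O(n^{11/6}\polylog n)$ (resp.\ $O(n^{15/8}\polylog n)$).

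Once the Voronoi diagram of $P$ for source $x$ is available as a subgraph of the dual graph, the third step extracts the three statistics. For each Voronoi region $\cell(b)$, I need $\sum_{v\in\cell(b)} d_P(b,v)$, $\max_{v\in\cell(b)} d_P(b,v)$, and $|\{v\in\cell(b): d_G(x,b)+d_P(b,v)\le\delta\}|$; adding the additive weight $d_G(x,b)$ and summing/maximizing/counting over regions then gives the contribution of piece $P$ to $\adding(x,V(G),G)$, $\diam(x,V(G),G)$, and $\counting(x,V(G),G,\delta)$. These per-region aggregates should be extractable in time near-linear in $|V(P)|$ by walking the dual cycles bounding each region and using, for the counting query, a precomputed sorted-list / fractional-cascading structure over the distances $d_P(b,\cdot)$ within $P$ (this is why the counting version pays an extra factor and forces the larger exponent $15/8$). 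Summing over all pieces not containing $x$ (and handling the $O(r)$ vertices inside $x$'s own piece directly from the precomputed in-piece distances) yields all three quantities for $x$, and iterating over all $n$ vertices $x$ completes the algorithm.

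The main obstacle, and where most of the technical work lies, is the Voronoi-diagram machinery itself: making precise that the piece-restricted Voronoi diagram is a genuine abstract Voronoi diagram (which requires the sites to be cofacial, hence the apex-face trick, and requires the bisector system to satisfy the abstract-Voronoi axioms — transitivity, the cycle/tree structure of bisectors in the dual), bounding the expected cost of the randomized incremental construction using the $O(1)$-site lookup table as the conflict-resolution primitive, and designing the per-region extraction (especially the thresholded count) so that the whole pipeline stays within $O(r\polylog n)$ per piece per source. Balancing the preprocessing cost against the query cost to land exactly at the exponents $11/6$ and $15/8$ is then a routine optimization over $r$.
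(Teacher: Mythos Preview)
Your high-level plan matches the paper's architecture, but the running-time analysis has a genuine gap that, as written, does not give a subquadratic algorithm.

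You state that building one Voronoi diagram costs $O(r\polylog n)$ expected time, and that extracting the per-region aggregates takes time near-linear in $|V(P)|=O(r)$. With $O(n^2/r)$ (source, piece) pairs this already gives $\tilde O(n^2)$, as you yourself note. You then wave at ``balancing'' and even mention a $2^{O(\sqrt r)}$ preprocessing term, but no amount of balancing preprocessing against an $\tilde O(r)$-per-diagram query cost can beat $n^2$: the query side alone is $\tilde O(n^2)$ regardless of $r$. The missing idea is that \emph{both} the construction of each Voronoi diagram \emph{and} the extraction of $\adding$, $\diam$, $\counting$ from it must take $\tilde O(\sqrt r)$ time, not $\tilde O(r)$.

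The paper achieves this as follows. A piece has $b=O(\sqrt r)$ boundary sites, so the abstract Voronoi diagram has only $O(b)$ vertices and edges; each edge is a \emph{subpath of one of the precomputed bisectors}, specified by two dual edges. The preprocessing tabulates, for every triple (and hence every quadruple) of boundary vertices, all combinatorially distinct Voronoi diagrams as the weights vary; this is a polynomial table of size $O(b^3 r^2)=O(r^{7/2})$, not anything exponential. The randomized incremental construction then runs in $\tilde O(b)$ elementary operations, each being a four-site lookup in $O(\log r)$ time, for $\tilde O(\sqrt r)$ total. For extraction, the key is that along each precomputed bisector one stores prefix sums (for $\adding$) and range-maximum arrays (for $\diam$) of the distances $d_G(b,\cdot)$, so that the aggregate over one Voronoi region, whose boundary is described by $O(1)$ bisector-subpaths on average, is read off in $O(1)$ per abstract edge rather than by walking all $O(r)$ primal vertices inside. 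This is what makes the query side $\tilde O(n\cdot(n/r)\cdot\sqrt r)=\tilde O(n^2/\sqrt r)$ and allows the balance at $r=n^{1/3}$ to land on $n^{11/6}$. Your proposal never identifies either of these $\tilde O(\sqrt r)$ mechanisms, and without them the argument does not close.

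A smaller point: there is no ``apex face'' trick. A piece has $O(1)$ holes; the paper treats each hole separately, re-embedding so that that hole is the outer face, which makes its boundary vertices cofacial and the abstract-Voronoi axioms applicable directly.
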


The proof of Theorem~\ref{thm:main} is in Section~\ref{sec:together}.

\paragraph{Assumptions.} 
We will assume that the distance between each pair of vertices is distinct and
there is a unique shortest path between each pair of vertices.
This can be enforced with high probability using infinitesimal perturbations
or deterministically using lexicographic comparison;
see for example the discussion by Cabello, Chambers and Erickson~\cite{CabelloCE13}. 
Since our result is a randomized algorithm with running times
that are barely subquadratic, the actual method that is used is not very relevant.

\paragraph{Randomization.}
Our algorithm is randomized and it is good to explain the source of this.
Firstly, we use random perturbations of lengths of the edges to ensure unique shortest paths.
The author thinks that, with some work, this assumption could be removed.

Another source of randomization comes from our black-box use of the paper 
by Klein, Mehlhorn and Meiser~\cite{KleinMM93}.
They provide a randomized incremental construction of Voronoi diagrams
under very general assumptions. Randomized incremental constructions are a
standard tool in computational geometry.
At the very high level, we compute a random permutation 
$s_1,\dots,s_n$ of the sites that define the diagram, and then iteratively
compute the Voronoi diagram for the subsets $S_i=\{ s_1,\dots,s_i \}$.
To compute $S_i$ from $S_{i-1}$, one has to estimate the amount of changes
that take place, and this is a random variable. In the case of Voronoi diagrams,
this is related to the expected size of a face of the Voronoi diagram. 
Additional work is needed to keep pointers that allow to make the updates fast.
In particular, for the new site $s_i$, we have to find the current face of the Voronoi
diagram for $S_{i-1}$ that contains it.

\paragraph{Follow up work.}
Since the conference version of our paper there has been important progress using
some of the techniques introduced here.
Voronoi diagrams in planar graphs have been used
to construct distance oracles for planar graphs
that have subquadratic space and answer queries in logarithmic time~\cite{cdw17,gmww18}. 
Most importantly, Gawrychowski et al.~\cite{ghmsw18} provide a better understanding of the structure 
of Voronoi diagrams in planar graphs that leads to a deterministic construction 
with a faster preprocessing time. With this, they obtain faster and deterministic
algorithms for all the problems we consider here. While some of the ideas they use 
come from our work, they also provide several new, key insights.

\paragraph{Roadmap.}
We assume that the reader is familiar with planar graphs.
In the next section we explain the notation and some basic background.
In Section~\ref{sec:dualcycle} we explain how to extract information
about the vertices contained in a dual cycle. 
In Section~\ref{sec:AVD} we explain the concept of abstract Voronoi diagrams.
In Section~\ref{sec:GVD} we deal with different definitions of Voronoi diagrams 
in plane graphs and show that they are equivalent.
In Section~\ref{sec:algorithms} we discuss the algorithmic aspects of computing Voronoi diagrams. In
particular, the algorithm performs an expensive preprocessing to be able to produce Voronoi diagrams faster.
In Section~\ref{sec:datastructure} we give the data structure that will be used for each piece of an $r$-division.
In Section~\ref{sec:together} we give the final algorithms for planar graphs. 
We conclude with a discussion.

For some readers, it may be more pleasant to read Section~\ref{sec:together} 
before Sections~\ref{sec:dualcycle}-\ref{sec:datastructure}. 
This may help understanding the high level approach and how everything fits together
before delving into the details.

\section{Notation and preliminaries}
\label{sec:preliminaries}
For running times, we use the notation $\tilde O(\cdot)$ when we omit polylogarithmic factors in any
of the parameters that appears in the statement. 
For example, if $n$ appears in the discussion, 
$\tilde O(mr)$ means $O(mr \log^c (mnr))$ for some constant $c$.

For each natural number $n$, we use the notation $[n]:=\{1,\dots,n\}$. 
For each set $A\subset \RR^2$, we use $\overline A$ for its closure and $A^\circ$ for its interior.

\paragraph{Graphs.}
Graphs considered in this paper are directed. We use $V(G)$ and $E(G)$ for the vertex and
the arc set of a graph $G$, respectively. We use the notation $\dart xy$ or $e$ to denote arcs. 
The tail of an arc $\dart xy$ is $x$, and $y$ is the head. We use $e^R$ for the reversal of the arc $e$. 
In some cases we may have parallel arcs. It should be clear from the context which arc we are referring to. 
When the orientation of the arc $\dart xy$ is not relevant, we may use $xy$ and refer to it as an (undirected) edge.

A closed walk in $G$ is a sequence $e_0,\dots , e_{k-1}$ of arcs with the property that the tail of $e_i$ is the head
of $e_{i-1}$ for all $i\in [k]$ (indices modulo $k$). Sometimes a closed walk is given as a sequence of vertices.
This uniquely defines the closed walk if there are no parallel edges. A cycle is a closed walk that
does not repeat any vertex. In particular, a cycle cannot repeat any arcs. We make it clear that the
walk $\dart xy,\dart yx$ is a cycle.

\paragraph{Planarity.}
A \DEF{plane graph} is a planar graph together with a fixed embedding. 
The arcs $e$ and $e^R$ are assumed to be embedded as a single curve with opposite orientations.
In the arguments we will use the geometry of the embedding and the plane quite often. 
For example, we will talk about the faces enclosed by a cycle of the graph.
However, all the computations can be done assuming a combinatorial
embedding, described as the circular order of the edges incident to each vertex. 

Let $G^*$ be the dual graph of a plane graph $G$. 
We may consider $G^*$ with oriented arcs or with edges, depending on the context.
We keep in $G^*$ any parallel edges that may occur.
When $G$ is 2-connected, the graph $G^*$ has no loops.
For each vertex $v$ and edge $e$ of $G$, 
we use $v^*$ and $e^*$ to denote their dual counterparts, respectively.
For any set of edges $A\subseteq E(G)$, 
we use the notation $A^*= \{ e^*\mid e\in A\}$.

We assume natural embeddings of $G$ and $G^*$ where each 
dual edge $e^*$ of $G^*$ crosses $G$ exactly once and does so at $e$.
There are no other types of intersections between $G$ and $G^*$.
See Figure~\ref{fig:dual} for an example.
If we would prefer to work with an actual embedding and coordinates, instead of a combinatorial
embedding, we could do so. To achieve this, for each edge $e$ of $G$, 
we subdivide $e$ and $e^*$ with a common vertex $v_e$. 
Then we obtain a planar graph $H$ that contains a subdivision of $G$ and a subdivision of $G^*$. 
We can now embed $H$ with straight-line segments in an $O(n)\times O(n)$ regular grid~\cite{TamassiaL04}. 
In this way we obtain an embedding of $G$ and an embedding of $G^*$ 
with the property that each edge and each dual edge is represented by a two-segment polygonal curve, 
and $e$ and $e^*$ cross as desired. With this
embedding we can carry out actual operations using coordinates.

\begin{figure}
\centering
	\includegraphics[page=2,scale=.9]{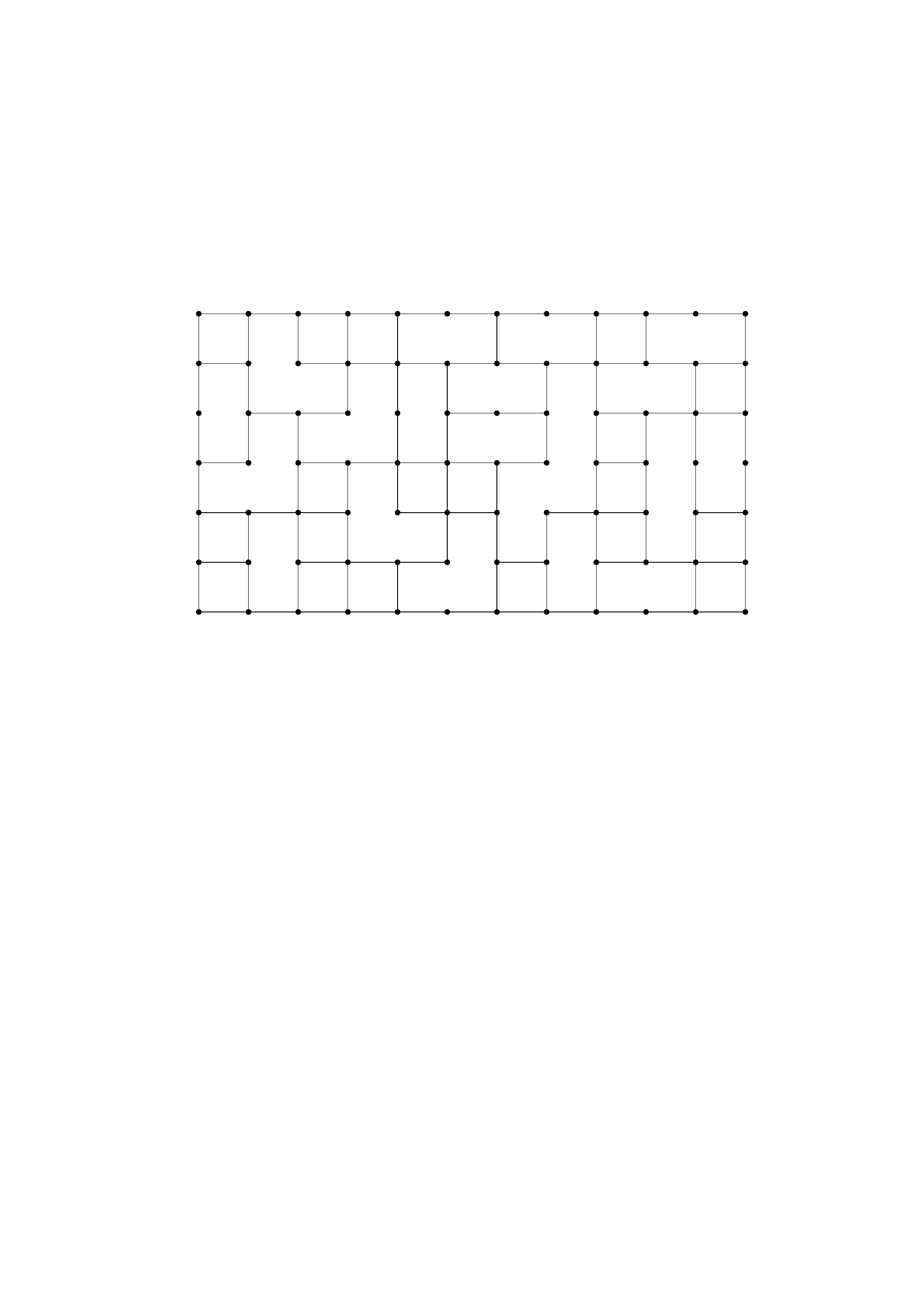}
	\caption{A plane graph $G$ (in black with dots for the vertices) and 
				its dual $G^*$ (in red with squares for the vertices). 
				The dual vertex $a_\infty$ corresponding to the outer face of $G$ is not drawn. 
				Dual edges with an endpoint at $a_\infty$ are represented using arrows.}
	\label{fig:dual}
\end{figure}

Vertices of $G$ are usually denoted by $x,y,u,v$.
Faces of $G$ are usually denoted by symbols like $f$ and $g$.
The dual vertices are usually denoted using early letters of the Latin alphabet, like $a$ and $b$.
We use $a_\infty$ for the dual vertex representing the outer face.
We will denote cycles and paths in the dual graph with Greek letters, 
such as $\gamma$ and $\pi$.
Sets of cycles and paths in the dual graph are with capital Greek letters, like $\Gamma$or $\Pi$.

Quite often we identify a graph object and its geometric representation
in the embedding. In particular, (closed) walks in the graph define (closed) 
curves in the plane. 
We say that a closed walk $\gamma$ in $G^*$ is \DEF{non-crossing}
if there is an infinitesimal perturbation $\gamma_\eps$ of the curve $\gamma$
that makes it simple. If $\gamma$ is simple, we can take $\gamma_\eps=\gamma$.
For each simple closed curve $\gamma$ in the plane,
let $\interior(\gamma)$ be the bounded domain of $\RR^2\setminus \gamma$,
and let $\exterior(\gamma)$ be the unbounded one. 
For each closed, non-crossing closed walk $\gamma$ in the dual graph $G^*$,
let $V_{\interior}(\gamma,G)=\interior(\gamma_\eps)\cap V(G)$ and
$V_{\exterior}(\gamma,G)=\exterior(\gamma_\eps)\cap V(G)$. 
Note that since $\gamma$ is a walk in $G^*$, the vertices of $V(G)$ are
far away from $\gamma$ and it does not matter which infinitesimal perturbation $\gamma_\eps$ 
of $\gamma$ we use. See Figure~\ref{fig:Vint} for an example.

\begin{figure}
\centering
	\includegraphics[page=3,scale=.9]{dualgraph}
	\caption{A non-crossing closed walk $\gamma$ in the graph of Figure~\ref{fig:dual}
			is drawn in thick green.
			The vertices of $V_{\interior}(\gamma,G)$ are marked with crosses.}
	\label{fig:Vint}
\end{figure}
\paragraph{Distances in graphs.}
In this paper we allow that the arcs have negative lengths $\lambda$. 
However, the graphs cannot have negative cycles, that is, cycles of negative length. 
In our approach we need that subpaths of shortest paths are also shortest paths.
Note that the existence of a cycle of negative length can be checked in near-linear time 
for planar graphs using algorithms for the shortest-path problem~\cite{fr-pgnwe-06,KleinMW10,MozesW10}.

For a graph $G$, a \DEF{shortest-path tree from} a vertex $r\in V(G)$ is a tree $T$ 
that is a subgraph of $G$ and satisfies $d_T (r,y) = d_G (r, y)$ for all $y\in V(G)$. 
A \DEF{shortest-path tree to} a vertex $r\in V(G)$ is a tree $T$ 
that is a subgraph of $G$ and satisfies $d_T (y,r) = d_G (y,r)$ for all $y\in V(G)$. 

For all graphs considered in this paper we assume that, whenever we have an arc $e$, we also have
its reversed arc $e^R$. We can ensure this by adding arcs with large enough length 
that no shortest path uses them.
Similarly, adding edges, we can assume that the graphs that we are considering are connected.

For a given graph $G$ with edge lengths $\lambda(\cdot)$, we use $G^R$ for the reversed graph, 
that is, the graph $G$ with edge lengths $\lambda^R(e) =\lambda (e^R)$. 
A shortest-path tree from $r$ in $G^R$ is the reversal of a shortest-path tree to $r$ in $G$.
Thus, as far as computation is concerned, there is no difference between computing shortest-path
trees from or to a vertex.

\paragraph{Potentials for directed graphs}
Let $G$ be a (directed) graph with arc lengths $\lambda\colon E(G)\rightarrow \RR$.
A \DEF{potential} for $G$ is a function $\phi\colon V(G)\rightarrow \RR$
such that:
\[
	\forall \dart uv \in E(G):~~~ \phi(v) \le \phi(u) + \lambda(\dart uv).
\]
For a potential function $\phi$ for $G$, the \DEF{reduced length} $\tilde\lambda$ is
defined by
\[
	\forall \dart uv \in E(G):~~~ \tilde\lambda(\dart uv) = \lambda( \dart uv) + \phi(u) - \phi(v).
\]
The following properties are easy and standard~\cite[Section 8.2]{Schrijver-book}. 
They have been used in several previous works in planar graphs.
\begin{itemize}
	\item Fix any vertex $s$ of $G$. If $G$ has no negative cycle, 
		then the function $\phi(v)=d_G(s,v)$ is a potential function.
	\item For each dart $\dart uv \in E(G)$ we have $\tilde\lambda(\dart uv)\ge 0$.
	\item A path in $G$ from $s$ to $t$ is a $\lambda$-shortest path 
		if and only if it is a $\tilde\lambda$-shortest path.
\end{itemize}
This means that, if $G$ has no negative cycle with respect to the arc lengths $\lambda$,
once we have computed a single-source shortest path tree in $G$ 
from an arbitrary source $s$, we can solve all subsequent single-source shortest path 
problems in $G$ using the reduced lengths, which are non-negative.

\paragraph{Vertex-based information.}
Consider a graph $G$. 
For each vertex $x\in V(G)$, each subset $U\subseteq V(G)$,
and each real value $\delta$, we define
\begin{align*}
	\diam(x,U,G) ~&:=~ \max \{ d_G(x,u)\mid u\in U \},\\
	\adding(x,U,G) ~&:=~ \sum_{u\in U} d_G(x,u) ,\\
	\counting(x,U,G,\delta) ~&:=~ | \{ u \in U\mid d_G(x,u)\le \delta \}|.
\end{align*}
Our main results will compute these values for all vertices $x\in V(G)$
when $G$ is planar and $U=V(G)$.
Clearly we have
\begin{align*}
	\diam(G) ~&=~ \max \{ \diam(x,V(G),G)\mid x\in V(G) \},\\
	\adding(G) ~&=~ \sum_{x\in V(G)} \adding (x,V(G),G) ,\\
	\counting(G,\delta) ~&=~ \sum_{x\in V(G)} \counting (x,V(G),G,\delta).
\end{align*}

\section{Handling weights within a non-crossing walk}
\label{sec:dualcycle}

For the rest of this section, let $G$ be a plane graph with $n$ vertices.
In this section we are not concerned with distances. Instead, we are
concerned with vertex-weights.
Assume that each vertex $x$ of $G$ has a weight $\omega(x)\in \RR$.
For each subset $U$ of vertices and each value $\delta\in \RR$
we define 
\[
	\sigma(U) :=\sum_{x \in U} \omega(x), ~~~ 
	\mu(U) :=\max_{x \in U} \omega(x), ~~~
	\kappa_\le(U,\delta) := |\{  x\in U\mid \omega(x)\le \delta \}| .
\]

Let $\gamma$ be a non-crossing closed walk in the dual graph $G^*$. 
We are interested in a way to compute $\sigma(V_{\interior}(\gamma,G))$,
$\mu(V_{\interior}(\gamma,G))$, and $\kappa_\le(V_{\interior}(\gamma,G),\delta)$
\emph{locally}, after some preprocessing of $G$ and $G^*$.
Here, locally means that we would like to just look
at the edges of $\gamma$.
In the following, we assume that any non-crossing closed walk $\gamma$ in $G^*$ is 
traversed clockwise. 

In the next section we concentrate on the computation of $\sigma(\cdot)$ and then explain
how to use it for computing $\kappa_\le(\cdot,\delta)$.
In Section~\ref{sec:max} we discuss the computation of $\mu(\cdot)$
 
\subsection{Sum of weights and counting weights}
We start adapting the approach by Park and Phillips~\cite{ParkP93} and Patel~\cite{Patel13},
which considered the computation of $\sigma(\cdot)$ when $\omega(x)=1$ for all $x\in V(G)$.
We summarize the ideas in the next lemma to make it self-contained. 
While most of the paper is simpler for undirected graphs, in the next lemma we do
need the directed edges of the dual graph. We are not aware of a similar statement that
would work using the undirected dual graph.

\begin{lemma}
\label{le:sumweights}
	Let $G$ be a plane graph, directed or not, and let $x_0$ be a fixed vertex in $G$.
	In linear time we can compute a weight function 
	$\chi\colon E(G^*)\rightarrow \RR$ with
	the following property:
	For every non-crossing closed walk $\gamma$ in the dual graph $G^*$
	that is oriented clockwise and contains $x_0$ in its interior, 
	we have
	\[
		\sigma(V_{\exterior}(\gamma,G)) ~=~
			\sum_{\dart{a}{b} \in \gamma} \chi(\dart{a}{b}). 
	\]
\end{lemma}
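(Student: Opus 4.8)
My plan is to build the dual-edge weight function $\chi$ by assigning to each dual arc the weight of the vertex of $G$ that "straddles" it in a suitable spanning structure, so that summing over the arcs of $\gamma$ counts each exterior vertex exactly once. Concretely, first I would fix a spanning tree $T$ of $G$ rooted somewhere convenient, and consider the well-known interdigitating (complementary) spanning tree $T^*$ of $G^*$ whose edge set is $(E(G)\setminus E(T))^*$; this $T^*$ is spanning in $G^*$. I would root $T^*$ at the dual vertex $a_\infty$ corresponding to the outer face. Now the key combinatorial fact is that the arcs of any non-crossing clockwise closed walk $\gamma$ in $G^*$, when each is charged to the tree-path in $T^*$ from $a_\infty$ toward it, telescope in a way that reconstructs exactly the set of faces (equivalently, the region) outside $\gamma$. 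To turn this into a count of exterior \emph{vertices} rather than faces, I would instead work with the primal tree $T$: each dual arc $\dart ab$ crosses a unique primal edge, and I orient/assign things so that traversing the arcs of $\gamma$ amounts to summing, with signs, over the edges of $T$ separating the interior from the exterior; since $T$ is a tree, each vertex of $V_{\exterior}(\gamma,G)$ is "cut off" from the rest by exactly one such edge, so a telescoping argument over $T$-subtrees gives $\sigma(V_{\exterior}(\gamma,G))$.

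More precisely, here are the steps in order. (1) Preprocess: root $T$ at a vertex, and for each vertex $v$ let $\adding_T(v)$ be the sum of $\omega(\cdot)$ over the subtree of $T$ hanging below $v$; all of these are computable in linear time by one bottom-up pass. (2) For each edge $e$ of $G$ define $\chi$ on its two dual arcs $e^*$ and $(e^R)^*$: if $e\in T$ and $e$ connects $v$ to its parent, put the mass $\adding_T(v)$ on the dual arc of $e$ that is crossed when one traverses $\gamma$ clockwise with the subtree below $v$ lying \emph{outside} $\gamma$, and $-\adding_T(v)$ (or $0$, after a further normalization) on the oppositely oriented dual arc; if $e\notin T$, set $\chi$ to $0$ on both dual arcs. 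The clockwise convention plus the embedding fixes which of the two orientations gets the positive mass. (3) Given a non-crossing clockwise $\gamma$ containing $x_0$ in its interior, observe that $\gamma$ partitions the faces of $G$, hence the edges of $T$, into those entirely inside, entirely outside, and those crossed by $\gamma$; the crossed $T$-edges are precisely the ones contributing to $\sum_{\dart ab\in\gamma}\chi(\dart ab)$. (4) A telescoping/inclusion argument over the forest obtained by deleting the crossed edges from $T$ shows that the signed sum of the $\adding_T(v)$ over the crossed edges equals the total weight of the component(s) of that forest lying outside $\gamma$ — and because $x_0$ (hence the root side, after choosing the root appropriately, or after adding a correction term handled in preprocessing) is inside, these outside components are exactly $V_{\exterior}(\gamma,G)$. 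This yields the claimed identity.

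The step I expect to be the main obstacle is (2)–(3): pinning down the orientation bookkeeping so that the clockwise traversal of $\gamma$ selects, for each crossed tree edge, the correct sign, uniformly over \emph{all} non-crossing closed walks and independently of which side the root of $T$ ends up on. This is where the directedness of $G^*$ is essential — as the authors note, the undirected dual does not suffice — and it is easy to get a global sign wrong. I would handle it by first treating a simple clockwise cycle $\gamma$ (using the Jordan curve theorem and the interdigitating tree to identify "inside faces" with a subtree of $T^*$), verifying the identity there, and then extending to non-crossing closed walks via the infinitesimal perturbation $\gamma_\eps$ from the preliminaries, noting that $V_{\exterior}(\gamma,G)$ is insensitive to the choice of $\gamma_\eps$ and that $\sum_{\dart ab\in\gamma}\chi(\dart ab)$ is additive over the simple cycles into which $\gamma_\eps$ decomposes. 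A final normalization — e.g. choosing $T$ rooted so that $x_0$'s behavior is predictable, or adding a global constant computed once — absorbs the dependence on $x_0$ being in the interior.
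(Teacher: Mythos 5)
Your proposal follows essentially the same route as the paper: root a spanning tree $T$ of $G$ at $x_0$, precompute subtree weight sums by one bottom-up pass, set $\chi(\dart ab)=\pm\sigma(V(T_y))$ on the two dual arcs crossing the tree arc from a vertex $y$ to its parent (with sign fixed by the left-to-right crossing convention), and set $\chi=0$ on non-tree dual arcs. The only minor difference is in verifying the identity: the paper uses a per-vertex parity argument (the crossings of $\gamma_\eps$ with the $T$-path from $x_0$ to each $z$ alternate in orientation, so $\omega(z)$ contributes exactly once when $z$ is exterior and nets to zero otherwise), while you propose a per-component telescoping over the forest $T$ minus the crossed edges; these are two views of the same cancellation, and both hinge on the same orientation bookkeeping you correctly flag as the delicate point. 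Two small infelicities worth noting but not gaps: the interdigitating tree $T^*$ plays no role in this lemma, and since a non-crossing closed walk already has a single simple perturbation $\gamma_\eps$, there is no decomposition into several simple cycles to worry about.
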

\begin{proof}	
	Take any spanning tree $T$ of $G$ rooted at $x_0$, and orient the arcs away from $x_0$. 
	For example, a BFS tree of $G$ from $x_0$.
	For each node $y\in V(G)$, let $T_y$ be the subtree of $T$ rooted at $y$.
	See Figure~\ref{fig:sumweights}, left.
	For each vertex $y\neq x_0$ we proceed as follows. Let $x$ be the parent of $y$ and 
	let $\dart ab$ be the dual arc that crosses $\dart xy$ from left to right.
	Then we assign
	$\chi(\dart ab)= \sigma(V(T_y))$
	and $\chi(\dart ba)= - \chi(\dart ab)$.
	For any dual edge $ab$ of $E(G)^*\setminus E(T)^*$
	we set $\chi(\dart ab)=\chi(\dart ba)=0$.
	This finishes the description of the function $\chi$.
	It is easy to see that we can compute $\chi$ in linear time.
	
	From the definition of $\chi$ we have 
	\begin{align*}
		\sum_{\dart ab \in \gamma} \chi(\dart ab) ~&=~
		\sum_{\dart ab \in E(\gamma)\cap E(T)^*} \chi(\dart ab) \\
		&=~
		\sum_{\begin{minipage}{1.8cm}\centering \scriptsize
			$\dart xy \in T$,\\
			$\gamma$ crosses $\dart xy$\\ left-to-right
			\end{minipage}} \sigma(V(T_y)) -
		\sum_{\begin{minipage}{1.8cm}\centering \scriptsize
			$\dart xy \in T$,\\
			$\gamma$ crosses $\dart xy$\\ right-to-left
			\end{minipage}} \sigma(V(T_y))		. 
	\numberthis \label{eq:weights}
	\end{align*}

	\begin{figure}
		\centering
		\includegraphics[width=.7\textwidth]{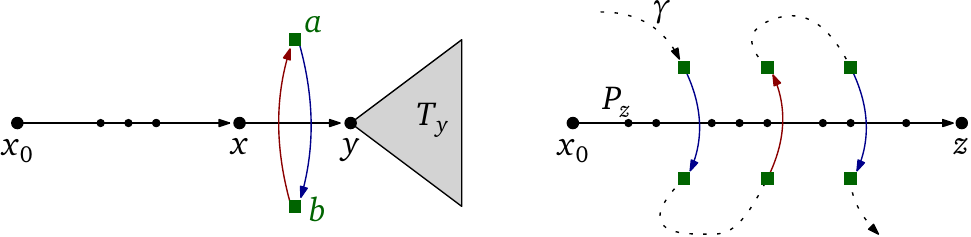}
		\caption{Proof of Lemma~\ref{le:sumweights}. 
			Left: orientation of the edges of $T$. 
			The dart $\dart ab$ crosses $\dart xy$ from left to right
			and thus $\chi(\dart ab)=\omega(V(T_y)$.
			Right: the crossings of $\gamma$ and $P_z$ alternate
			between left-to-right and right-to-left, as we walk along $P_z$.}
		\label{fig:sumweights}
	\end{figure}
	
	Let $\gamma_\eps$ be an infinitesimal perturbation of $\gamma$ that is simple.
	We then have $\interior(\gamma_\eps)\cap V(G) = V_{\interior}(\gamma,G)$
	and $\exterior(\gamma_\eps)\cap V(G) = V_{\exterior}(\gamma,G)$.
	
	Consider any vertex $z$ of $V(G)$ and let $P_z$ 
	be the path in $T$ from $x_0$ to $z$. Since $x_0$ is 
	in $\interior(\gamma_\eps)$ and $\gamma_\eps$ is a simple curve, 
	the crossings between $P_z$ and $\gamma_\eps$,
	as we walk along along $P_z$, alternate between left-to-right and
	right-to-left crossings.
	See Figure~\ref{fig:sumweights}, right.
	Since $\gamma_\eps$ defines a simple curve,
	the number of crossings is even if $z$ is in $\interior(\gamma_\eps)$
	and odd otherwise. It follows that $\omega(z)$ contributes to the sum
	on the right side of equation \eqref{eq:weights} either once, 
	if $z$ is in $\exterior(\gamma_\eps)$,
	or zero times, if $z$ is in $\interior(\gamma_\eps)$. The result follows.
\end{proof}

Lemma~\ref{le:sumweights} can also be used to compute $\sigma(V_{\interior}(\gamma,G))$
because $\sigma(V_{\interior}(\gamma,G))+ 
\sigma (V_{\exterior}(\gamma,G) =\sigma(V(G))$.

We would like a data structure to quickly handle non-crossing closed walks in
the dual graph that will be described compactly. 
More precisely, at preprocessing time 
we are given a family $\Pi=\{ \pi_1,\dots,\pi_\ell\}$ 
of walks in $G^*$, and the non-crossing closed walk will be given
as a concatenation of some subwalks from $\Pi$.
Using the function $\chi(\cdot)$ and partial sums over
the edges $e$ of each prefix of a walk in $\Pi$ we get the following result.

\begin{theorem}
\label{thm:sumweights}
	Let $G$ be a plane graph with $n$ vertices and vertex-weights $\omega(\cdot)$.
	Let $x_0$ be a vertex of $G$.
	Let $\Pi=\{ \pi_1,\dots,\pi_\ell\}$ be a family of walks in $G^*$ 
	with a total of $m$ edges, counted with multiplicity.
	After $O(n+m)$ preprocessing time, we can answer the following
	type of queries: 
	given a non-crossing closed walk $\gamma$ in $G^*$,
	described as a concatenation of $k$ subpaths of paths from $\Pi$,
	and with the property that $\gamma$ is oriented clockwise and 
	contains $x_0$ in its interior,
	return $\sigma (V_{\interior}(\gamma,G))$ 
	in $O(k)$ time.	
\end{theorem}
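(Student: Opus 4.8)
The plan is to reduce the query to a small number of prefix-sum lookups, using the weight function $\chi$ from Lemma~\ref{le:sumweights} together with the complementary identity $\sigma(V_{\interior}(\gamma,G)) = \sigma(V(G)) - \sigma(V_{\exterior}(\gamma,G))$. The key observation is that $\sum_{\dart ab \in \gamma} \chi(\dart ab)$ is additive over any decomposition of the closed walk $\gamma$ into consecutive subwalks, so if $\gamma = q_1 q_2 \cdots q_k$ where each $q_j$ is a subpath of some $\pi_{i(j)} \in \Pi$, then $\sigma(V_{\exterior}(\gamma,G)) = \sum_{j=1}^k \bigl( \sum_{\dart ab \in q_j} \chi(\dart ab) \bigr)$, and it suffices to evaluate each inner sum in $O(1)$ time.

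First I would run the linear-time preprocessing of Lemma~\ref{le:sumweights} to obtain $\chi\colon E(G^*)\to\RR$; this takes $O(n)$ time and also lets us compute $\sigma(V(G))$ in $O(n)$ time. Next, for each path $\pi_i = (e^{(i)}_1, e^{(i)}_2, \dots, e^{(i)}_{m_i})$ in $\Pi$, viewed with a fixed orientation and a fixed starting endpoint, I would precompute the array of prefix sums $S_i[t] := \sum_{s=1}^{t} \chi(e^{(i)}_s)$ for $t = 0, 1, \dots, m_i$, with $S_i[0] = 0$. Over all of $\Pi$ this costs $O(m)$ time and $O(m)$ space. A subpath of $\pi_i$ consisting of the edges from position $p{+}1$ to position $q$, traversed forward, then contributes $S_i[q] - S_i[p]$; traversed backward it contributes $-(S_i[q]-S_i[p])$, because $\chi(\dart ba) = -\chi(\dart ab)$ for every dual arc. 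Either way the contribution of one subpath is an $O(1)$-time lookup.

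At query time, I would iterate over the $k$ subpaths $q_1, \dots, q_k$ describing $\gamma$; for each $q_j$ I look up the index $i(j)$ of the ambient path in $\Pi$, the two endpoint positions, and the traversal direction, and add the corresponding signed prefix-sum difference to a running total; this total equals $\sigma(V_{\exterior}(\gamma,G))$ by Lemma~\ref{le:sumweights}, and I return $\sigma(V(G)) - \sum_{j=1}^k (\text{contribution of } q_j)$. The whole query is $O(k)$. For $\kappa_\le(\cdot,\delta)$ I would simply observe that $\kappa_\le(V_{\interior}(\gamma,G),\delta)$ is exactly $\sigma'(V_{\interior}(\gamma,G))$ for the $0/1$ weight function $\omega'(x) := [\omega(x)\le\delta]$; so one runs the same machinery with $\omega'$ in place of $\omega$ (this requires $\delta$ to be known at preprocessing time, which matches the intended use).

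The only genuinely delicate point — and the one I would be careful to spell out — is the bookkeeping that makes each subpath contribution an $O(1)$ lookup regardless of orientation: we must record for every $\pi_i$ a canonical direction and endpoint, store each query subpath as a triple (path index, start position, end position) with a direction flag, and use the antisymmetry $\chi(\dart ba)=-\chi(\dart ab)$ to handle reversed traversals. There is no real mathematical obstacle here; the correctness is immediate from Lemma~\ref{le:sumweights} and the additivity of edge sums along a walk, and the running-time bounds follow from the standard prefix-sum trick. The mild subtlety is purely one of careful definitions of how $\gamma$ is "described as a concatenation of $k$ subpaths," which I would pin down precisely before invoking the prefix sums.
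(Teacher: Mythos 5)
Your proof is correct and takes the same approach as the paper: apply Lemma~\ref{le:sumweights} to get $\chi$, precompute prefix sums of $\chi$ along each $\pi_i$, and at query time assemble $\sigma(V_{\exterior}(\gamma,G))$ from $O(1)$-time prefix-sum differences before complementing against $\sigma(V(G))$. The only minor difference is that you additionally spell out how to handle reversed subpath traversals via the antisymmetry $\chi(\dart ba)=-\chi(\dart ab)$, whereas the paper's proof implicitly restricts to forward-oriented subpaths.
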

\begin{proof}
	We compute for $G$ the function $\chi$ of Lemma~\ref{le:sumweights}.
	For each walk $\pi_i$ of $\Pi$ we proceed as follows.
	Let $e(i,1),\dots,e(i,m_i)$ be the arcs of $\pi_i$, as they appear along the walk $\pi_i$,
	and define the partial sums 
	$S[i,j]=\sum_{t=1}^j \chi(e(i,t))$ for $j=1,\dots, m_i$.
	It is also convenient to define $S[i,0]=0$.
	The values $S[i,1], \dots, S[i,m_i]$ can be computed in $O(m_i)$ time
	using that $S[i,j]= S[i,j-1]+ \chi(e(i,j))$ for $j=1,\dots, m_i$.
	Repeating the procedure for each $\pi_i\in \Pi$, we have spent a total of $O(n+m)$ time.
	This finishes the preprocessing.

	Consider a non-crossing closed walk $\gamma$ in $G^*$
	given as the concatenation of $k$ walks $\pi^1,\dots,\pi^k$, 
	each of them a subpath of some path in $\Pi$. 
	Each $\pi^t$ in the description of $\gamma$ is of the form
	$e(i(t),j_1(t)),\dots, e(i(t),j_2(t))$ for some index $i(t)$ (so $\pi^t$ is a subpath of $\pi_{i(t)}$ )
	and some indices  $j_1(t),j_2(t)$ with $1\le j_1(t)\le j_2(t) \le m_{i(t)}$.
	Then we have 
	\[
		\sum_{e\in \pi^t} \chi(e) ~=~ S[i(t),j_2(t)] - S[i(t),j_1(t)-1].
	\]
	Because of the properties of $\chi$ in Lemma~\ref{le:sumweights}
	we have
	\begin{align*}
		\sigma(V_{\exterior}(\gamma,G)) ~=~
		\sum_{\dart{a}{b} \in \gamma} \chi(\dart{a}{b}) 
		~=~ \sum_{t=1}^k ~ \sum_{\dart{a}{b} \in \pi^t} \chi
			(\dart{a}{b}) \\
		~=~ \sum_{i=t}^k S[i(t),j_2(t)] - S[i(t),j_1(t)-1].
	\end{align*}
	It follows that we can compute $\sigma(V_{\exterior}(\gamma,G))$ in $O(k)$ time,
	and therefore we can also obtain $\sigma(V_{\interior}(\gamma,G))$ in the same time bound.
\end{proof}

We now look into the case of computing $\kappa_\le(\cdot,\delta)$.
Using a binary search on $W=\{ \omega(v)\mid v\in V(G)\}$ we achieve the following result.
Note that in the following result the dependency in $n$ increases.

\begin{corollary}
\label{cor:countweight}
	Consider the setting of Theorem~\ref{thm:sumweights}.
	After $O(n(n+m))$ preprocessing time, we can answer the following
	type of queries: 
	given a value $\delta\in \RR$ and
	a non-crossing closed walk $\gamma$ in $G^*$,
	described as a concatenation of $k$ subpaths of paths from $\Pi$,
	and with the property that $\gamma$ is oriented clockwise and 
	contains $x_0$ in its interior,
	return $\kappa_\le (V_{\interior}(\gamma,G),\delta)$ in $O(k+\log n)$ time.
\end{corollary}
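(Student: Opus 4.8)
The plan is to reduce the counting query $\kappa_\le(V_{\interior}(\gamma,G),\delta)$ to a sum query of the kind handled by Theorem~\ref{thm:sumweights}, by discretizing the threshold $\delta$. First I would sort the multiset of weights $W=\{\omega(v)\mid v\in V(G)\}$; call the distinct values $w_1<w_2<\dots<w_p$ with $p\le n$. For a fixed threshold $w_j$ the quantity $\kappa_\le(V_{\interior}(\gamma,G),w_j)$ is exactly $\sigma_j(V_{\interior}(\gamma,G))$, where $\sigma_j$ denotes the sum operator with respect to the $0/1$ vertex-weight function $\omega_j(x):=\mathds{1}[\omega(x)\le w_j]$. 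So the idea is: for each $j\in[p]$, run the preprocessing of Theorem~\ref{thm:sumweights} using the weight function $\omega_j$, obtaining its own function $\chi_j$ and its own table of partial sums $S_j[i,\cdot]$ over the walks of $\Pi$. Each such preprocessing costs $O(n+m)$, so doing it for all $j\le p\le n$ costs $O(n(n+m))$, matching the claimed bound.

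For a query $(\delta,\gamma)$, I would first binary-search in the sorted list $w_1<\dots<w_p$ to find the largest index $j$ with $w_j\le\delta$ (returning $0$ if $\delta<w_1$), which takes $O(\log n)$ time; note $\{x\mid\omega(x)\le\delta\}=\{x\mid\omega(x)\le w_j\}$. Then I would invoke the $O(k)$-time query of Theorem~\ref{thm:sumweights} against the $j$-th precomputed instance to obtain $\sigma_j(V_{\interior}(\gamma,G))=\kappa_\le(V_{\interior}(\gamma,G),\delta)$. Total query time is $O(k+\log n)$, as stated. The correctness is immediate from the identity relating $\kappa_\le$ with $\delta$ to $\sigma$ with the indicator weights, together with the correctness of Theorem~\ref{thm:sumweights}.

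There is no real obstacle here; the only thing to be slightly careful about is bookkeeping. We must store the $p$ tables $S_j$ (total $O(np+pm)=O(n(n+m))$ space — which is consistent with the preprocessing time) and the sorted array of distinct weights with a pointer from each index $j$ into the corresponding table. One could shave a logarithmic factor in the query by replacing the binary search with a fractional-cascading or bucketing scheme, but since the statement already allows the additive $O(\log n)$, the plain binary search suffices. The same recipe would of course also yield counting queries for $V_{\exterior}(\gamma,G)$ via the complementary identity used in Theorem~\ref{thm:sumweights}.
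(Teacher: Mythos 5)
Your proposal is correct and matches the paper's proof essentially verbatim: sort the weights, build one instance of the Theorem~\ref{thm:sumweights} data structure per threshold with the $0/1$ indicator weight function, and at query time binary-search for the right threshold and issue a single sum query. The only cosmetic difference is that you deduplicate the weights (using $p\le n$ distinct values) while the paper keeps all $n$ sorted entries, which changes nothing asymptotically.
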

\begin{proof}
	We sort the $n$ weights $W=\{ \omega(v)\mid v\in V(G)\}$ and store them in an array.
	Let $w_1,\dots,w_n$ be the resulting weights, so that $w_1\le \dots\le w_n$.
	For $i=1,\dots,n$, we define the weight function $\omega_i$ by
	\[
		\omega_i(v) ~=~\begin{cases}
					1 & \text{if $\omega(v)\le w_i$},\\
					0 & \text{otherwise}.
					\end{cases}
	\]
	Then, we apply Theorem~\ref{thm:sumweights} for each of the weight functions $\omega_1,\dots,\omega_n$.
	This finishes the preprocessing.
	
	To compute $\kappa_\le(V_{\interior}(\gamma,G),\delta)$ for a given $\delta\in \RR$,
	we make a a binary search in $W$ to find $w_i= \max \{ w\in W\mid w\le \delta\}$ and
	then use the data structure for the weight function $\omega_i$ to get
	\begin{align*}
		\sum_{v\in V_{\interior}(\gamma,G) } \omega_i(v)  ~&=~|\{ v\in V_{\interior}(\gamma,G) \mid \omega(v)\le w_i \}| \\
						&=~ \kappa_\le(V_{\interior}(\gamma,G),w_i) \\
						&=~ \kappa_\le(V_{\interior}(\gamma,G),\delta) .
	\end{align*}
	Thus, a query boils down to a (standard) binary search followed 
	by a single query to the data structure of Theorem~\ref{thm:sumweights}.
	Therefore the query time is $O(k+\log n)$.
\end{proof}

\subsection{Maximum weight}
\label{sec:max}
The proof of Lemma~\ref{le:sumweights} heavily uses that the sum has an inverse operation.
We are not aware of any such result for computing the maximum weight,
$\mu(V_{\interior}(\gamma,G))$ or $\mu(V_{\exterior}(\gamma,G))$.
We could do something similar as we did in the proof of Corollary~\ref{cor:countweight},
namely, a binary search in $W=\{ \omega(v)\mid v\in V(G)\}$ to find
the largest weight inside $V_{\interior}(\gamma,G)$.
However, the extra preprocessing time in Corollary~\ref{cor:countweight},
as compared to the preprocessing time of Theorem~\ref{thm:sumweights},
leads to a worst running time in our target application.
We will develop now a different approach that works for a special type of closed walks
that we have in our application.

Let $x_0$ be a vertex of $G$ and let $T_0$ be a spanning tree of $G$ rooted at $x_0$. 
We say that a cycle $\gamma$ in the dual graph $G^*$ is \DEF{$T_0$-star-shaped}
if the root $x_0$ is in $\interior(\gamma)$ and,
for each vertex $y$ in $V_{\interior}(\gamma,G)$,
the whole path in $T_0$ from $x_0$ to $y$ is contained in $\interior(\gamma)$.
(Note that the concept is not meaningful for closed walks that repeat some vertex; 
hence our restriction to cycles for the time being.)
We define the following family of dual cycles:
\begin{equation}
	\Xi(G,T_0) ~=~ \{ \gamma\mid 
			\text{$\gamma$ is a $T_0$-star-shaped cycle in $G^*$}\}.
\label{eq:Xi}
\end{equation}

\begin{lemma}
\label{le:maxweight}
	There is a weight function 
	$\chi_{\mu}\colon E(G^*)\times E(G^*)\rightarrow \RR$ with
	the following properties:
	\begin{itemize}
	\item For every cycle $\gamma=e^*_0e^*_1,\dots ,e^*_{k-1}$ of $\Xi(G,T_0)$
			that is oriented clockwise 
			\[
				\mu(V_{\interior}(\gamma,G)) ~=~ \max \left
					\{ \chi_{\mu}(e^*_i,e^*_{i+1})\mid i=0,\dots ,k-1 \right \} ~~~~~~~
				\text{(indices modulo $k$)}
			\]
	\item After a linear-time preprocessing,  
		we can compute in constant time the value $\chi_{\mu}(ab,bc)$ 
		for any two dual edges $ab$ and $bc$ of $G^*$.
	\end{itemize}
\end{lemma}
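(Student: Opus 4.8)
The plan is to define $\chi_\mu(ab,bc)$ so that it captures the weight of the vertices that are ``cut off'' at the dual vertex $b$ as the clockwise cycle $\gamma$ passes through the arcs $\dart ab$ and then $\dart bc$. The key geometric observation is that when a $T_0$-star-shaped cycle $\gamma$ enters a primal face (dual vertex) $b$ along $e^*_i = \dart ab$ and leaves along $e^*_{i+1} = \dart bc$, this pair of consecutive dual arcs ``sweeps'' a contiguous arc of primal edges incident to the face $b$. The primal edges of face $b$ strictly between $e_i$ and $e_{i+1}$ (in the cyclic order around the face $b$, taken on the side interior to $\gamma$, which is well-defined because $\gamma$ is clockwise) are precisely the tree edges of $T_0$ crossed left-to-right by the portion of $\gamma$ near $b$. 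Because $\gamma$ is $T_0$-star-shaped, each vertex $y\in V_{\interior}(\gamma,G)$ has its entire root path in the interior, so the root path $P_y$ crosses $\gamma$ exactly zero times; hence $y$ is separated from $x_0$ by \emph{no} arc of $\gamma$. Conversely, I would argue that for each vertex $y$, the ``deepest'' tree edge $\dart xy'$ of $P_y$ that is crossed by $\gamma$ is crossed at exactly one dual vertex $b$, and that edge is the one that certifies membership of the subtree $T_0^{y'}$; so every interior subtree below a boundary crossing is attributed to exactly one pair $(e^*_i,e^*_{i+1})$.

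Concretely, I would first root $T_0$ at $x_0$, orient its arcs away from $x_0$, and precompute $\sigma(V(T_0^y))$ and $\mu(V(T_0^y))$ for every $y$ (linear time, by a postorder traversal). For a face $b$ of $G$ and two primal edges $e, e'$ incident to $b$, let $R(b,e,e')$ be the set of tree arcs $\dart xy$ of $T_0$ that lie on face $b$, are crossed left-to-right by a clockwise curve entering $b$ through $e^*$ and leaving through $e'^*$; this is a contiguous block in the cyclic order of edges around $b$. I would set
\[
	\chi_\mu(e^*,e'^*) ~:=~ \max\bigl\{ \mu(V(T_0^y)) ~\bigm|~ \dart xy \in R(b,e,e') \bigr\}
\]
(with the convention that the max of the empty set is $-\infty$, and the degenerate case $e=e'$, where the curve turns around, handled separately). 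The issue is that a face $b$ can have high degree, so $R(b,e,e')$ can be large; I cannot afford to recompute the max over a linear-size block at query time. The fix is a standard preprocessing trick: fix for each face $b$ a reference edge $e_b^0$, linearize the cyclic order of edges around $b$ starting from $e_b^0$, and build a prefix-max array of $\mu(V(T_0^y))$ over the tree arcs in that linear order (plus a second copy for the wrap-around). Then $\chi_\mu(e^*,e'^*)$ is a maximum over at most two contiguous ranges of a fixed array, which I can answer in $O(1)$ time with a sparse-table range-maximum structure built in linear total time over all faces; alternatively, since consecutive-pair queries only ever involve ranges determined by two edges of the same face, a simpler two-pointer scan during preprocessing also works. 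This gives the claimed linear preprocessing and $O(1)$ query.

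For correctness of the first bullet, I would mirror the argument of Lemma~\ref{le:sumweights}: for a clockwise cycle $\gamma \in \Xi(G,T_0)$, each vertex $z\in V_{\exterior}(\gamma,G)$ has a root path $P_z$ crossing $\gamma$ an odd number of times, and by the star-shaped property the \emph{first} such crossing (walking from $x_0$ along $P_z$) occurs at some tree arc $\dart xy$ such that the entire subtree $T_0^y$ lies in $V_{\exterior}(\gamma,G)$ — indeed no vertex of $T_0^y$ can be interior, else its root path would have to re-cross $\gamma$, contradicting star-shapedness applied to that interior vertex. Dually, for $z \in V_{\interior}(\gamma,G)$, $P_z$ never crosses $\gamma$. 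Hence $V_{\interior}(\gamma,G)$ is exactly the disjoint union, over all tree arcs $\dart xy$ crossed left-to-right by $\gamma$, of... wait — I must take this from the interior side: $V_{\interior}(\gamma,G) = V(G) \setminus V_{\exterior}(\gamma,G)$, and $V_{\exterior}(\gamma,G)$ is the disjoint union of the subtrees $T_0^y$ over the ``outermost'' crossed tree arcs, which are exactly the left-to-right crossings (as in Lemma~\ref{le:sumweights}). Taking the maximum rather than the sum: since we have a disjoint partition, $\mu(V_{\exterior}(\gamma,G)) = \max_y \mu(V(T_0^y))$ over those arcs. The symmetric statement for the interior requires a tree rooted so that interior subtrees partition $V_{\interior}$; if that is not directly available, I would instead observe that the \emph{interior} of a clockwise $T_0$-star-shaped cycle is swept by the right-to-left crossings of a suitably chosen complementary tree, or more cleanly, reorganize so that $\chi_\mu(e^*_i, e^*_{i+1})$ aggregates, over the block $R(b,\cdot,\cdot)$ of tree arcs whose subtrees fall \emph{inside} $\gamma$ at that turn, the precomputed $\mu(V(T_0^y))$. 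The main obstacle, and the place to be careful, is precisely this bookkeeping of which side of $\gamma$ a given crossed subtree lies on as a function of the turn direction at $b$ (left-to-right vs.\ right-to-left, interior vs.\ exterior), together with the degenerate turn-around case $e^*_i = (e^*_{i+1})^R$; everything else (the prefix-max / range-max machinery, the linear-time subtree aggregates) is routine.
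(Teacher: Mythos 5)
Your proposal has a genuine gap, and it is exactly where you wave it off as ``routine bookkeeping''. The quantities you precompute are subtree maxima $\mu(V(T_0^y))$, and for a clockwise $T_0$-star-shaped cycle $\gamma$ the subtrees hanging off crossed tree arcs partition the \emph{exterior}, never the interior: by star-shapedness, if $y$ is interior then so is its parent, so a tree arc $\dart xy$ is crossed precisely when $x$ is interior and $y$ is exterior, and then $T_0^y\subseteq V_{\exterior}(\gamma,G)$. Thus your max over crossed-arc subtree weights equals $\mu(V_{\exterior}(\gamma,G))$, not $\mu(V_{\interior}(\gamma,G))$, and since $\max$ has no inverse (unlike the additive case of Lemma~\ref{le:sumweights}, which you are trying to mirror) there is no way to recover the interior value from the exterior one. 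Your suggested fix, aggregating ``tree arcs whose subtrees fall inside $\gamma$'', cannot work either: there are no such arcs, because a subtree contained in the interior is never cut by $\gamma$ at all. A concrete obstruction: take $T_0$ to be a star centered at $x_0$; then every subtree is a singleton, and your precomputed values only ever see exterior vertices.

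The paper's proof resolves this by decomposing the \emph{interior} rather than the exterior, using a different family of regions. For consecutive arcs $e^*_i,e^*_{i+1}$ of $\gamma$ entering and leaving a face, it considers the wedge $\Gamma(e^*_i,e^*_{i+1})$ bounded by the two root-paths-to-crossing-points and the piece of $\gamma$ inside that face, and defines $\chi_\mu(e^*_i,e^*_{i+1})$ as the max weight in that wedge. These wedges together tile $\overline{\interior(\gamma)}$, which is what makes the first bullet hold. For constant-time queries, the wedge max is rewritten as a range maximum over the face of $\varphi(e)$, where $\varphi(uv)$ is the max weight in the region enclosed by $uv$ and the two root paths to its endpoints (and $\varphi(v)$ is the max weight along a root path). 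Note that $\varphi(uv)$ involves \emph{all} edges $uv$ on the face, tree and non-tree alike, and is a fundamental-cycle/region max, not a subtree max; it is computed in linear time via a tree–cotree decomposition, which is a qualitatively different preprocessing from your postorder subtree aggregation. Your range-maximum/prefix-max idea for the query phase is fine and matches the paper, but the object you need to range-max over is $\varphi(\cdot)$, not subtree maxima.
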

\begin{proof}
	In this proof, for each vertex $v$, we use $T_0[\dart{x_0}{v}]$ to denote the 
	path in $T_0$ from $x_0$ to $v$.
	
	For a dual arc $e^*$, let $p(e^*)$ be the intersection point of $e$ and $e^*$,
	let $v(e^*)$ be the vertex of $e$ to the right of $e^*$,
	and let  $\pi(e^*)$ be the curve obtained by the concatenation of $T_0[\dart{x_0}{v(e^*)}]$ and
	the portion of $e$ from $v(e^*)$ to $p(e^*)$.
	See Figure~\ref{fig:maxweight1}, left.

	\begin{figure}
		\centering
		\includegraphics[page=1]{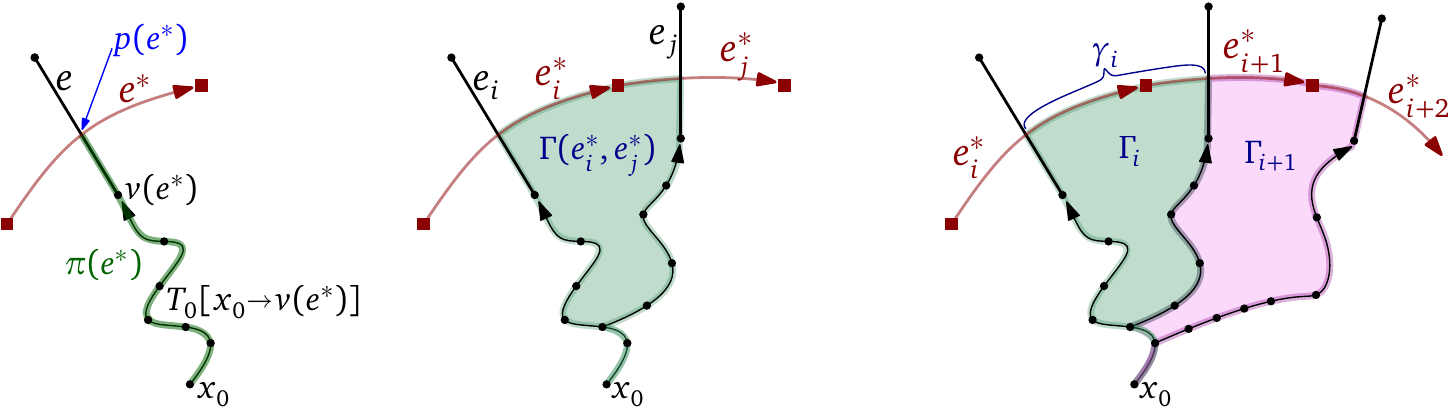}
		\caption{Defining $\chi_{\mu}(\cdot)$ in the proof of Lemma~\ref{le:maxweight}. 
			Left: Notation for a dual edge $e^*$. The path $\pi(e^*)$ is thicker.
			Middle: The region $\Gamma(e_i^*,e_j^*)$. We mark the portion of $T_0$ that
				is included in $\Gamma(e_i^*,e_j^*)$.
			Right: the regions $\Gamma_i$.}
		\label{fig:maxweight1}
	\end{figure}
	
	We can now provide a definition of the function $\chi_{\mu}$.
	Consider any two dual edges $e^*_i$ and $e^*_j$ in $G^*$. 
	If they have no common vertex or if they are equal, then we set $\chi_{\mu}(e^*_i,e^*_j)=0$.
	This is not very relevant because such terms never show up in the desired properties.
	It remains to consider the case when they have a common vertex.
	For this case we then define $\Gamma(e^*_i,e^*_j)$ as the region of the plane bounded 
	by $\pi(e^*_i)$, the portion of $e^*_i e^*_j$ from $p(e^*_i)$ to $p(e^*_j)$,
	and the reverse of $\pi(e^*_j)$. We regard the region $\Gamma(e^*_i,e^*_j)$ as a closed set,
	with its boundary and the curves that define it.
	See Figure~\ref{fig:maxweight1}, center.
	Finally, we set $\chi_{\mu}(e^*_i,e^*_j)= \mu \left( V(G)\cap \Gamma(e^*_i,e^*_j) \right)$.
	We will discuss the efficient computation and representation
	of $\chi_{\mu}(e^*_i,e^*_j)$ later.
	
	We claim that $\chi_{\mu}$ satisfies the property in the first item.
	Consider any dual cycle $\gamma=e^*_0e^*_1,\dots ,e^*_{k-1}$ and let $A$ be the closure of $\interior(\gamma)$.
	For each $i$, let $\gamma_i$ be the curve described by $\gamma$ from $p(e^*_i)$ to $p(e^*_{i+1})$
	and let us use the shorthand $\Gamma_i=\Gamma(e^*_i,e^*_{i+1})$, where indices are modulo $k$.
	See Figure~\ref{fig:maxweight1}, right.	
	Note that $\gamma_i$ is one of the curves used to define the region $\Gamma_i$.
	If $\gamma$ is in $\Xi(G,T_0)$, then the three curves that bound $\Gamma_i$ are contained in $A$ 
	and therefore $\Gamma_i$ is contained in $A$ ($i=0,\dots, k-1$).
	Moreover, since $\gamma_0,\dots,\gamma_{k-1}$ is a decomposition of $\gamma$ 
	and the regions $\Gamma_i$ and $\Gamma_{i+1}$ ($i=0,\dots, k-1$)
	share the path $\pi(e^*_{i+1})$ on its boundary, the union $\cup_i \Gamma_i$ is precisely $A$.
	Since the boundary of $A$ does not contain any vertex of $G$, we get
	\begin{align*}
			\mu(V_{\interior}(\gamma,G)) ~&=~ 
					\max \left\{ \omega(v)\mid v\in V(G)\cap \interior(\gamma) \right\}  \\
					&=~ \max  \left\{ \mu(V(G)\cap \Gamma_i)\mid i=0,\dots, k-1 \right\}\\
					&=~ \max  \left\{ \mu(V(G)\cap \Gamma(e^*_i,e^*_{i+1})\mid i=0,\dots, k-1 \right\} &
				\text{(indices modulo $k$)}\\
 					&=~ \max  \left\{\chi_{\mu}(e^*_i,e^*_{i+1}) \mid i=0,\dots, k-1 \right\}. &
				\text{(indices modulo $k$)}
	\end{align*}
	We have shown that the property in the first item holds. 
	
	It remains to discuss the computational part. First we discuss an alternative definition of $\chi_{\mu}$
	that is more convenient for the computation.
	For each edge $uv$ of $G$, let $R(uv)$ be the region of the plane defined by 
	the paths in $T_0$ from $x_0$ to both endpoints of $uv$ and the edge $uv$ itself.
	We include in $R(uv)$ the two paths used to define it and the edge $uv$.
	If the paths in $T_0$ from $x_0$ to $u$ and $v$ share a part, then the region $R(uv)$
	also contains that common part. 
	If $uv$ is in $T_0$, then the region $R(uv)$ is actually a path contained in $T_0$.
	See Figure~\ref{fig:maxweight2} for an example.
	Finally, for each edge $uv$ of $G$ we define $\varphi(uv)$ as
	\[
		\varphi(uv) ~:=~ \mu (V(G)\cap R(uv)) ~=~ \max \{  \omega(x) \mid x\in V(G)\cap R(uv) \}.
	\]
	For each vertex $v$ of $G$, we define $\varphi(v)$ as the maximum
	weight on the path $T_0[\dart{x_0}{v}]$. 
	This last case can be interpreted as a degenerate case of the previous one.
	Indeed, if $v'$ is the parent of $v$ in $T_0$, then $\varphi(v)=\varphi(vv')$.

	\begin{figure}
	\centering
		\includegraphics[page=2,width=\textwidth]{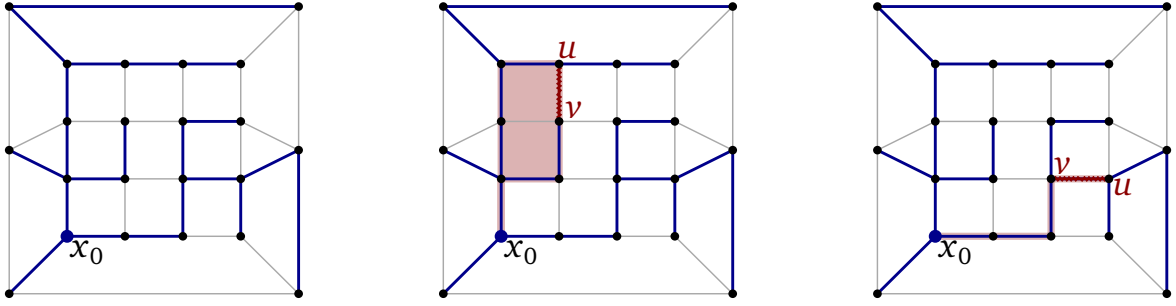}
		\caption{Example of the regions $R(uv)$.
				Left: a graph $G$ and a spanning tree $T_0$.
				Center and right: the regions $R(uv)$ for two different edges $uv$.}
		\label{fig:maxweight2}
	\end{figure}
	
	Let $f$ be a face of $G$ and let $e_i$ and $e_j$ be two edges on the boundary of $f$.
	We are going to give an alternative definition of $\chi_{\mu}(e^*_i,e^*_j)$.
	If $e_j$ is \emph{not} the follower of $e_i$ along  the counterclockwise traversal of $f$, 
	let $E(f,e_i,e_j)$ be the edges between $e_i$ and $e_j$ in a counterclockwise traversal of $f$. 
	We do not include $e_i$ and $e_j$ in $E(f,e_i,e_j)$, but the set $E(f,e_i,e_j)$ is nonempty by assumption.
	See Figure~\ref{fig:maxweight4} for an illustration.
	In this case we have 
	\begin{equation}
		\chi_{\mu}(e^*_i,e^*_j) ~=~ \max \{ \varphi (e)\mid e\in  E(f,e_i,e_j) \} .
		\label{eq:chi_mu}
	\end{equation}
	To see that this equality indeed holds, note that the difference between
	the region $\Gamma(e^*_i,e^*_j)$ and $\bigcup \{R (e)\mid e\in  E(f,e_i,e_j)\}$
	is just a portion of  the interior of the face $f$, which cannot contain vertices of $G$. 
	See Figure~\ref{fig:maxweight4}, center and right, for an illustration.
	If $e_i$ and $e_j$ are consecutive along  the counterclockwise traversal of $f$,
	then they have a common vertex $v$ and we have $\chi_{\mu}(e^*_i,e^*_j)=\varphi(v)$.
	The argument in this case is the same: the difference between  $\Gamma(e^*_i,e^*_j)$
	and the path $T_0[\dart{x_0}v]$ is a portion of the interior of $f$.

	\begin{figure}
	\centering
		\includegraphics[page=4]{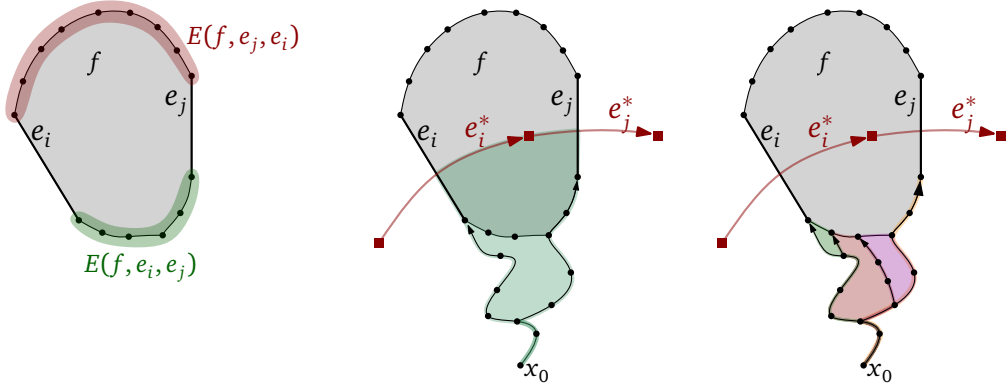}
		\caption{Left: example for the sets $E(f,e_i,e_j)$. 
			Center and right: example indicating the relation between $\Gamma(e^*_i,e^*_j)$
			and $\bigcup \{ R(e)\mid e\in E(f,e_i,e_j)\}$. (Compare to Figure~\ref{fig:maxweight1}.)
			Note that on the right we have five regions $R(e)$, but two of them degenerate
			to paths on $T_0$.}
		\label{fig:maxweight4}
	\end{figure}
	
	The second, alternative definition of $\chi_{\mu}$ is more suitable for efficient management.
	First, we compute the values $\varphi(\cdot)$. For this we use the undirected version of $G$.
	Let $C=E(G)\setminus E(T_0)$ be the primal edges not contained in $T_0$.
	The duals of those edges, $C^*$, form a spanning tree of the dual graph $G^*$.
	The pair $(T_0,C)$ is a so-called tree-cotree decomposition.
	We root $C^*$ at the dual vertex representing the outer face of $G$.
	Each edge $e\in C$ defines a region $A_e$ of the plane, namely the closed region bounded
	by the unique simple closed curve contained in $T_0+e$.
	Note that, for each $uv\in C$, 
	the region $R(uv)$ is precisely the union of $A_{uv}$ and
	the two paths  in $T_0$ from $x_0$ to the endpoints of $uv$.
	Each edge $e\in C$ defines a dual subtree, 
	denoted by $C^*_e$, which is the component of $C^*-e^*$
	without the root.
	The region $A_e$ corresponds to the faces of $G$ 
	that dualize to vertices of $C^*_e$.
	See Figure~\ref{fig:maxweight3} for an example.
	After computing and storing for each face of $G$ the maximum weight
	of its incident vertices, we can use 
	a bottom-up traversal of the dual tree $C^*$ and the values stored
	for each face to compute the values 
	$\mu\left(A_e\cap V(G)\right)$ in linear time 
	for all edges $e\in C$. 
	With a top-bottom traversal of the primal tree $T_0$ we can
	also compute and store for each node $v$ of $G$ the values 
	$\mu\left(T_0[\dart{x_0}{v}] \cap V(G)\right)$.
	From this we can compute $\varphi()$ as follows:
	\begin{align*}
		\forall v\in V(G)       &:~~~ \varphi(v)= \mu\left(T_0[\dart{x_0}{v}] \cap V(G) \right),\\
		\forall uv\in E(T_0) &:~~~ \varphi(uv)= \max \left\{  \mu\left(T_0[\dart{x_0}{v}] \cap V(G)\right), \omega(u),\omega(v) \right\},\\
		\forall uv\in C         &:~~~ \varphi(uv)= \max\left\{ \mu\left(A_{uv}\cap V(G)\right) ,  
				\mu\left(T_0[\dart{x_0}{u}] \cap V(G)\right), \mu\left(T_0[\dart{x_0}{v}] \cap V(G)\right) \right\}.
	\end{align*}
	Since each value on the right side is already computed, we spend linear time to compute the values $\varphi(\cdot)$.
	
	\begin{figure}
	\centering
		\includegraphics[page=3]{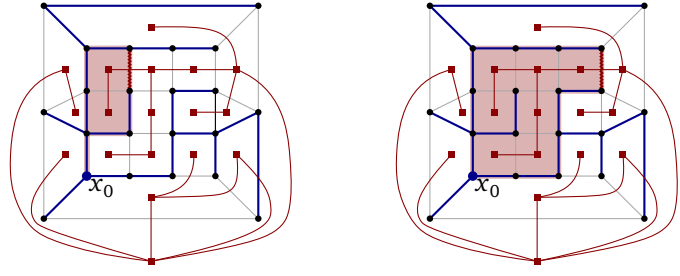}
		\caption{Dual tree defined by the cotree $C$ and its relevance to computing $\varphi(uv)$.
			Here we show the relevance for two different edges $uv\in (E(G)\setminus E(T_0))^*$
			in the example of Figure~\ref{fig:maxweight3}.}
		\label{fig:maxweight3}
	\end{figure}

	To represent $\chi_{\mu}$ compactly, we will use a data structure for range minimum queries:
	preprocess an array of numbers $A[1\dots m]$ such that, at query
	time, we can report $\min\{ A[k]\mid i\le k\le j\}$ for any given query pair of indices $i<j$.
	There are data structures that use linear-time preprocessing and $O(1)$ time per 
	query~\cite{BenderFPSS05,FischerH07}. This data
	structure does exploit the full power of random-access memory (RAM). It is trivial
	to extend this data structure for circular arrays: each query in  a circular array corresponds
	to two queries in a linear array.
	
	For each face $f$ of $G$, we build a circular array $A_f[\cdot]$ indexed by the edges, as they
	appear along the face $f$. At the entry $A_f[e]$ we store the value $\varphi(e)$.
	For each face we spend time proportional to the number of edges on the boundary of the face.
	Thus, for the whole graph $G$ this preprocessing takes linear time.
	For two edges $e_1$ and $e_2$ on the boundary of a face $f$ and with no common vertex, 
	the value $\chi_{\mu}(e_1^*,e_2^*)$, as described in \eqref{eq:chi_mu},
	is precisely a range maximum query in the circular array $A_f[\cdot]$,
	and thus can be answered in constant time. The case when $e_1$ and $e_2$ have a common
	vertex is easier because for each edge $e_1$ there are only two such possible edges $e_2$, 
	one per face with $e_1$ is on the boundary.
\end{proof}

For our application we will have to deal with pieces that have holes and thus a part of $T_0$
may be missing. 
Because of this, we also need to extend things to a type of non-crossing walks.

Like before, let $G$ be a plane graph and let $T_0$ be a rooted spanning subtree.
Let $P$ be a subgraph of $G$, with the embedding inherited from $G$.
Assume that the root $x_0$ of $T_0$ is in $P$.
A non-crossing closed walk $\gamma$ in $P^*$ is \DEF{$T_0$-star-shaped}
if the root $x_0$ is in $\interior(\gamma)$ and,
for each vertex $y$ in $V_{\interior}(\gamma,P)$,
all the vertices of $P$ in the path $T_0[\dart{x_0}{y}]$ are contained in $\interior(\gamma)$.
We can define the following family of dual non-crossing walks:
\begin{equation}
	\widetilde\Xi(G,P,T_0) ~=~ \{ \gamma\mid 
			\text{$\gamma$ is a $T_0$-star-shaped non-crossing walk in $P^*$}\}.
\end{equation}
Thus, each non-crossing walk in $\widetilde\Xi(G,P,T_0)$ comes from some cycle of $\Xi(G,T_0)$
when we transform $G$ into $P$ by deleting the edges of $E(G)\setminus E(P)$.

Let us provide some intuition for the following statement.
Consider a plane graph $G$ and a spanning tree $T_0$ of $G$.
Now we delete some of the edges of $G$ until we get a subgraph $P$, without changing the embedding.
We may have deleted some edges of $T_0$ also. However, the root of $T_0$ remains in $P$.
Some faces of $P$ may contain some of the edges of the spanning tree, $E(T_0)$, that were deleted. 
That is, when we draw an edge $e\in E(T_0)\setminus E(P)$ back in its original position,
it is contained in the closure of a $f$ face of $P$. In such a case we say that the interior of $f$ 
intersects $T_0$.
We use $b$ for the sum, over the faces $f$ of $P$ whose interior intersects $T_0$,
of the number of edges of $P$ on the boundary of $f$. Thus, for each face $f$ in the sum, we 
count how many edge of $P$ define the face. 

\begin{theorem}
\label{thm:maxweight}
	Let $G$ be a plane graph with $n$ vertices and vertex-weights $\omega(\cdot)$,
	and let $T_0$ be a rooted spanning tree in $G$.
	Let $P$ be a subgraph of $G$ such that the root of $T_0$ is a vertex of $P$.
	Let $b$ be the number of edges of $P$ on all the faces of $P$ whose interior intersects $E(T_0)$.
	Let $\Pi=\{ \pi_1,\dots,\pi_\ell\}$ be a family of walks in $P^*$ with a total of $m$ edges, counted with multiplicity.
	After $O(n+m+b^3)$ preprocessing time, we can answer the following
	type of queries: 
	given a closed walk $\gamma$ in $\widetilde\Xi(G,P,T_0)$, 
	described as a concatenation of $k$ subpaths of paths from $\Pi$
	and oriented clockwise 
	return $\mu(V_{\interior}(\gamma,P))$ in $O(k)$ time.			
\end{theorem}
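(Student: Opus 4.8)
The plan is to combine Lemma~\ref{le:maxweight} (for the part of the picture governed by the surviving tree $T_0\cap P$) with a brute-force precomputation on the faces of $P$ whose interior intersects $E(T_0)$ (the ``bad'' faces), and then layer on top the prefix-array trick used in Theorem~\ref{thm:sumweights} to answer queries that are presented as concatenations of $k$ subpaths from $\Pi$. First I would observe that a $T_0$-star-shaped non-crossing walk $\gamma$ in $P^*$ is, by definition, obtained from a $T_0$-star-shaped cycle $\hat\gamma$ of $\Xi(G,T_0)$ by deleting the primal edges of $E(G)\setminus E(P)$; equivalently, $\gamma$ visits the dual vertices of $P$ in the same cyclic order in which $\hat\gamma$ visits them, but some consecutive dual edges $e_i^*,e_{i+1}^*$ of $\gamma$ now meet at a vertex that is incident to a bad face of $P$ rather than to an original face of $G$. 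So I would split the cyclic sequence of consecutive dual-edge pairs $(e_i^*,e_{i+1}^*)$ of $\gamma$ into two types: (i) those where $e_i,e_{i+1}$ lie on a common \emph{good} face of $P$ (a face whose interior does not meet $E(T_0)$), and (ii) those where they lie on a common \emph{bad} face. For pairs of type (i), the region $\Gamma(e_i^*,e_{i+1}^*)$ is swept out entirely by surviving tree paths, so the value $\chi_\mu(e_i^*,e_{i+1}^*)$ of Lemma~\ref{le:maxweight}, computed with $T_0$ restricted to $P$ and with the good faces of $P$ playing the role of the faces of $G$, is exactly right; the $O(n+m)$-time preprocessing and $O(1)$-time-per-pair query of that lemma carry over verbatim once we note that $\sum_{f\text{ good}}|\partial f|=O(n)$.

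The new ingredient is the bad faces. Here the region $\Gamma(e_i^*,e_{i+1}^*)$ between two edges of a bad face $f$ need not be tiled by $T_0$-paths, because some of the tree edges that used to lie inside $f$'s region have been deleted; so $\chi_\mu$ restricted to a bad face is no longer a range-maximum over single-edge values. I would handle this by brute force within each bad face. For a bad face $f$ with edge set $E(f)$, $|E(f)|=b_f$ and $\sum_f b_f=b$, there are $O(b_f^2)$ ordered pairs $(e_i,e_{i+1})$ of boundary edges, and for each I would like to precompute $\mu(V(G)\cap\Gamma(e_i^*,e_{i+1}^*))$. The region $\Gamma(e_i^*,e_{i+1}^*)$ decomposes canonically as a union of the ``wedge'' regions $R(e)$ (as in Lemma~\ref{le:maxweight}) for the boundary edges $e$ strictly between $e_i$ and $e_{i+1}$ along $f$, together with the part of $f$'s interior between them, which contains no vertices; so $\chi_\mu(e_i^*,e_{i+1}^*)=\max\{\varphi(e)\mid e\in E(f,e_i,e_j)\}$ as in \eqref{eq:chi_mu}, except that now $\varphi(e)$ for $e$ a deleted-tree-edge-straddling edge must itself be recomputed, since the relevant $T_0$-path pieces lying inside $f$ were deleted. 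The clean way is: for each bad face $f$, do a restricted graph search inside $f$'s region (i.e., in $G$ but only through the edges of $E(T_0)$ and vertices drawn inside $\overline f$) to directly determine, for every pair of boundary edges of $f$, the maximum weight of a primal vertex inside $\Gamma(e_i^*,e_{i+1}^*)$; naively this is $O(b_f)$ work per pair and $O(b_f^3)$ per face, hence $O(\sum_f b_f^3)\le O(b^3)$ overall, matching the stated bound. Store these $O(b^2)$ values so that $\chi_\mu(e_i^*,e_{i+1}^*)$ for a bad-face pair is available in $O(1)$.

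With both tables in place, a query proceeds exactly as in Theorem~\ref{thm:sumweights}. For each walk $\pi_i\in\Pi$ I would precompute the prefix values $M[i,j]=\max\{\chi_\mu(e(i,t-1)^*,e(i,t)^*)\mid 2\le t\le j\}$ along $\pi_i$; since $\mu$/$\max$ is idempotent rather than invertible, a plain prefix-max does not suffice to extract the maximum over an \emph{interior} subinterval, so instead I would build, for each $\pi_i$, a sparse-table (or Cartesian-tree) range-maximum structure over the sequence of consecutive-pair values $\chi_\mu(e(i,t-1)^*,e(i,t)^*)$, in $O(m_i)$ time per walk and $O(1)$ per range query; total preprocessing $O(n+m+b^3)$ as claimed. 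Given $\gamma=\pi^1\cdots\pi^k$ with each $\pi^t$ a subpath of some $\pi_{i(t)}$, the maximum of $\chi_\mu$ over the consecutive-edge pairs \emph{inside} a single $\pi^t$ is one range-maximum query; the $k$ ``seam'' pairs, where one edge is the last edge of $\pi^t$ and the next is the first edge of $\pi^{t+1}$ (indices mod $k$), are looked up individually in $O(1)$ each using whichever of the two tables applies (good-face or bad-face), after identifying in $O(1)$ the common face of those two primal edges. Taking the maximum of the $2k$ returned values gives $\mu(V_{\interior}(\gamma,P))$ by the first item of Lemma~\ref{le:maxweight} applied to the cycle of $\Xi(G,T_0)$ that $\gamma$ descends from, all in $O(k)$ time.

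The main obstacle I anticipate is not the query mechanics but pinning down exactly what ``the region $\Gamma(e_i^*,e_{i+1}^*)$ restricted to $P$'' is for a bad face and verifying that the union of these regions over the edges of $\gamma$ is still precisely $\overline{\interior(\gamma)}$ with no double-counting and no boundary vertices — i.e., re-running the telescoping argument from the proof of Lemma~\ref{le:maxweight} in the presence of deleted tree edges. Once that geometric bookkeeping is in place (it amounts to: the $\pi(e^*)$ curves used on the seams between consecutive $\Gamma_i$'s are the surviving $T_0$-in-$P$ paths, and the interiors of faces contain no vertices), the rest is routine. A secondary subtlety is ensuring the $O(b^3)$ bad-face precomputation genuinely captures vertices that sit inside a bad face's wedge region but are reached only via deleted tree edges; doing the restricted search through $E(T_0)$ inside $\overline f$ (which is a planar region) rather than trying to reuse the global tree-cotree machinery of Lemma~\ref{le:maxweight} is what makes this correct.
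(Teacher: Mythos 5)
Your high-level architecture matches the paper's -- range-maximum structures over the per-walk sequences of $\chi_\mu$-values, plus $O(k)$ ``seam'' lookups, plus a special precomputation for bad-face pairs budgeted at $O(b^3)$ -- but the two load-bearing steps are off, and the paper resolves both with one clean idea you don't reach.

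\textbf{Good-face pairs.} You propose computing $\chi_\mu$ ``with $T_0$ restricted to $P$ and with the good faces of $P$ playing the role of the faces of $G$.'' This does not work: $T_0\cap P$ is generally not a spanning tree of $P$, and the regions $\Gamma(e_i^*,e_j^*)$ and $R(e)$ are bounded by full $T_0$-paths to the root, which may well pass through $E(T_0)\setminus E(P)$ even when the face $f$ itself is good. The paper always computes $\chi_\mu$ in $G$ with the original $T_0$ (after a preliminary reduction to $V(G)=V(P)=V(T_0)$, which you omit); under that reduction a good face of $P$ is a face of $G$, so the $G$-value $\chi_\mu(e_i^*,e_{i+1}^*)$ from Lemma~\ref{le:maxweight} is the one to store, not a $P$-restricted surrogate.

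\textbf{Bad-face pairs.} Here the gap is more serious. You assert the decomposition $\chi_\mu(e_i^*,e_j^*)=\max\{\varphi(e)\mid e\in E(f,e_i,e_j)\}$ of \eqref{eq:chi_mu} still holds on a bad face of $P$, but \eqref{eq:chi_mu} is derived under the assumption that $e_i$ and $e_j$ are cofacial in $G$; for a bad face of $P$ they are not, so this identity is simply unavailable. Your fallback, a ``restricted graph search through $E(T_0)$ inside $\overline f$,'' cannot compute $\mu(V(G)\cap\Gamma(e_i^*,e_j^*))$ either, because that region reaches all the way back to the root $x_0$ and typically contains far more than $O(b_f)$ vertices; the claimed $O(b_f)$-per-pair work is unjustified. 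You flag yourself that you have not re-established the telescoping argument of Lemma~\ref{le:maxweight} under deletions -- that is exactly where the argument is incomplete.

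\textbf{What the paper does instead.} Rather than re-proving the telescoping in $P$, the paper lifts $\gamma$ to a genuine cycle $\gamma_G$ in $G^*$: for each bad-face pair $(e_1,e_2)\in A$ it fixes a dual path $\pi_G(e_1^*,e_2^*)$ from $e_1^*$ to $e_2^*$ whose interior edges all belong to $(E(T_0)\setminus E(P))^*$, and replaces each occurrence of $e_1^*e_2^*$ in $\gamma$ by this path. Since $E(T_0)\setminus E(P)$ is a forest on the $b$ bad-face boundary vertices, each $\pi_G(e_1^*,e_2^*)$ has $O(b)$ edges and there are $O(b^2)$ pairs, giving $O(b^3)$ total -- the source of the $b^3$ term. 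Because the rerouting is done face-by-face inside $F$ and $V(G)=V(P)$, one gets $V_{\interior}(\gamma,P)=V_{\interior}(\gamma_G,G)$, and Lemma~\ref{le:maxweight} now applies \emph{verbatim} to $\gamma_G$. The bad-face precomputation is then just $\varphi(e_1^*,e_2^*)=\max\{\chi_\mu(ab,bc)\mid ab,bc\text{ consecutive on }\pi_G(e_1^*,e_2^*)\}$, evaluated in $O(b^3)$ total using the $O(1)$-time $\chi_\mu$ oracle of Lemma~\ref{le:maxweight}. This is the missing ingredient: it makes the bad-face values well-defined, computable within budget, and, crucially, correct by reduction to the already-proved Lemma~\ref{le:maxweight} rather than by a new geometric argument you would still need to supply.
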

Although the dependency on $b$ in the time bound can perhaps be reduced,
it is sufficient for our purposes because currently the bottleneck is somewhere else.
\begin{proof}
	We may assume that $V(G)=V(P)=V(T_0)$. To see this, first we note that
	we can remove edges of $G$ that are not in $E(T_0)\cup E(P)$ because
	they do not play any role. Then we can replace each maximal subtree of $T_0-V(P)$ 
	by edges that connect vertices of $P$ without changing
	the set $\widetilde\Xi(G,P,T_0)$. For this we just need the ancestor-descendant relation
	between vertices incident to the face.
	See Figure~\ref{fig:maxweight5} to see the transformation.
	Thus, from now on we restrict ourselves
	to the case where $V(G)=V(P)=V(T_0)$.

	\begin{figure}
	\centering
		\includegraphics[page=5]{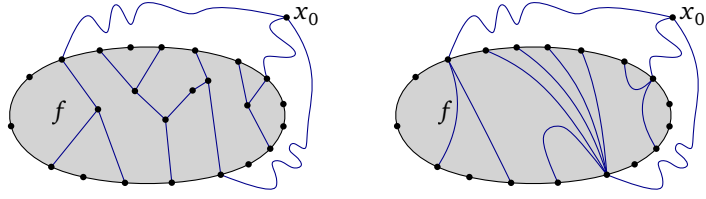}
		\caption{Transformation in the proof of Theorem~\ref{thm:maxweight}
			to assume that $V(G)=V(P)=V(T_0)$.
			The vertices of $V(T_0)$ inside a face $f$ of $P^*$ can be removed and
			we use direct edges representing the paths in $T_0$.}
		\label{fig:maxweight5}
	\end{figure}
	
	Let $F$ be the set of faces of $P$ that contain some edges of $E(T_0)\setminus E(P)$
	and consider the set 
	\[
		A~=~ \{ (e_1,e_2)\in E(P)^2 \mid \text{$e_1,e_2\in E(f)$ for some $f\in F$}\}.
	\]
	It is clear that $A$ has $O(b^2)$ pairs.
	For each $(e_1,e_2)\in A$, let $f$ be the face of $F$ that have $e_1$ and $e_2$ on the boundary
	and compute a dual path $\pi_{G}(e^*_1,e^*_2)$ in $G^*$ 
	from $e_1^*$ to $e_2^*$ whose other edges are contained in the face $f$.
	This means that all the edges of $\pi_{G}(e^*_1,e^*_2)$, except $e^*_1$ and $e^*_2$, are edges of $E(T_0)^*\setminus E(P)^*$.
	(If $e_1$ and $e_2$ are cofacial in $G$, then $\pi_{G}(e^*_1,e^*_2)=e^*_1 e^*_2$.)
	See Figure~\ref{fig:maxweight6} for an example.
	In particular, since $E(T_0)\setminus E(P)$ is a forest on $b$ vertices, it has at most $b$ edges, 
	and the path $\pi_{G}(e^*_1,e^*_2)$ has $O(b)$ edges.
	Thus, the paths $\pi_{G}(e^*_1,e^*_2)$, over all elements $(e_1,e_2)\in A$, have together $O(b^3)$ edges.

	\begin{figure}
	\centering
		\includegraphics[page=6,width=\textwidth]{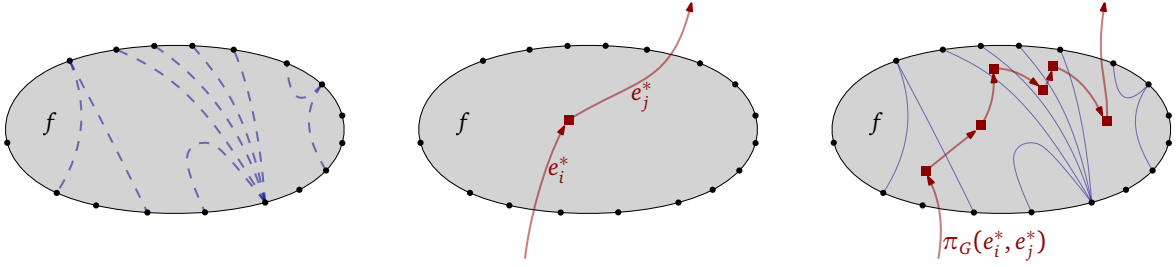}
		\caption{Construction of the paths $\pi_{G}(e^*_1,e^*_2)$. 
			Left: a face of $P$ with edges of $T_0-E(P)$ dashed.
			Center: a two-edge walk in $P^*$. 
			Right: the corresponding path $\pi_{G}(e^*_1,e^*_2)$ in $G^*$.}
		\label{fig:maxweight6}
	\end{figure}
 
	The paths $\{ \pi_{G}(e^*_1,e^*_2)\mid (e_1,e_2)\in A \}$ are used to naturally 
	transform walks in $P^*$ into walks  in $G^*$.
	Indeed, if we have a walk $\alpha$ in $P^*$ and we replace each occurrence of $e^*_1 e^*_2$, where $(e_1,e_2)\in A$
	by $\pi_{G}(e^*_1,e^*_2)$, then we obtain a walk in $G^*$.
 
	We compute for $G$ the function $\chi_{\mu}$ of Lemma~\ref{le:maxweight}.
 	For each element $(e_1,e_2)\in A$, we compute and store
 	\[
		\varphi(e^*_1,e^*_2) ~=~ \max\{ \chi_{\mu}(ab,bc) \mid \text{$ab$ and $bc$ consecutive dual edges along $\pi_{G}(e^*_1,e^*_2)$} \}.
	\]
 	Using the  properties of $\chi_{\mu}$ stated in Lemma~\ref{le:maxweight} and 
	using that the paths $\{ \pi_{G}(e^*_1,e^*_2)\mid (e_1,e_2)\in A\}$ have 
	$O(b^3)$ edges, we can do this step in $O(n+b^3)$ time.
 	
	For each walk $\pi_i$ of $\Pi$ we proceed as follows.
	Let $e^*_1,\dots,e^*_{m_i}$ be the edges of $\pi_i$, as they appear along $\pi_i$.
	We make an array $A_{i}[1..(m_i-1)]$ such that 
	\[
		A_{i}[j] ~=~ \begin{cases}
				\varphi(e^*_j,e^*_{j+1}) & \text{if $(e_j,e_{j+1})\in A$},\\
				\chi_{\mu}(e^*_j,e^*_{j+1}) & \text{if $(e_j,e_{j+1})\notin A$}.
				\end{cases}
	\]
	Finally, we store the array $A_i[\cdot]$ for range maximum queries~\cite{BenderFPSS05}; see the discussion at the end of 
	the proof of Lemma~\ref{le:maxweight}. We spend $O(m_i)$ preprocessing time for $\pi_i$
	and can find $\min A[j..j']$ in constant time for any given indices $1\le j<j'< m_i$.
	This step, together for all paths $\pi_i\in \Pi$, takes $O(\sum_i m_i)=O(m)$ time.
	This finishes the preprocessing. 

	Assume that we are given a non-crossing closed walk $\gamma$ in $\widetilde\Xi(G,P,T_0)$,
	given as the concatenation of $k$ paths $\pi^1,\dots,\pi^k$, 
	each of them a subpath of some path in $\Pi$. 
	Let $\gamma_G$ be the closed walk obtained from $\gamma$ as follows: for each $(e_1,e_2)\in A$
	and each appearance of $e^*_1 e^*_2$ in $\gamma$, we replace  $e^*_1 e^*_2$ by $\pi_G(e^*_1,e^*_2)$. 
	Note that $\gamma_G$ is a closed walk in $G^*$.
	In fact, $\gamma_G$ is a cycle in $G^*$ because geometrically each single replacement occurs within a single face of $F$
	and all the replacements within a face do not introduce crossings because $\gamma$ was non-crossing. 
	Moreover, because each replacement is a rerouting within a face $F$ and $V(G)=V(P)$,
	we have $V_{\interior}(\gamma,P)=V_{\interior}(\gamma_G,G)$.
	From Lemma~\ref{le:maxweight} we thus get that
	\begin{align*}
		 \mu(V_{\interior}(\gamma,P))~&=~ \mu(V_{\interior}(\gamma,G)) \\
		&=~ \max \{ \chi_{\mu}(ab,bc)\mid \text{$ab$ and $bc$ consecutive dual edges along the cycle $\gamma_G$} \}.
	\end{align*}
	Like in the proof of Theorem~\ref{thm:sumweights}, we can break the computation
	of $\chi_{\mu}(\cdot)$ for pairs of consecutive edges of $\gamma_G$ into $k$ parts that occur within some path $\pi_i\in \Pi$ (after replacements)
	and $k$ parts that use the last edge of $\pi^t$ and the first of $\pi^{t+1}$ ($t=0,\dots,k$, indices modulo $k$).
	The part within a path $\pi_i\in \Pi$ can be retrieved in constant time from the range maximum query for  $A_i[\cdot]$.
	The part combining consecutive subpaths can be computed in constant time, but we have two cases to consider.
	Let $ab$ be the last dual edge of $\pi^t$ and let $bc$ be the first dual edge of $\pi^{t+1}$.
	If $(ab,bc)\in A$, then we have to use $\varphi(ab,bc)$. Otherwise we can directly use 
	$\chi_{\mu}(ab,bc)$, which can be computed in constant time (second item of Lemma~\ref{le:maxweight}).
	Finally, we have to take the maximum from those $2k$ values.
\end{proof}

When $G=P$, then $\widetilde\Xi(G,P,T_0)=\Xi(G,T_0)$, we have $b=0$, and
Theorem~\ref{thm:maxweight} simplifies to the following.

\begin{corollary}
\label{coro:maxofweights}
	Let $G$ be a plane graph with $n$ vertices and vertex-weights $\omega(\cdot)$,
	and let $T_0$ be a rooted spanning tree in $G$.
	Let $\Pi=\{ \pi_1,\dots,\pi_\ell\}$ be a family of paths in $G^*$ 
	with a total of $m$ edges, counted with multiplicity.
	After $O(n+m)$ preprocessing time, we can answer the following
	type of queries: 
	given a cycle $\gamma$ in $\Xi(G,T_0)$, 
	described as a concatenation of $k$ subpaths of paths from $\Pi$
	and oriented clockwise 
	return $\mu(V_{\interior}(\gamma,G))$ in $O(k)$ time.	
\end{corollary}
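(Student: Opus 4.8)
The plan is to obtain this as the degenerate, hole-free instance of Theorem~\ref{thm:maxweight} with $P=G$, checking that every parameter collapses as claimed. First I would verify that $b=0$. Since $P=G$, every edge of $T_0$ is an edge of $P$, so no edge of $E(T_0)\setminus E(P)$ can be drawn in the interior of a face of $P$; hence the set $F$ of faces of $P$ whose interior intersects $E(T_0)$ is empty, and $b$, which counts the edges of $P$ on the boundaries of faces in $F$, is $0$. Consequently the preprocessing bound $O(n+m+b^3)$ of Theorem~\ref{thm:maxweight} becomes $O(n+m)$. For the query class, I would note that a cycle $\gamma\in\Xi(G,T_0)$ as defined in~\eqref{eq:Xi} is in particular a non-crossing closed walk in $P^*=G^*$ satisfying the $T_0$-star-shaped condition, i.e.\ an element of $\widetilde\Xi(G,G,T_0)$, so the query guarantee of Theorem~\ref{thm:maxweight} — running time $O(k)$ for a walk presented as a concatenation of $k$ subpaths of paths from $\Pi$ — applies verbatim and gives exactly the statement of the corollary.

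Alternatively, and more transparently, I would unwind the proof of Theorem~\ref{thm:maxweight} in this case to get a self-contained argument. Because $F=\emptyset$, the set $A$ of cofacial pairs used there is empty, so there are no rerouting paths $\pi_G(e_1^*,e_2^*)$ and the closed walk $\gamma_G$ constructed from $\gamma$ is simply $\gamma$. Thus the construction reduces to: compute the function $\chi_{\mu}$ of Lemma~\ref{le:maxweight} in linear time; for each path $\pi_i=e_1^*\dots e_{m_i}^*$ of $\Pi$, build an array $A_i[1..m_i-1]$ with $A_i[j]=\chi_{\mu}(e_j^*,e_{j+1}^*)$ and store it for range-maximum queries~\cite{BenderFPSS05} in $O(m_i)$ time; in total $O(n+m)$. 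Given a query cycle $\gamma\in\Xi(G,T_0)$ presented as a concatenation of $k$ subpaths of paths in $\Pi$, the first item of Lemma~\ref{le:maxweight} gives $\mu(V_{\interior}(\gamma,G))$ as the maximum of $\chi_{\mu}(\cdot,\cdot)$ over consecutive dual-edge pairs along $\gamma$; this maximum decomposes into $k$ stretches lying inside a single $\pi_i$ — each answered by one range-maximum query — and $k$ junctions joining the last edge of one subpath to the first edge of the next — each answered in constant time by the second item of Lemma~\ref{le:maxweight}. Taking the maximum of these $O(k)$ values yields the answer in $O(k)$ time.

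There is no genuine obstacle here: all the substance is already in Lemma~\ref{le:maxweight} (the definition of $\chi_{\mu}$, its linear-time evaluability, and the key identity that $\mu(V_{\interior}(\gamma,G))$ is a maximum of boundary terms over a $T_0$-star-shaped cycle) and in the query-handling mechanism of Theorem~\ref{thm:maxweight}. The only points that require a sentence of justification are that $F=\emptyset$ — hence $b=0$ — when $P=G$, and that the cycles of $\Xi(G,T_0)$ are among the walks the theorem handles; both are immediate from the definitions, so the proof is essentially the observation that the hole-related machinery of Theorem~\ref{thm:maxweight} is vacuous in this case.
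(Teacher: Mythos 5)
Your proposal is correct and matches the paper's own derivation: the paper obtains the corollary precisely by observing that when $P=G$ one has $\widetilde\Xi(G,P,T_0)=\Xi(G,T_0)$ and $b=0$, so Theorem~\ref{thm:maxweight} specializes directly. Your additional unwinding of the proof in the hole-free case is consistent with, and a faithful expansion of, the same argument.
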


\section{Abstract Voronoi diagrams}
\label{sec:AVD}

Abstract Voronoi diagrams were introduced by Klein~\cite{Klein89} as
a way to handle together several of the different types of Voronoi diagrams
that were appearing. The concept is restricted to the plane $\RR^2$.
They are defined using the concept of \emph{bisectors} and 
\emph{dominant regions}.  
We will use the definition by Klein, Langetepe and Nilforoushan~\cite{KleinLN09},
as it seems the most recent and general.
For the construction, we use the randomized incremental construction
of Klein, Mehlhorn and Meiser~\cite{KleinMM93}, 
also discussed by Klein, Langetepe and Nilforoushan~\cite{KleinLN09} for their framework.
In our notation, we will introduce an $A$ in front to indicate
we are talking about objects in the \emph{abstract} Voronoi diagram.

Let $S$ be a finite set, which we refer to as \emph{abstract sites}.
For each ordered $(p,q)\in S^2$ of distinct sites, 
we have a simple planar curve $\AJ(p,q)$ and 
an open domain $\AD(p,q)$ whose boundary is $\AJ(p,q)$.
We refer to the pair $(\AJ(p,q),\AD(p,q))$ as an \DEF{abstract bisector}.
Define for each $p\in S$ the \DEF{abstract Voronoi region} 
$\AVR(p,S)=\bigcap_{q\in S\setminus\{ p\}} \AD(p,q)$.
Then the \DEF{abstract Voronoi diagram} of $S$, 
denoted by $\AVD(S)$, is defined as
$\AVD(S)=\RR^2\setminus \bigcup_{p\in S} \AVR(p,S)$.

The intuition is that the set $\AD(p,q)$ 
is the set of points that are closer to $p$ than
to $q$ and that $\AJ(p,q)$ plays the role of bisector.
Then, $\AVR(p,S)$ stands for the points that are dominated by $p$, when
compared against all $q\in S\setminus\{ p\}$. Note that $\AVR(p,S)$ is an open set because
it is the intersection of open sets. The abstract Voronoi diagram, $\AVD(S)$
would then be the set of points where no site dominates, 
meaning that at least two sites
are ``equidistant" from the point.
However, the theory does not rely on any such interpretations.
This makes it very powerful but less intuitive:  
some arguments become more cumbersome.

While these concepts can be considered in all generality,
the theory is developed assuming that certain properties, called axioms, are satisfied.
A system of abstract bisectors 
$\{ (\AJ(p,q),\AD(p,q))\mid  p,q\in S, p\not= q\}$ is \DEF{admissible}
if it satisfies the following properties:
\begin{enumerate}
	\item[(A1)] For all distinct $p,q\in S$, $J(p,q)=J(q,p)$.
	\item[(A2)] For all distinct $p,q\in S$, the plane $\RR^2$ 
		is the disjoint union of $D(p,q)$, $J(p,q)$ and $D(q,p)$.
	\item[(A3)] There exists a special point in the plane, which we call $p_\infty$,
		such that, for all distinct $p,q\in S$, 
		the curve $J(p,q)$ passes through $p_\infty$.%
		\footnote{Usually the axiom tells that the stereographic projection 
		to the sphere of the curve $J(p,q)$ can be completed to a closed
		Jordan curve passing through the north pole. For us
		it will be more convenient to project from a different point
		and complete all curves within the plane to make them pass through $p_\infty$.}
	\item[(A4)] For each subset $S'$ of $S$ with 3 elements
		and each $p\in S'$, the abstract Voronoi region $\AVR(p,S')$
		is path connected. 
	\item[(A5)] For each subset $S'$ of $S$ with 3 elements
		we have $\RR^2=\bigcup_{p\in S'} \overline{\AVR(p,S')}$.
\end{enumerate}
For the rest of the discussion on abstract Voronoi diagrams, we assume
that these axioms are satisfied. Note that axioms (A4)-(A5) are not
the ones given in the definition of~\cite{KleinLN09} but, as they
show in their Theorem~15, they are equivalent. In this regard,
our definition is closer to the one given by Klein~\cite{Klein2014}.
Since we are going to work with very natural, non-pathological Voronoi diagrams,
any of the sets of axioms used in any of the other papers we have encountered
also works in our case.
Assuming these axioms, one can show that the abstract Voronoi diagram $\AVD(S)$
is a plane graph~\cite[Theorem~10]{KleinLN09}. 
This brings a natural concept of \DEF{abstract Voronoi vertex} 
and \DEF{abstract Voronoi edge} as those being vertices (of degree $\ge 3$)
and edges in the plane graph $\AVD(S)$.

Klein, Mehlhorn and Meiser provide a randomized incremental construction 
of abstract Voronoi diagrams. One has to be careful about what it means to 
compute an abstract Voronoi diagram, since it is not even clear how 
the input is specified. 
For their construction, they assume as primitive operation
that one can compute the abstract Voronoi diagram of any 
five abstract sites. 
The output is described by a plane graph $H$
and, for each vertex and each edge of $H$, a pointer to a vertex or an edge,
respectively, in the abstract Voronoi diagram for at most four abstract sites. Thus,
we tell that an edge $e$ of $H$ corresponds to some precise abstract edge $e'$ 
of $\AVD(S')$, where $|S'|\le 4$. Whether $\AVD(S')$ can be computed explicitly
or not, it depends on how the input bisectors can be manipulated.

Klein, Mehlhorn and Meiser consider a special case, 
which is the one we will be using, 
where the basic operation requires the abstract Voronoi diagram
of only four sites. 
(This particular case is not discussed by Klein, Langetepe and Nilforoushan~\cite{KleinLN09}, 
but they discuss the general case.)

\begin{theorem}[Klein, Mehlhorn and Meiser~\cite{KleinMM93}]
\label{thm:AVD}
	Assume that we have an admissible system of abstract bisectors for 
	a set $S$ of $m$ sites.
	The abstract Voronoi diagram of $S$ can be computed
	in $O(m\log m)$ expected time using an expected number of $O(m\log m)$ 
	elementary operations.
	If the abstract Voronoi diagram of any three sites contains 
	at most one abstract Voronoi vertex, besides the special point $p_\infty$,
	then an elementary operation is the computation of 
	an abstract Voronoi diagram for four sites.
\end{theorem}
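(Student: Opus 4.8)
The plan is to run a randomized incremental construction, which is the argument of Klein, Mehlhorn and Meiser~\cite{KleinMM93}. Fix a uniformly random permutation $s_1,\dots,s_m$ of $S$, set $S_i=\{s_1,\dots,s_i\}$, and insert the sites one at a time, maintaining at each step the plane graph $\AVD(S_i)$ together with the labelling of each of its faces by the site whose abstract Voronoi region it is. By admissibility, each $\AVD(S_i)$ is a plane graph~\cite[Theorem~10]{KleinLN09}, hence of complexity $O(i)$. I would reduce the theorem to two claims: (i) given $\AVD(S_i)$, one edge of it that is destroyed by the insertion of $s_{i+1}$, and an oracle for the abstract Voronoi diagram of a constant number of sites (five in general, four under the stated hypothesis that every triple has at most one finite abstract Voronoi vertex), one can build $\AVD(S_{i+1})$ in time $O(1+c_i)$, where $c_i$ counts the vertices and edges of $\AVD(S_i)$ that disappear; and (ii) one can maintain, within the same running time plus an extra $O(m\log m)$ overall, enough auxiliary data to supply such a destroyed edge to every insertion.

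For (i) I would lean on the topological core of the theory. The new region $R:=\AVR(s_{i+1},S_{i+1})$ is, by a consequence of axioms (A4)--(A5), a nonempty path-connected open set whose boundary is a closed curve through $p_\infty$ made of maximal arcs of the bisector curves $\AJ(s_{i+1},s_j)$ with $j\le i$; moreover the portion of $\AVD(S_i)$ contained in $\overline R$ is exactly what must be deleted, and it is a tree, so the deletion and the retracing are unambiguous. Consequently $\partial R$ can be reconstructed by a local walk: starting from the given destroyed edge, use the oracle (the hypothesis on triples is what keeps the relevant subproblems of constant size, so that the new Voronoi vertices lying on a single old edge come from a bounded subproblem) to find the new Voronoi vertices on that edge, cross into the incident old face or old vertex, and continue until the walk closes up. Every old feature touched by the walk is either deleted or one of the $O(1)$ features adjacent to the two ends of the trace, so the walk --- together with the subsequent rebuilding of the graph and the face labels inside $R$ --- costs $O(1+c_i)$.

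For (ii) I would keep a conflict graph: for each not-yet-inserted site $s_j$, a pointer to one edge of the current diagram that is in conflict with $s_j$ (an edge that would be destroyed if $s_j$ were inserted now), together with the reverse incidence lists; testing such a conflict is a single oracle call. The pointer of $s_{i+1}$ provides the destroyed edge needed for (i). After the insertion, every uninserted site whose pointer lay on a destroyed edge is re-tested against the $O(1)$ edges freshly created near where its old edge was, together with the few surviving incident edges, and is re-pointed accordingly; this is correct because any edge still in conflict with such a site must be incident to the region just carved out. Two backwards-analysis estimates then finish the proof. First, $\mathbb{E}[c_i]=O(1)$: conditioned on $S_{i+1}$, each of its $i+1$ sites is equally likely to have been inserted last, and $\AVD(S_{i+1})$ has $O(i)$ features, so $\sum_i \mathbb{E}[c_i]=O(m)$. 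Second, by the standard Clarkson--Shor analysis of conflict graphs in randomized incremental constructions (as carried out for Delaunay triangulations), the expected total number of pointer reassignments over the whole run is $O(m\log m)$, and each reassignment costs $O(1)$ and a constant number of oracle calls. Summing the $O(1+c_i)$ insertion costs with the $O(m\log m)$ bookkeeping yields the claimed $O(m\log m)$ expected time, realized through $O(m\log m)$ expected elementary operations.

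The hard part is claim (i): showing, from the axioms alone, that the new region and the new part of the diagram are ``locally traceable'' --- that following $\partial R$ never leaves the (tree-shaped) destroyed portion of $\AVD(S_i)$, that the walk terminates by closing up, and that the constant-size oracle truly resolves every branching decision encountered along the walk. This is where path-connectedness of the Voronoi region of every triple (A4) and (A5) enter essentially, and it is precisely what makes the abstract setting more delicate than the concrete point case; the probabilistic estimates in (ii) are, by contrast, the routine Clarkson--Shor / backwards-analysis bookkeeping.
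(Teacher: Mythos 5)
The paper does not prove this theorem: it is imported (with a restatement adapted to the paper's notation and axioms) from Klein, Mehlhorn and Meiser~\cite{KleinMM93} and used as a black box, so there is no in-paper argument to compare yours against. Your sketch is a reasonable high-level reconstruction of the randomized incremental construction in that reference. You correctly place the role of the ``at most one finite Voronoi vertex for any triple'' hypothesis (it caps the size of the local subproblems so that an elementary operation needs only four sites rather than five), and the overall architecture --- a local update of cost $O(1+c_i)$ driven by a constant-size oracle, conflict-graph pointers to seed each insertion, backwards analysis for $\sum_i \mathbb{E}[c_i]=O(m)$, and a Clarkson--Shor bound of $O(m\log m)$ on pointer reassignments --- is the standard one and matches~\cite{KleinMM93}. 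The steps you flag as hard in claim (i) --- that the destroyed portion of $\AVD(S_i)$ is a tree, that the boundary trace of the new region closes up, and that the constant-size oracle resolves every branching decision encountered --- are precisely the technical content of~\cite{KleinMM93}, and your sketch does not establish them from the axioms; making those topological lemmas rigorous in the abstract setting is the substance of that paper. For the purposes of the present paper the citation suffices, so nothing more is required here; but as a self-contained proof the proposal is a sound outline rather than a complete argument.
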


\section{Voronoi diagrams in planar graphs}
\label{sec:GVD}

We will need additively weighted Voronoi diagrams in \emph{plane} graphs.
We first define Voronoi diagrams for arbitrary graphs. 
Then we discuss a representation using the dual graphs
that works only for plane graphs and discuss some folklore properties.
See for example the papers of Marx and Pilipczuk~\cite{MarxP15} 
or Colin de Verdi\`ere~\cite{Verdiere10} for similar intuition. 
The dual representation is the key to be able to
use the machinery of abstract Voronoi diagrams as a black box. 

\subsection{Arbitrary graphs}
Let $G$ be an arbitrary graph, not necessarily planar, with no negative cycles.
A \DEF{site} $s$ is a pair $(v_s,w_s)$, where $v_s\in V(G)$ is
its \DEF{location}, and $w_s\in \RR$ is its \DEF{weight}, possibly negative.
With a slight abuse of notation, we will use $s$ instead of $v_s$ 
as the vertex. For example, for a site $s$
we will write $s\in V(G)$ instead of $v_s\in V(G)$ and 
$d_G(s,x)$ instead $d_G(v_s,x)$.

Let $S$ be a set of sites in $G$.
For each $s\in S$, its \DEF{graphic Voronoi region}, denoted $\cell_G(s,S)$, is defined by
\[
	\cell_G(s,S) ~=~ \{ x\in V(G)\mid \forall t\in S\setminus\{s\}:~
		w_s+d_G(s,x) \le w_t + d_G(t,x) \}.
\]
See Figure~\ref{fig:Voronoi} for an example. Note that we are using the distance from the sites to the vertices
to define the graphic Voronoi cells. For directed graphs, using the reverse distance from the vertex to the sites
 would define different graphic regions (in general). However, this is equivalent to use the reversed graph $G^R$ of $G$.

Even assuming that all distances in $G$ are distinct,
we may have $w_s  + d_G(s,x) = w_t + d_G(t,x)$ for some vertex $x$. 
Also, some Voronoi cells may be empty.
In our case, we will only deal with cases where these two things
cannot happen.
We say that the set $S$ of sites
is \DEF{generic} when, for each $x\in V(G)$ and for each distinct $s,t\in S$, 
we have $w_s  + d_G(s,x) \neq w_t + d_G(t,x)$.
The set $S$ is \DEF{independent} when each Voronoi cell is nonempty.
It is easy to see that, if $S$ is a generic, independent set of sites,
then $s\in \cell_G(s,S)$
and each vertex $x$ of $V(G)$ belongs
to precisely one graphic Voronoi cell $\cell_G(s,S)$ over all $s\in S$.

\begin{figure}
	\centering
	\includegraphics[page=6,scale=.9]{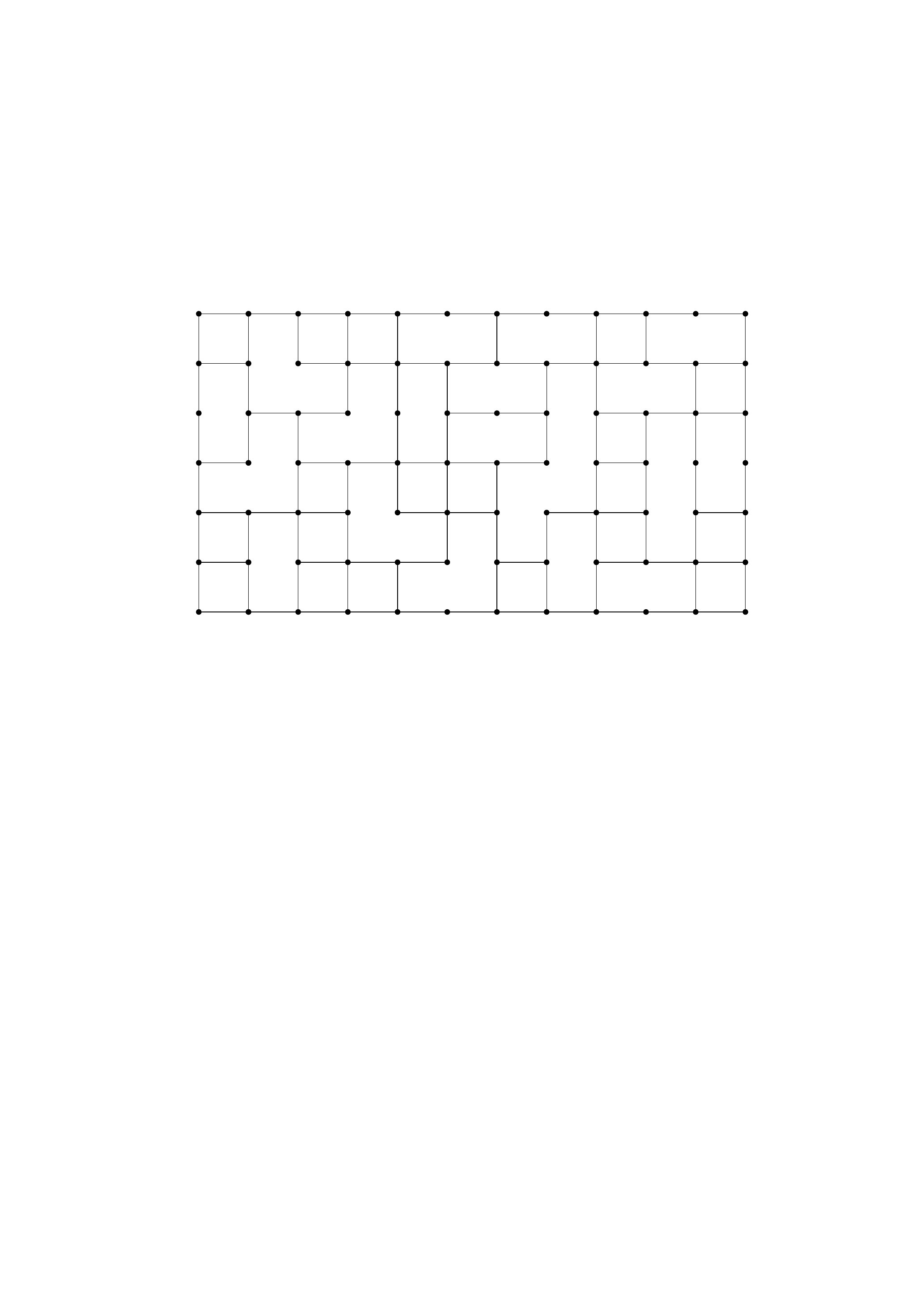}
	\caption{A graphic Voronoi diagram for four sites. The edges are undirected and have unit weight.
		An abstract Voronoi edge is marked with thicker pen.}
	\label{fig:Voronoi}
\end{figure}

The \DEF{graphic Voronoi diagram} of $S$ (in G) is the collection of graphic Voronoi regions:
\[
	\GVD_G(S) ~=~ \{ \cell_G(s,S)\mid s\in S\}.
\]
The following property is standard.

\begin{lemma}
\label{le:star}
	Let $S$ be a generic, independent set of sites.
	Then for each $s\in S$ the following hold:
	\begin{itemize}
		\item For each $x$ in $\cell_G(s,S)$, the shortest path
			from $s$ to $x$ is contained in $\cell_G(s,S)$.
		\item $\cell_G(s,S)$ induces a connected subgraph of $G$.
	\end{itemize}
\end{lemma}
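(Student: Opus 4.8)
The plan is to prove both items by exploiting the defining inequality of a Voronoi cell together with the assumption that subpaths of shortest paths are shortest paths (which holds because $G$ has no negative cycle). First I would establish the first item: let $x\in\cell_G(s,S)$, let $\pi$ be the (unique, by our genericity assumptions) shortest path from $s$ to $x$, and let $y$ be any vertex on $\pi$. The key identity is that $\pi$ decomposes as a shortest $s$-to-$y$ path followed by a shortest $y$-to-$x$ path, so $d_G(s,x)=d_G(s,y)+d_G(y,x)$. Now for any $t\in S\setminus\{s\}$ I would write, using $x\in\cell_G(s,S)$ and the triangle-type inequality $d_G(t,x)\le d_G(t,y)+d_G(y,x)$,
\[
	w_s+d_G(s,y)+d_G(y,x) ~=~ w_s+d_G(s,x) ~\le~ w_t+d_G(t,x) ~\le~ w_t+d_G(t,y)+d_G(y,x),
\]
and cancel $d_G(y,x)$ from both sides to get $w_s+d_G(s,y)\le w_t+d_G(t,y)$. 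Since this holds for all $t$, we conclude $y\in\cell_G(s,S)$. As $y$ was arbitrary on $\pi$, the whole shortest path lies in $\cell_G(s,S)$.

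Next I would derive the second item from the first. Since $S$ is generic and independent, $s\in\cell_G(s,S)$, so the cell is nonempty. For any $x\in\cell_G(s,S)$, the first item gives a shortest path from $s$ to $x$ lying entirely in the cell; this path connects $x$ to $s$ within $\cell_G(s,S)$. Hence every vertex of the cell is connected to $s$ inside the cell, so the subgraph of $G$ induced by $\cell_G(s,S)$ is connected. One small point to be careful about: I should make sure "shortest path contained in $\cell_G(s,S)$" is interpreted as all its vertices being in the cell, and that the induced subgraph then genuinely contains the path's edges — this is immediate since induced subgraphs keep every edge between retained vertices, and consecutive vertices of a path are adjacent.

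The only genuine subtlety — and the step I would flag as the main obstacle, though it is mild — is handling the directedness carefully. The distance $d_G(s,x)$ is a directed distance, so the decomposition $d_G(s,x)=d_G(s,y)+d_G(y,x)$ along a shortest $s$-to-$x$ path and the inequality $d_G(t,x)\le d_G(t,y)+d_G(y,x)$ must be justified in the directed setting; both follow from the absence of negative cycles (so that concatenating a shortest $t$-to-$y$ walk with the fixed shortest $y$-to-$x$ subpath is a valid $t$-to-$x$ walk of the stated length, giving the upper bound on $d_G(t,x)$, and subpaths of shortest paths are shortest, giving the exact decomposition). I would state these two facts explicitly at the start of the proof, citing the earlier remark in the "Distances in graphs" paragraph that subpaths of shortest paths are shortest paths, and then the cancellation argument above goes through verbatim. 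Everything else is routine arithmetic.
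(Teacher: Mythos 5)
Your proof is correct and follows essentially the same approach as the paper: decompose $d_G(s,x)=d_G(s,y)+d_G(y,x)$ along the shortest path, apply the triangle-type inequality $d_G(t,x)\le d_G(t,y)+d_G(y,x)$, and cancel $d_G(y,x)$ to conclude $y\in\cell_G(s,S)$; the paper merely phrases this as a proof by contradiction (assuming $y$ lies in some $\cell_G(t,S)$, $t\neq s$, and deriving $w_t+d_G(t,x)<w_s+d_G(s,x)$). Your direct version is arguably a bit cleaner and more carefully tracks the additive weights $w_s,w_t$ (the paper's displayed chain omits them, presumably an oversight), and your flag about directedness is a fair caution, though both decomposition and triangle inequality go through unchanged in the directed no-negative-cycle setting, as you note.
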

\begin{proof}
	Let $x$ be a vertex of $\cell_G(s,S)$ and let $P(s,x)$
	be the shortest path in $G$ from $s$ to $x$.
	Assume, for the sake of reaching a contradiction,
	that some vertex $y$ on $P(s,x)$ is 
	contained in some other Voronoi cell $\cell_G(t,S)$, where $t\neq s$.
	Because of uniqueness of shortest paths, this means that $d_G(t,y)<d_G(s,y)$. 
	However, this implies that
	\[
		d_G(t,x)\le d_G(t,y)+d_G(y,x) < d_G(s,y)+d_G(y,x) = d_G(s,x),
	\]
	where in the last equality we have used that
	$y$ lies in the shortest path $P(s,x)$.
	The obtained inequality $d_G(t,x)<d_G(s,x)$ contradicts the 
	property that $x\in \cell_G(s,S)$. This proves the first item.
	
	To show the second item, 
	note that the subgraph of $G$ induced $\cell_G(s,S)$ contains (shortest) 
	paths from $s$ to all vertices of $\cell_G(s,S)$ because of the previous item.
\end{proof}

For each two sites $s$ and $t$, 
we define the \DEF{graphic dominance region} of $s$ over $t$ as
\begin{align*}
	\GD_G(s,t)~&=~ \cell_G(s,\{s,t\}) \\
			&=~ \{ x\in V(G)\mid 
		w_s+d_G(s,x) \le w_t+d_G(t,x) \}.
\end{align*}

\begin{lemma}
\label{le:intersection}
	For each $s\in S$ we have 
	$\cell_G(s,S)=\bigcap_{t\in S\setminus\{s \}} \GD_G(s,t)$.
\end{lemma}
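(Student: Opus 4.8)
The plan is to prove the set equality $\cell_G(s,S) = \bigcap_{t \in S \setminus \{s\}} \GD_G(s,t)$ by double inclusion, unwinding the definitions that were just introduced. Both sides are subsets of $V(G)$, so it suffices to show that a vertex $x$ lies in the left-hand side if and only if it lies in the right-hand side.

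First I would handle the forward inclusion. Suppose $x \in \cell_G(s,S)$. By definition this means that for every $t \in S \setminus \{s\}$ we have $w_s + d_G(s,x) \le w_t + d_G(t,x)$. Fix any such $t$. This single inequality is exactly the defining condition for $x \in \GD_G(s,t) = \cell_G(s,\{s,t\})$. Since $t$ was arbitrary, $x$ belongs to $\GD_G(s,t)$ for all $t \in S \setminus \{s\}$, hence $x \in \bigcap_{t \in S \setminus \{s\}} \GD_G(s,t)$.

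For the reverse inclusion, suppose $x \in \bigcap_{t \in S \setminus \{s\}} \GD_G(s,t)$. Then for every $t \in S \setminus \{s\}$ we have $x \in \GD_G(s,t)$, which means $w_s + d_G(s,x) \le w_t + d_G(t,x)$. But this is precisely the conjunction of inequalities appearing in the definition of $\cell_G(s,S)$, so $x \in \cell_G(s,S)$. Combining the two inclusions gives the claimed equality.

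I do not expect any real obstacle here: this is a definitional unwinding, and the only thing to be careful about is that the genericity/independence hypotheses on $S$ play no role in this particular identity — it holds for any set of sites, since both descriptions are literally the same set of simultaneous inequalities, indexed the same way. (Those hypotheses matter for Lemma~\ref{le:star} and for later statements, not for Lemma~\ref{le:intersection}.) One could phrase the whole argument in a single line: $x \in \cell_G(s,S) \iff \bigl(\forall t \in S\setminus\{s\}:\ w_s + d_G(s,x) \le w_t + d_G(t,x)\bigr) \iff \bigl(\forall t \in S\setminus\{s\}:\ x \in \GD_G(s,t)\bigr) \iff x \in \bigcap_{t \in S\setminus\{s\}} \GD_G(s,t)$.
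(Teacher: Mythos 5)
Your proof is correct and takes essentially the same route as the paper: the paper's proof is the one-line chain of equalities that you give at the end of your argument, rewriting the universal quantifier over $t \in S \setminus \{s\}$ as an intersection of the sets $\GD_G(s,t)$. Your added observation that the genericity and independence hypotheses are not needed here is also correct; the paper does not state those hypotheses for this lemma either.
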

\begin{proof}
	We note that  
	\begin{align*}
		\cell_G(s,S) ~&=~ \{ x\in V(G)\mid \forall t\in S\setminus \{s\}: 
						w_s+d_G(s,x) \le w_t+d_G(t,x) \}\\
					&=~ \bigcap_{t\in S\setminus \{s\}}\{ x\in V(G)\mid  
						w_s+d_G(s,x) \le w_t+d_G(t,x)\}\\
					&=~ \bigcap_{t\in S\setminus\{s \}} \GD_G(s,t).					
	\end{align*}
\end{proof}

\subsection{Plane graphs}
Now we will make use of graph duality to provide an alternative description
of additively weighted Voronoi diagrams in plane graphs. The aim is to define
Voronoi diagrams geometrically using bisectors, where
a bisector is just going to be a cycle in the dual graph.

Consider two sites $s$ and $t$ in $G$ and define
\[
	E_G(s,t) ~=~ \{ xy\in E(G) \mid x\in \GD_G(s,t),~ y\in \GD_G(t,s)\}. 
\]
Thus, we are taking the edges that
have each endpoint in a different graphic Voronoi region of $\GVD_G(\{s,t\})$.
We denote by $E^*_G(s,t)$ their dual edges.

\begin{lemma}
\label{le:bisector}
	Let $\{ s,t\}$ be a generic and independent set of sites.
	Then the edges of $E^*_G(s,t)$ define a cycle $\gamma$ in $G^*$.
	Moreover, if $s\in V_{\interior}(\gamma,G)$, then
	$V_{\interior}(\gamma,G)= \GD_G(s,t)$ and $V_{\exterior}(\gamma,G)=\GD_G(t,s)$.	
\end{lemma}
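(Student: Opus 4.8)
The plan is to show that $E^*_G(s,t)$ is a cycle by realizing it as the boundary of a region in the plane, then to identify the inside and outside of that cycle with the two dominance regions. First I would recall that, since $\{s,t\}$ is generic and independent, every vertex of $G$ lies in exactly one of the two graphic Voronoi cells $\cell_G(s,\{s,t\})=\GD_G(s,t)$ and $\cell_G(t,\{s,t\})=\GD_G(t,s)$, and by Lemma~\ref{le:star} each of these induces a connected subgraph of $G$ (and contains the shortest path from its site to each of its vertices). So the vertex set $V(G)$ is partitioned into two parts, each inducing a connected subgraph, and $E_G(s,t)$ is precisely the edge cut separating these two parts.

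The key step is the standard fact that in a plane graph the dual of a minimal edge cut separating the vertex set into two connected pieces is a simple cycle in $G^*$. Concretely, I would take the subgraph $G_s$ induced by $\GD_G(s,t)$; by Lemma~\ref{le:star} it is connected, so in the embedding the union of $G_s$ together with the faces of $G$ entirely ``inside'' it forms a connected region whose topological boundary is a closed curve. Walking along the faces of $G$ that border an edge of the cut $E_G(s,t)$, each such edge is crossed exactly once by this boundary, and the boundary visits dual vertices (faces of $G$) and dual edges (the duals of the cut edges) forming a closed walk in $G^*$; connectivity of \emph{both} sides forces this walk to be a single simple cycle $\gamma$ with $E(\gamma)=E^*_G(s,t)$. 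This is exactly the kind of duality between cuts and cycles that underlies the tree–cotree decomposition used already in the proof of Lemma~\ref{le:maxweight}, so I would invoke it with a short justification rather than re-deriving it.

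For the second statement, once $\gamma$ is a simple cycle whose removal from the plane separates $G_s$ from $G_t$, the two vertex sets $\GD_G(s,t)$ and $\GD_G(t,s)$ land in the two different faces of $\gamma$, i.e. one equals $V_{\interior}(\gamma,G)$ and the other $V_{\exterior}(\gamma,G)$; which is which is pinned down by the hypothesis $s\in V_{\interior}(\gamma,G)$, since $s\in\GD_G(s,t)$ and all of $\GD_G(s,t)$ lies on the same side of $\gamma$. The main obstacle I anticipate is being careful about the orientation/clockwise-versus-counterclockwise bookkeeping and about degenerate embedding issues (e.g. if $G$ is not $2$-connected, a face of $G$ could touch the cut on more than one side, so the ``closed walk'' could a priori repeat a dual vertex); here I would lean on the genericity/independence hypotheses and, if needed, the assumption made earlier in the paper that we may add edges to make $G$ connected (and take $2$-connectedness for the dual to be loop-free), so that the cut genuinely dualizes to an honest simple cycle. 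Everything else is routine once the cut–cycle duality is set up.
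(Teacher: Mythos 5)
Your proof takes essentially the same route as the paper's: you use Lemma~\ref{le:star} to see that the two dominance regions partition $V(G)$ into two pieces each inducing a connected subgraph, invoke the standard plane-graph duality between minimal edge cuts (with connected sides) and cycles in $G^*$ to conclude $E^*_G(s,t)$ is a cycle $\gamma$, and then observe that each dominance region must land entirely on one side of $\gamma$, with $s\in V_{\interior}(\gamma,G)$ fixing which side is which. The paper simply cites the cut--cycle correspondence from standard references (Diestel; Bondy--Murty) rather than sketching the boundary-of-a-region argument, and it does not need the $2$-connectedness caveat you raise since the cited statement already handles the general connected case.
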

\begin{proof}
	Let $A^*$ be an arbitrary set of dual edges. It is well known
	that $A^*$ is the edge set of a cycle if and only if 
	$G-A$ has precisely two connected components. 
	Moreover, two faces $u^*$ and $v^*$ of $G^*$ 
	are in the same side of the cycle defined by $A^*$ 
	if and only if $u$ and $v$ are in the same connected component of $G-A$.
	See for example the proof in~\cite[Proposition 4.6.1]{Diestel05}
	or~\cite[Theorem 10.16]{BondyM08}.

	When $\{ s,t\}$ is generic and independent,
	we have $\GD_G(s,t)\neq \emptyset$, $\GD_G(t,s)\neq \emptyset$, and 
	$V(G)$ is the disjoint union of $\GD_G(s,t)$ and $\GD_G(t,s)$. 
	This means that $E_G(s,t)$ is the edge cut between
	$\GD_G(s,t)$ and its complement, $\GD_G(t,s)$.
	Moreover, by Lemma~\ref{le:star}, the subgraphs of $G$
	induced by $\GD_G(s,t)$ and by $\GD_G(t,s)$ are connected. 
	Therefore $G-E_G(s,t)$ has precisely two connected components, 
	and thus $E^*_G(s,t)$ is the edge set of a cycle $\gamma$ in $G^*$.

	Assume that $s\in V_{\interior}(\gamma,G)$.  
	Since $\GD_G(s,t)$ is the vertex set of the connected component
	of $G-E_G(s,t)$ that contains $s$, the faces 
	of $\{ u^*\mid u\in \GD_G(s,t)\}$ are in $\interior(\gamma)$
	and the faces $\{ v^*\mid v\in \GD_G(t,s)\}$ are in $\exterior(\gamma)$.
	Since a vertex $u$ of $G$ is the unique vertex of $G$ contained
	in the dual face $u^*$ of $G^*$, the result follows.
\end{proof}

\begin{figure}
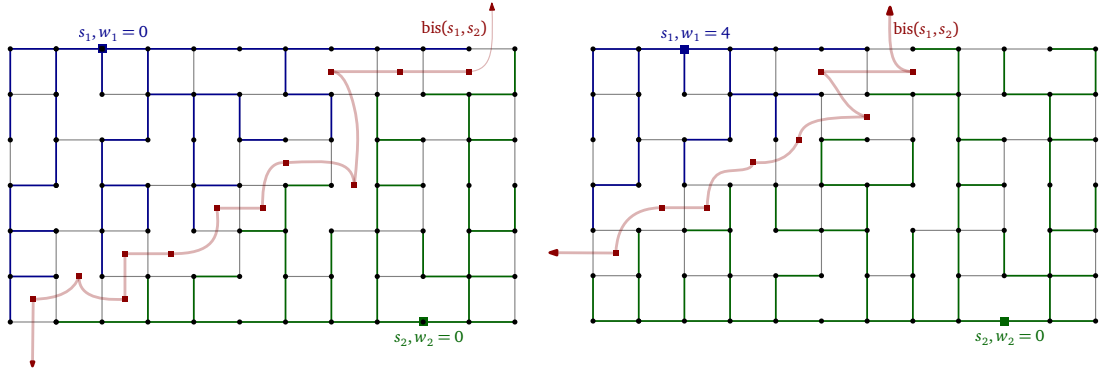

	\centering
	\includegraphics[page=3,width=.49\textwidth]{VoronoiDiagrams}
	\includegraphics[page=5,width=.49\textwidth]{VoronoiDiagrams}
	\caption{Two bisectors for sites placed at the same vertices but different weights. 
		The edges are undirected and have unit weight.}
	\label{fig:bisectors}
\end{figure}

When $s$ and $t$ are independent and generic,
we define the \DEF{bisector} of $s$ and $t$, denoted as $\bis_G(s,t)$,
as the curve in the plane defined by the cycle of $E^*_G(s,t)$,
as guaranteed in the previous lemma. 
See figure~\ref{fig:bisectors} for an example.
We also define $D_G(s,t)$ as the connected part of $\RR^2\setminus \bis_G(s,t)$ that contains $s$. 
We then have
\begin{equation}
	D_G(s,t)~=~ \left( \bigcup_{v\in \GD_G(s,t)} \overline{v^*}\right)^\circ .
	\label{eq:D_G}
\end{equation}
Here we have used the notation mentioned earlier: 
$\overline A$ and $A^\circ$ denote the closure and the interior of a set $A\subset \RR^2$, respectively.
Note that the pair $(\bis_G(s,t),D_G(s,t))$ is the type of pair
used to define abstract Voronoi diagrams.
From now on, whenever we talk about the abstract Voronoi
diagram of $G$, we refer to the abstract Voronoi
diagram defined by the system of bisectors 
$\{ (\bis_G(s,t),D_G(s,t))\mid s,t\in S, s\not= t\}$.

We have defined Voronoi regions of plane graphs 
in two different ways: using distances in
the primal graph $G$, called \emph{graphic} Voronoi regions, 
and using bisectors defined as curves in the plane, 
called \emph{abstract} Voronoi regions.
We next make sure that the definitions match, when restricted
to vertices of $G$. 

\begin{lemma}
\label{le:alternative}
	Let $G$ be a plane graph and 
	let $S$ be a generic, independent set of sites.
	Then, for each $s\in S$, we have $\cell_G(s,S)= V(G)\cap \AVR(s,S)$.
\end{lemma}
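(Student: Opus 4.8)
The plan is to show the two inclusions $\cell_G(s,S) \subseteq V(G)\cap \AVR(s,S)$ and $V(G)\cap \AVR(s,S) \subseteq \cell_G(s,S)$ by reducing both Voronoi regions to their pairwise dominance regions and using the bisector lemma. Recall from Lemma~\ref{le:intersection} that $\cell_G(s,S)=\bigcap_{t\in S\setminus\{s\}}\GD_G(s,t)$, and from the definition of abstract Voronoi regions that $\AVR(s,S)=\bigcap_{t\in S\setminus\{s\}} D_G(s,t)$. Hence it suffices to prove the single-pair identity
\[
	\GD_G(s,t) ~=~ V(G)\cap D_G(s,t)
\]
for every $t\in S\setminus\{s\}$; intersecting over all such $t$ and then intersecting with $V(G)$ gives the claim. (One should note that $\{s,t\}$ inherits genericity and independence from $S$, so Lemma~\ref{le:bisector} applies to the pair.)

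First I would fix $t$ and let $\gamma$ be the dual cycle with edge set $E^*_G(s,t)$ guaranteed by Lemma~\ref{le:bisector}. By genericity and independence both $\GD_G(s,t)$ and $\GD_G(t,s)$ are nonempty and partition $V(G)$, and since $s\in\GD_G(s,t)$ we may orient $\gamma$ so that $s\in V_{\interior}(\gamma,G)$. Lemma~\ref{le:bisector} then yields $V_{\interior}(\gamma,G)=\GD_G(s,t)$. On the geometric side, $\bis_G(s,t)$ is by definition the curve traced by $\gamma$ and $D_G(s,t)$ is the component of $\RR^2\setminus\bis_G(s,t)$ containing $s$, i.e.\ $D_G(s,t)=\interior(\gamma)$. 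Now intersect with the vertex set: a vertex $v$ of $G$ lies in the open face $v^*$ of $G^*$ and in no other face, so $v\in D_G(s,t)=\interior(\gamma)$ exactly when the dual face $v^*$ lies in $\interior(\gamma)$, which is exactly the condition $v\in V_{\interior}(\gamma,G)$. Equation~\eqref{eq:D_G} makes this precise: $D_G(s,t)$ is the interior of the union of the closed faces $\overline{v^*}$ over $v\in\GD_G(s,t)$, and the only vertices of $G$ in that set are the ones in $\GD_G(s,t)$ itself. Combining, $V(G)\cap D_G(s,t)=V_{\interior}(\gamma,G)=\GD_G(s,t)$, which is the single-pair identity.

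Finally I would assemble the pieces:
\[
	V(G)\cap\AVR(s,S) ~=~ V(G)\cap\!\!\bigcap_{t\in S\setminus\{s\}}\!\! D_G(s,t)
	~=~ \bigcap_{t\in S\setminus\{s\}}\!\!\bigl(V(G)\cap D_G(s,t)\bigr)
	~=~ \bigcap_{t\in S\setminus\{s\}}\!\!\GD_G(s,t) ~=~ \cell_G(s,S),
\]
using Lemma~\ref{le:intersection} in the last step. I expect the only delicate point to be the careful identification $V(G)\cap D_G(s,t)=V_{\interior}(\gamma,G)$: one must be sure that no vertex of $G$ ever lies \emph{on} the curve $\bis_G(s,t)$ (which holds because that curve is a walk in the dual graph, hence disjoint from $V(G)$, as already observed when $V_{\interior}$ and $V_{\exterior}$ were defined) and that the orientation of $\gamma$ really does place $s$ in the interior, so that Lemma~\ref{le:bisector}'s conclusion $V_{\interior}(\gamma,G)=\GD_G(s,t)$ is the one we want rather than its exterior counterpart. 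Everything else is bookkeeping with the earlier lemmas.
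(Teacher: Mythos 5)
Your proposal is correct and follows essentially the same route as the paper: both rest on the definition $\AVR(s,S)=\bigcap_{t\neq s} D_G(s,t)$, equation~\eqref{eq:D_G}, Lemma~\ref{le:intersection}, and the observation that the only vertex of $V(G)$ in the closed dual face $\overline{v^*}$ is $v$ itself. The one small structural difference is the order of operations: you intersect with $V(G)$ pairwise and then take the intersection over $t$, whereas the paper first commutes $\bigcap_t$ with the union of closed faces $\overline{v^*}$ before intersecting with $V(G)$; your ordering quietly sidesteps the need to justify that topological swap, which is a minor simplification.
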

\begin{proof}
	Recall the definition
	\[
		\AVR(s,S) ~=~ \bigcap_{t\in S\setminus\{ s\}} D_G(s,t).
	\]
	Because of equation~\eqref{eq:D_G} we have
	\[ 
		D_G(s,t) ~=~ \left( \bigcup_{v\in \GD_G(s,t)} \overline{v^*}\right)^\circ ,
	\]
	and we obtain that 
	\begin{align*}
		\AVR(s,S) ~&=~ \bigcap_{t\in S\setminus\{ s\}}
					\left( \bigcup_{v\in \GD_G(s,t)} \overline{v^*}\right)^\circ
				~=~ \left( \bigcup_{v\in \bigcap_{t\in S\setminus\{ s\}} \GD_G(s,t)} \overline{v^*}\right)^\circ \\
				&=~ \left( \bigcup_{v\in \cell_G(s,S)} \overline{v^*}\right)^\circ,
	\end{align*}
	where in the last equality we used Lemma~\ref{le:intersection}.
	Since the only vertex of $V(G)$ contained 
	in the dual face $\overline{v^*}$ is precisely $v$, 
	and it lies in the interior of $v^*$,
	we get that 
	$V(G)\cap \AVR(s,S)=\cell(s,S)$.
\end{proof}

We cannot use the machinery of abstract Voronoi diagrams
for arbitrary sites because of axiom (A3). In our case
bisectors may not pass through a common ``infinity point" $p_\infty$.
Indeed, for arbitrary planar graphs we could have two bisectors that never intersect.
However, we can use it when all the sites are in the outer face of $G$. 
We next show this.

\begin{lemma}
\label{le:admissible}
	Let $G$ be a plane graph and 
	let $S$ be a generic, independent set of sites located
	in the outer face of $G$.
	Let $a_\infty$ be the vertex of $G^*$ dual to the outer face of $G$.
	Then the system of abstract bisectors 
	$\{ (\bis_G(s,t),D_G(s,t))\mid s,t\in S, s\not= t\}$
	is admissible, where $a_\infty$ plays the role of $p_\infty$ in axiom (A3).
\end{lemma}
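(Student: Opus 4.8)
The plan is to verify the five axioms (A1)--(A5) for the system $\{(\bis_G(s,t),D_G(s,t))\mid s,t\in S,\ s\neq t\}$, one at a time, using the combinatorial description of the bisectors as dual cycles and the fact that all sites lie on the outer face. Axioms (A1) and (A2) are essentially immediate from the definitions: by Lemma~\ref{le:bisector}, $E^*_G(s,t)=E^*_G(t,s)$ is a single dual cycle, so $\bis_G(s,t)=\bis_G(t,s)$, giving (A1); and since $\{s,t\}$ generic and independent means $V(G)$ is the disjoint union of $\GD_G(s,t)$ and $\GD_G(t,s)$, equation~\eqref{eq:D_G} and its analogue for $D_G(t,s)$ show that the plane is the disjoint union of $D_G(s,t)$, the curve $\bis_G(s,t)$, and $D_G(t,s)$, which is (A2).

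For (A3) I would argue that, since $s$ and $t$ both lie on the outer face of $G$, the outer-face vertex $a_\infty$ of $G^*$ is incident to at least one edge of the cut $E_G(s,t)$: indeed, walking around the outer face one passes from a vertex in $\GD_G(s,t)$ (namely $s$) to one in $\GD_G(t,s)$ (namely $t$) and back, so some outer-face edge has its two endpoints in different dominance regions, hence lies in $E_G(s,t)$, hence its dual is on the cycle $\bis_G(s,t)$ and is incident to $a_\infty$. Thus every bisector passes through $a_\infty$, which plays the role of $p_\infty$. (A small technical point: one may need to perturb or complete the dual cycle through $a_\infty$ into a curve in the plane as described in the footnote to (A3); this is the kind of detail I would handle by appealing to the standard convention that the bisector curve is routed through the point representing the outer face.) I expect this to be the main obstacle, because it is the only axiom that genuinely uses the hypothesis that the sites are cofacial, and it requires care in translating between the combinatorial dual cycle and the geometric curve through $a_\infty$.

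For (A4) and (A5) I would restrict to a three-element subset $S'=\{p,q,r\}$, which (being a subset of a generic, independent set — here one may need the mild observation that generic independence of $S$ passes to subsets, or else argue directly) is again generic and independent. Path-connectedness of $\AVR(p,S')$ in (A4) follows from Lemma~\ref{le:alternative} together with Lemma~\ref{le:star}: $\AVR(p,S')$ is, by~\eqref{eq:D_G} and Lemma~\ref{le:intersection}, the interior of the union of the closed dual faces $\overline{v^*}$ over $v\in \cell_G(p,S')$, and Lemma~\ref{le:star} says $\cell_G(p,S')$ induces a connected subgraph of $G$; connectedness of that primal subgraph translates to path-connectedness of the corresponding union of dual faces. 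For (A5), I would note that since $S'$ is independent and generic, every vertex $v\in V(G)$ lies in exactly one $\cell_G(p,S')$, so the closed dual faces $\{\overline{v^*}\}$ partition (up to boundaries) the plane among the three closures $\overline{\AVR(p,S')}$; more carefully, every face of $G$ is $v^*$ for some vertex $v$ which lies in some cell, so $\bigcup_{p\in S'}\overline{\AVR(p,S')}$ covers every face, and since it is closed it covers all of $\RR^2$. Throughout, the only real work is bookkeeping between the primal cell structure and the dual-face geometry via~\eqref{eq:D_G}, Lemma~\ref{le:intersection}, and Lemma~\ref{le:alternative}.
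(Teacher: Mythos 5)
Your proposal is correct and follows essentially the same route as the paper: you check the five axioms one by one, use Lemma~\ref{le:bisector} and equation~\eqref{eq:D_G} for (A1)--(A2), derive (A4) from Lemmas~\ref{le:alternative} and~\ref{le:star}, and derive (A5) from the fact that the cells partition $V(G)$. Your argument for (A3) — finding an outer-face edge of the cut $E_G(s,t)$ so that its dual is incident to $a_\infty$ — is a slight, equally valid rephrasing of the paper's observation that $s^*$ and $t^*$ lie on opposite sides of $\bis_G(s,t)$ yet are both incident to $a_\infty$.
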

\begin{proof}
	It is clear that the system of abstract bisectors 
	$\{ (\bis_G(s,t),D_G(s,t))\mid s,t\in S, s\not= t\}$
	satisfies axioms (A1) and (A2) of the definition.

	We next show the validity of axiom (A3).
	Consider any two sites $s$ and $t$ of $S$.
	Since $\cell_G(s,S)$ and $\cell_G(t,S)$ are nonempty, 
	also $\GD_G(s,t)$ and $\GD_G(t,s)$ are nonempty.
	Since $s$ and $t$ are located in the outer face of $G$
	the bisector $\bis_G(s,t)$ passes through $a_\infty$. 
	Indeed, the dual faces $s^*$ and $t^*$ have to be in different sides
	of the dual cycle $\bis_G(s,t)$ and, since $s$ and $t$ are on the outer face of $G$,
	that can happen only if $\bis_G(s,t)$ passes through $a_\infty$.
	Thus, if we take the geometric position of $a_\infty$ as $p_\infty$,
	all the bisecting curves pass through $p_\infty$ and axiom (A3) holds.
	
	For axiom (A4), consider any three sites $r,s,t$ of $S$ and let $S'=\{r,s,t\}$.
	As noted in the proof of Lemma~\ref{le:alternative},
	we have 
	\[
		\AVR(s,S')~=~ \left( \bigcup_{v\in \cell_G(s,S')} \overline{v^*}\right)^\circ.
	\]
	Since the vertices of $\cell_G(s,S')$ form a connected subgraph of $G$
	(Lemma~\ref{le:star}),
	the domains $\overline{v^*}$, when $v$ iterates over $\cell_G(s,S')$,
	are glued through the primal edges, and $\AVR(s,S')$ is path connected.
	This proves axiom (A4). 
	
	Axiom (A5) is shown similarly. Following the notation and the
	observations from the previous paragraph, we use that  
	\[
		\overline{\AVR(s,S')}~=~ \overline{\bigcup_{v\in \cell_G(s,S')} v^*}
	\]
	and that $V(G)= \cell_G(r,S')\cup\cell_G(s,S')\cup \cell_G(t,S')$,
	to conclude that
	\[
		\overline{\AVR(r,S')}\cup \overline{\AVR(s,S')}\cup \overline{\AVR(t,S')} 
		~=~ \overline{\bigcup_{v\in V(G)} v^*} ~=~ \RR^2.
	\] 
\end{proof}

The abstract Voronoi diagram $\AVD(S)$ is a plane graph,
and by construction it is contained in the dual graph $G^*$.
An abstract Voronoi vertex corresponds
to a vertex in the dual graph $G^*$.
An abstract Voronoi edge corresponds to
a path in the dual graph $G^*$. More precisely,
any abstract Voronoi edge corresponds
to a portion of a bisector $\bis_G(s,t)$ whose endpoints are vertices of $G^*$. 

We further have the following observation regarding the structure of abstract Voronoi diagrams.
\begin{lemma}
\label{le:1vertex}
	The abstract Voronoi diagram of any $3$ sites in the outer face
	of $G$ has at most one vertex, besides $a_\infty$.
\end{lemma}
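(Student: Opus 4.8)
The plan is to reduce Lemma~\ref{le:1vertex} to a statement about the structure of $\AVD(\{r,s,t\})$ as a plane graph contained in $G^*$, and then count vertices using Euler-type/topological reasoning. By Lemma~\ref{le:admissible}, the system of bisectors for three cofacial sites is admissible, so $\AVD(\{r,s,t\})$ is a plane graph all of whose edges lie in $G^*$, each of its bisectors $\bis_G(s,t)$, $\bis_G(s,r)$, $\bis_G(t,r)$ is a cycle in $G^*$ through $a_\infty$, and each of the three graphic Voronoi regions $\cell_G(\cdot,\{r,s,t\})$ is nonempty and induces a connected subgraph of $G$ (Lemma~\ref{le:star}, genericity/independence). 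The combinatorial picture I want to establish: the three bisectors partition the plane into the three regions $D$-intersections, and the only places two bisectors can meet are $a_\infty$ and at most one other point.

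First I would argue that any vertex of $\AVD(\{r,s,t\})$ other than $a_\infty$ must be a point where all three regions $\AVR(r,\cdot)$, $\AVR(s,\cdot)$, $\AVR(t,\cdot)$ meet simultaneously — i.e. a triple point lying on all three bisectors — since a degree-$\ge 3$ vertex of the diagram of only three sites, away from the forced common point $p_\infty=a_\infty$, must have at least two distinct regions on each side, forcing all three to be incident. Dually, such a point is a face of $G^*$ (a vertex $v$ of $G$) whose incident primal edges are split among all three regions, equivalently a vertex of $G$ having neighbors (via the edge cuts $E_G(s,t)$ etc.) in all three cells. So it suffices to show there is at most one vertex of $G$ whose primal star is cut by all three bisectors.

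The core argument I would use is a connectivity/planarity counting one. Consider the three cycles $\gamma_{sr},\gamma_{st},\gamma_{tr}$ in $G^*$; each separates $V(G)$ into the two corresponding dominance sets (Lemma~\ref{le:bisector}). The three cells are $\cell_G(r,\cdot)=\GD_G(r,s)\cap\GD_G(r,t)$, and symmetrically; each is connected, and they partition $V(G)$. If there were two distinct triple-points $p_1\ne p_2$, then walking along, say, $\gamma_{sr}$ from $a_\infty$ we would encounter both $p_1$ and $p_2$, splitting $\gamma_{sr}$ into arcs; the portions of $\gamma_{sr}$ between consecutive triple-points are exactly the abstract Voronoi edges separating $\AVR(s,\cdot)$ from $\AVR(r,\cdot)$. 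Having two triple-points then forces the region $\AVR(t,\cdot)$ to "touch" the bisector $\bis_G(s,r)$ on two disjoint arcs (between $a_\infty$ and $p_1$, and between $p_1$ and $p_2$, or similar), which would disconnect either $\AVR(t,\cdot)$ or one of $\AVR(s,\cdot),\AVR(r,\cdot)$ — contradicting Lemma~\ref{le:star} (connectedness of each cell) together with axiom (A4). Making this precise is where I expect to spend the effort: I would track how the three cycles cross each other, use that $a_\infty$ is a common point of all three, and conclude that beyond $a_\infty$ they can pairwise agree/cross in at most one further point, hence at most one additional triple-point, hence at most one additional Voronoi vertex.

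The main obstacle, I expect, is handling the interaction of the three dual cycles cleanly: a priori two of the cycles could share long runs of edges or cross multiple times, and I need to rule out that this produces a second genuine degree-$\ge 3$ vertex of $\AVD(\{r,s,t\})$. The cleanest route is probably to invoke the abstract-Voronoi-diagram theory directly — for three admissible sites the diagram is known to be a tree-like structure with the shape of a single "Y" (three unbounded edges meeting at one point, plus possibly the trivial case where one region is empty, excluded here by independence) — and then note that this structural fact about $\AVD(S')$ for $|S'|=3$ is exactly what is being asserted. Alternatively, I can give the self-contained graph-theoretic proof sketched above using only Lemmas~\ref{le:star}, \ref{le:bisector}, and \ref{le:alternative}; given that the paper has developed all the dual-cycle machinery, I would lean toward the self-contained version, with the connectedness of cells doing the heavy lifting to bound the number of triple-points by one.
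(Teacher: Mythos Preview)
Your proposal heads in a workable direction but is far heavier than necessary, and it contains a duality slip: a vertex of $\AVD(\{r,s,t\})$ is a vertex of $G^*$, hence a \emph{face} of $G$, not a vertex of $G$. So your sentence ``such a point is a face of $G^*$ (a vertex $v$ of $G$) whose incident primal edges are split among all three regions'' is inverted. This is fixable, but the whole program of analyzing how the three dual cycles $\gamma_{sr},\gamma_{st},\gamma_{tr}$ intersect, and then using connectedness of cells to rule out a second triple point, is a lot of machinery (and you yourself flag the obstacle that two of the cycles may share long runs of edges).

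The paper's proof bypasses all of this with a direct Euler count. With three sites, $\AVD(S)$ is a plane graph whose faces are exactly the three regions $\AVR(\cdot,S)$. Because each site lies on the outer face of $G$, each region contains the dual face $s^*$, which is incident to $a_\infty$; hence all three faces of $\AVD(S)$ meet at $a_\infty$, so $a_\infty$ is a vertex of degree $\ge 3$. Now a connected plane graph with $3$ faces and all vertices of degree $\ge 3$ has at most $2$ vertices (from $V-E+F=2$ and $2E\ge 3V$ one gets $V\le 2$). One of them is $a_\infty$; at most one other remains. No crossing analysis of bisectors or connectedness argument is needed.

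If completed, your approach would give a more hands-on picture of why the diagram is Y-shaped; the paper's approach, by contrast, is essentially immediate once you remember that $\AVD(S)$ is itself a plane graph with one face per site.
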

\begin{proof}
	Assume that $S$ is the set of 3 sites.
	Since each site $s\in S$ is in the outer face, 
	the abstract Voronoi diagram $\AVD(s,S)$ contains the dual face $s^*$,
	which is incident to $a_\infty$. It follows that 
	all faces have a common vertex in $a_\infty$.
	Since a plane graph with $3$ faces 
	can have at most $2$ vertices of degree at least $3$, the result follows.	
\end{proof}

\subsection{Dealing with holes}
Let $G$ be a plane graph and let $P$ be a connected subgraph of $G$, with the embedding inherited from $G$.
Consider the graphic Voronoi diagram in $P$ \emph{using the distances in $G$}.
Thus, for a set of weighted sites $S$ in $P$ and a site $s\in S$ we are interested in the vertex subsets
\[
	\cell_{P,G}(s,S) ~=~ \cell_G(s,S)\cap V(P) .	
\]
Strictly speaking, $\cell_{P,G}(s,S)$ is a graphic Voronoi region in the complete graph
with vertex set $V(P)$ and edge lengths defined by the distances in $G$. 
We also have the graphic Voronoi diagram $\{ \cell_{P,G}(s,S)\mid s\in S\}$.
However, this interpretation in the complete graph
will not be very useful for us because it does not use planarity.
We would like to represent these Voronoi diagrams using the dual graph of $P^*$.
In particular, we have to define bisectors using the graph $P^*$.

\begin{figure}
	\centering
	\includegraphics[page=5,width=.9\textwidth]{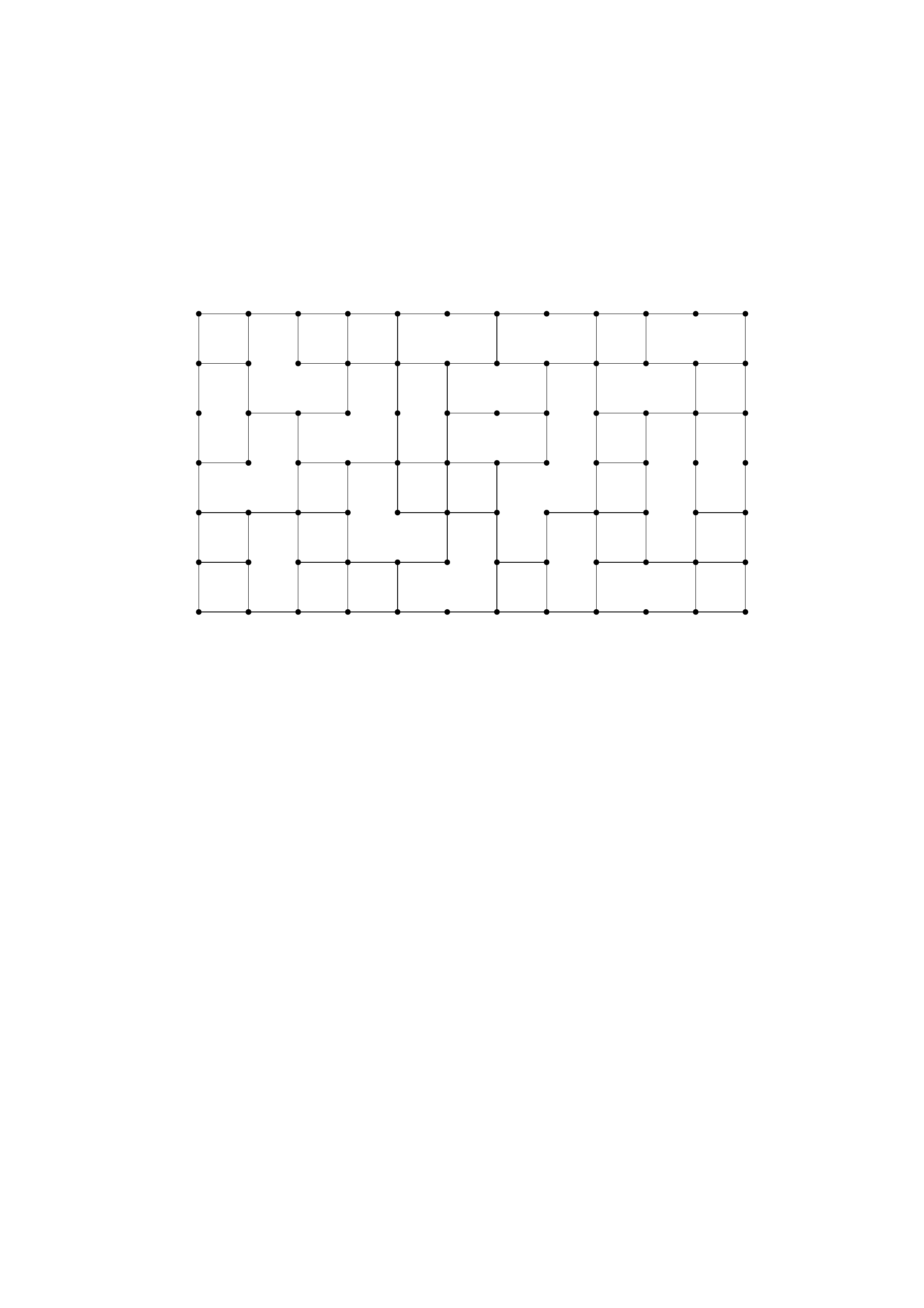}
	\caption{Transforming a cycle in $G^*$ into a non-crossing closed walk in $P^*$. 
			The portion $e_1^*,\dots, e^*_5$ of a cycle in $G^*$ (left) becomes $e_1^*,e_3^*,e_4^*, e^*_5$ with the deletion of $e_2$ (center)
			and keeps being $e_1^*,e_3^*,e_4^*, e^*_5$ with the deletion of $e_6$ (right). }
	\label{fig:holes1}
\end{figure}

\begin{lemma}
\label{le:withholes1}
	Given two sites $s$ and $t$, there is a non-crossing closed walk $\gamma$ in $P^*$ such that
	$\cell_{P,G}(s,\{s,t\})$ and $\cell_{P,G}(t,\{s,t\})$ are precisely 
	$V_{\interior}(\gamma,P)$ and $V_{\exterior}(\gamma,P)$ .
	Moreover, $\gamma$ is obtained from $\bis_G(s,t)$ by deleting the 
	edges of $(E(G)\setminus E(P))^*$ from the sequence of edges defining $\bis_G(s,t)$ .
\end{lemma}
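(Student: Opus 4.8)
The plan is to read $\gamma$ off directly from the cycle $\bis_G(s,t)$ by performing, on the dual side, the same surgery that turns $G$ into $P$, and then to transfer the inside/outside information from Lemma~\ref{le:bisector}.

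First I would invoke Lemma~\ref{le:bisector} for the pair $\{s,t\}$ (which, as throughout, we take to be generic and independent, so that $\bis_G(s,t)$ is a well-defined simple cycle in $G^*$): after possibly swapping the names of $s$ and $t$, we may assume $s\in V_{\interior}(\bis_G(s,t),G)$, and then $V_{\interior}(\bis_G(s,t),G)=\GD_G(s,t)$ and $V_{\exterior}(\bis_G(s,t),G)=\GD_G(t,s)$. By the definition of $\cell_{P,G}$ we have $\cell_{P,G}(s,\{s,t\})=\GD_G(s,t)\cap V(P)$ and $\cell_{P,G}(t,\{s,t\})=\GD_G(t,s)\cap V(P)$, so it suffices to produce a non-crossing closed walk $\gamma$ in $P^*$ with $V_{\interior}(\gamma,P)=\GD_G(s,t)\cap V(P)$ and $V_{\exterior}(\gamma,P)=\GD_G(t,s)\cap V(P)$. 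Here I would use the standard fact that deleting a primal edge amounts to contracting the corresponding dual edge, so that $P^*$ is obtained from $G^*$ by contracting the dual edges in $(E(G)\setminus E(P))^*$ (we may assume $V(P)=V(G)$, since vertices of $G$ outside $P$ play no role in $P^*$). Let $\gamma$ be the image of $\bis_G(s,t)$ under this contraction, i.e.\ the cyclic sequence of dual edges of $\bis_G(s,t)$ from which we delete every $e^*$ with $e\notin E(P)$; contracting edges of a closed walk yields a closed walk, so $\gamma$ is a closed walk in $P^*$, exactly as in the ``Moreover'' part of the statement.

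It remains to check that $\gamma$ is non-crossing and has the claimed interior and exterior. I would realize the contraction geometrically: there is a continuous surjection $\Phi\colon\RR^2\to\RR^2$ that is the identity outside an arbitrarily small neighbourhood of $\bigcup (E(G)\setminus E(P))^*$, collapses each such dual edge to a point, and carries the embedding of $G^*$ to an embedding of $P^*$ in which $\gamma$ is drawn as $\Phi(\bis_G(s,t))$. Since $\bis_G(s,t)$ is a simple closed curve and $\Phi$ only identifies points lying along the contracted dual edges, $\Phi(\bis_G(s,t))$ is a closed curve that self-touches only at images of contracted dual edges and never properly crosses itself, so it admits an infinitesimal simple perturbation; hence $\gamma$ is non-crossing. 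Finally, each dual edge meets $G$ only in the interior of its primal edge, so the contracted dual edges avoid $V(P)$ and $\Phi$ is the identity near every vertex of $P$; consequently a vertex $v\in V(P)$ lies in the bounded component of $\RR^2\setminus\Phi(\bis_G(s,t))$ if and only if it lies in $\interior(\bis_G(s,t))$, i.e.\ if and only if $v\in\GD_G(s,t)$ by Lemma~\ref{le:bisector}. This gives $V_{\interior}(\gamma,P)=\GD_G(s,t)\cap V(P)$ and, symmetrically, $V_{\exterior}(\gamma,P)=\GD_G(t,s)\cap V(P)$, as desired.

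The step I expect to be the main obstacle is the claim that $\Phi(\bis_G(s,t))$ is non-crossing, equivalently that contracting an arbitrary set of edges of a plane graph sends a simple cycle to a non-crossing closed walk. The subtlety is that contracting dual edges that do \emph{not} lie on $\bis_G(s,t)$ can identify non-consecutive vertices of the cycle, so $\gamma$ is genuinely a walk with repeated vertices rather than a cycle, and one must verify that the cyclic order in which $\gamma$ traverses each such repeated vertex is the non-crossing one; this is precisely where planarity of $G$, and hence the existence of $\Phi$, is used. Beyond this, the only thing left to dispose of is the degenerate case in which every edge of $\bis_G(s,t)$ gets contracted: because $P$ is connected this can happen only when one of $\cell_{P,G}(s,\{s,t\})$, $\cell_{P,G}(t,\{s,t\})$ is empty, and then $\gamma$ is the empty walk and the statement holds trivially.
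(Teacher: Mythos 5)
Your proof is correct and follows essentially the same route as the paper's: both arguments observe that deleting primal edges corresponds to contracting dual edges, and then argue that the resulting closed walk in $P^*$ remains non-crossing and preserves the interior vertices of $V(P)$. The only presentational difference is that the paper carries out the contractions one edge at a time and verifies the claim by induction, whereas you perform them all at once via a single quotient map $\Phi$; these are interchangeable phrasings of the same topological argument.
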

\begin{proof}
	Let $e^*_1,\dots,e^*_k$ be the sequence of edges of $G^*$ that define $\bis_G(s,t)$.
	If in this sequence we delete all appearances of $e^*$ for $e\in E(G)\setminus E(P)$,
	then we obtain a subsequence $(e'_1)^*,\dots ,(e'_\ell)^*$ that defines a closed walk $\gamma$
	in $P^*$. See Figure~\ref{fig:holes1} for a small example and Figure~\ref{fig:holes2}
	for a larger example. The resulting closed walk is non-crossing, as can be seen by induction
	on the number of deleted edges. Indeed, if a plane graph $H'$ is obtained from a plane
	graph $H$ by deleting an edge $e$, then $(H')^*$ is obtained from $H^*$ by contracting $e^*$.
	Any non-crossing walk in $H^*$ remains non-crossing when contracting the edge $e^*\in E(H^*)$,
	and the interior of the walk contains exactly the same subset of the vertices of $H'$.
	Thus, it also follows by induction, that the vertices of $V(P)$ in the interior of $\bis_G(s,t)$
	remain in the interior during the contractions of the edges $e^*$ for $e\in E(G)\setminus E(P)$, 
	and therefore $\cell_{P,G}(s,S) ~=~ \cell_G(s,S)\cap V(P) = V_{\interior}(\gamma,P)\cap V(P)= V_{\interior}(\gamma,P)$.
	The same argument works for $V_{\exterior}(\gamma,P)$
	
	Note that our description of the transformation from $\bis_G(s,t)$ to $\gamma$
	using dual edges is simpler than a description using dual vertices. 
	This is so because the relevant faces may also change with deletions 
	of edges that are not crossed by $\bis_G(s,t)$.	

	The assumption that $P$ is connected is needed. Otherwise $P$ has faces that are not simply-connected,
	and closed walks of $G^*$ may become empty in $P^*$ because they do not cross any edge of $P$.
	Also, when $P$ has multiple components, there are curves that intersect the same edges of $P$ in the same order,
	but contain a different set of connected components in their interior. Thus, additional information
	beyond the edges of $P^*$ would be needed to encode the curves.
\end{proof}

\begin{figure}
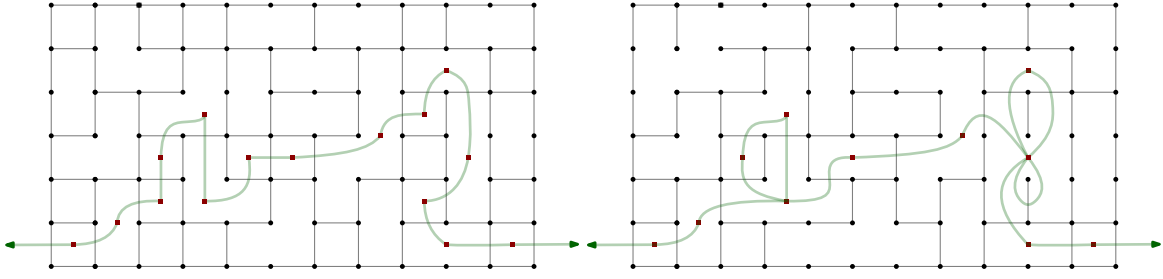

	\centering
	\includegraphics[page=3,width=.49\textwidth]{holes}
	\includegraphics[page=4,width=.49\textwidth]{holes}
	\caption{Transforming a cycle in $G^*$ into a non-crossing closed walk in $P^*$.}
	\label{fig:holes2}
\end{figure}

We use $\bis_{P,G}(s,t)$ for the non-crossing closed $\gamma$ in $P^*$ defined by Lemma~\ref{le:withholes1}.
To use abstract Voronoi diagrams we have the following technical problem:
in general, the curve $\bis_{P,G}(s,t)$ is not simple. 
We can work around this symbolically, as follows.
Combinatorially, we keep encoding the bisector as a closed walk in the dual graph $P^*$.
However, the geometric curve associated with a description goes out
of the dual graph to become simple. For each two consecutive edges $aa'$ and $a'a''$
of each such closed walk, we always make a small shortcut in a small
neighborhood of $a'$ that avoids $a'$. 
For example, we can reroute the arcs along small concentric circles,
where we use a larger radius when the distance along the face is smaller.
See Figure~\ref{fig:transformation} for an example.
There are different ways to do  this rerouting.
In any case, the algorithm of Theorem~\ref{thm:AVD} to build the abstract Voronoi diagram never uses coordinates.
In such a way we obtain true geometric simple curves associated to each such bisector.

The transformation is not made for the outer face. 
Indeed, to use the technology of abstract Voronoi diagrams,
we need that all the bisectors pass through a common point $p_\infty$, which is $a_\infty$.
Thus, we do not want to make any rerouting at the outer face. This is not a problem
if each bisector $\bis_{P,G}(s,t)$ passes exactly once through the vertex $a_\infty$.
If $G$ and $P$ have the same outer face, then $\bis_{P,G}(s,t)$ only
passes once through $a_\infty$. Thus, we will restrict attention
to the case when $G$ and $P$ have the same outer face.

\begin{figure}
	\centering
	\includegraphics[page=6,scale=1.2]{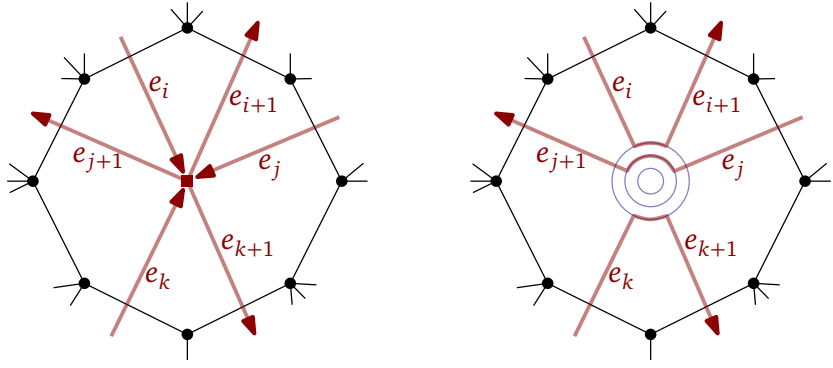}
	\caption{Transforming a closed walk in $P^*$ into a simple curve.}
	\label{fig:transformation}
\end{figure}

The rest of the presentation used for the case $G=P$ goes essentially unchanged.
However, note that Lemma~\ref{le:star} does not hold in this case.
The reason is that the shortest path from $s$ to $x$ may have edges outside $E(P)$.
An easier way to visualize things is to consider the creation of abstract Voronoi diagrams
in the graph $G^*$ and then consider the deletion of $E(G)\setminus E(P)$ in $G$ (and in $G^*$).
To summarize, we obtain the following.

\begin{lemma}
\label{le:admissible_holes}
	Let $G$ be a plane graph, let $P$ be a connected subgraph of $G$ such that $G$ and $P$
	have the same outer face.
	Let $S$ be a generic, independent set of sites located
	in the outer face of $P$.
	Let $a_\infty$ be the vertex of $G^*$ dual to the outer face of $G$.
	Then the system of abstract bisectors 
	$\{ (\bis_{P,G}(s,t),D_{P,G}(s,t))\mid s,t\in S, s\not= t\}$
	is admissible, where $a_\infty$ plays the role of $p_\infty$ in axiom (A3).

	The abstract Voronoi diagram of any $3$ sites in the outer face
	of $P$ has at most one vertex, besides $a_\infty$.
	
	For each $s\in S$, we have $\cell_{P,G}(s,S)= V(P)\cap \AVR(s,S)$.
\end{lemma}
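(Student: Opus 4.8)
The plan is to reduce everything to the already-established case $G = P$ (Lemmas~\ref{le:admissible} and~\ref{le:1vertex}) via the edge-deletion/dual-contraction correspondence of Lemma~\ref{le:withholes1}. Concretely, I would first build the abstract Voronoi diagram in $G^*$ for the sites $S$, viewed as sites in the outer face of $G$ (this is legitimate since $G$ and $P$ share the outer face, so the sites lie in the outer face of $G$ too), and then track what happens to each bisector as we delete the edges of $E(G) \setminus E(P)$. By Lemma~\ref{le:withholes1}, deleting these edges turns each simple dual cycle $\bis_G(s,t)$ into the non-crossing closed walk $\bis_{P,G}(s,t)$, and the interior/exterior vertex sets behave as $\cell_{P,G}(s,\{s,t\})$ and its complement. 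The symbolic rerouting described just before the statement (shortcutting around each repeated dual vertex, except at $a_\infty$) turns $\bis_{P,G}(s,t)$ back into a genuine simple curve through $a_\infty$, so the pair $(\bis_{P,G}(s,t), D_{P,G}(s,t))$ is again of the form required to define an abstract Voronoi diagram.

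For axioms (A1)–(A3): (A1) and (A2) are immediate from the symmetry $E_G(s,t) = E_G(t,s)$ and the fact that, after rerouting, $\RR^2$ is partitioned into the simple curve $\bis_{P,G}(s,t)$ and the two open regions on its sides. For (A3), I would argue exactly as in Lemma~\ref{le:admissible}: since $s$ and $t$ are on the outer face of $P$ and $P$ shares its outer face with $G$, the dual faces $s^*$ and $t^*$ both touch $a_\infty$, and a closed walk in $P^*$ separating them must pass through $a_\infty$; the rerouting leaves the outer face untouched, and because $G$ and $P$ have the same outer face, $\bis_{P,G}(s,t)$ passes through $a_\infty$ exactly once (this is precisely the point for which the shared-outer-face hypothesis was introduced). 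So all reouted bisectors pass through the single point $p_\infty = a_\infty$.

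For axioms (A4)–(A5), the identity $\AVR(s,S') = \bigl(\bigcup_{v \in \cell_{P,G}(s,S')} \overline{v^*}\bigr)^\circ$ should be re-derived exactly as in Lemma~\ref{le:alternative}, using the analogue $D_{P,G}(s,t) = \bigl(\bigcup_{v \in \GD_{P,G}(s,t)} \overline{v^*}\bigr)^\circ$ of equation~\eqref{eq:D_G} together with Lemma~\ref{le:intersection} applied in the complete graph on $V(P)$ with lengths given by distances in $G$. The one real subtlety — and the step I expect to be the main obstacle — is connectivity: Lemma~\ref{le:star} fails here because a shortest $s$-to-$x$ path in $G$ may leave $P$, so one cannot directly conclude that $\cell_{P,G}(s,S')$ induces a connected subgraph. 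I would resolve this by working in $G^*$: the region $\AVR(s,S')$ as computed in the abstract Voronoi diagram of $G$ is path-connected by Lemma~\ref{le:admissible}, and deleting edges of $G$ corresponds to contracting dual edges, an operation that cannot disconnect a connected planar region nor change which primal vertices lie inside it. Hence path-connectedness of $\AVR(s,S')$ and the covering property $\bigcup_{p \in S'} \overline{\AVR(p,S')} = \RR^2$ are inherited from the $G = P$ case, giving (A4) and (A5). The "at most one vertex besides $a_\infty$" claim then follows verbatim from the argument of Lemma~\ref{le:1vertex} (three faces, each incident to $a_\infty$, so at most two vertices of degree $\ge 3$), and the final identity $\cell_{P,G}(s,S) = V(P) \cap \AVR(s,S)$ is read off from the displayed formula for $\AVR(s,S)$ exactly as in Lemma~\ref{le:alternative}, since the only vertex of $V(P)$ in $\overline{v^*}$ is $v$ itself.
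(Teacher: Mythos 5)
Your proposal is correct and takes essentially the same route as the paper: start from the admissible system of bisectors in $G$ (Lemma~\ref{le:admissible}), pass to $P$ via the edge-deletion/dual-contraction correspondence of Lemma~\ref{le:withholes1} together with the symbolic rerouting, and inherit the axioms. The one stylistic difference is that the paper dispatches all claims at once by observing that each dual-edge contraction, with the rerouted simple curves, is realized by a homeomorphism of the plane onto itself, so admissibility, the vertex count, and the identity $\cell_{P,G}(s,S)=V(P)\cap\AVR(s,S)$ are all preserved because they are topological; you instead re-verify (A1)--(A5) individually and invoke the contraction-preserves-connectivity fact only for (A4)--(A5). Your treatment of the real subtlety you flag (that Lemma~\ref{le:star} no longer applies directly because shortest paths can leave $P$) lands on precisely the resolution the paper uses, just phrased less uniformly. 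No gap.
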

\begin{proof}
	Because of Lemma~\ref{le:admissible}, the system of abstract bisectors 
	$\{ (\bis_G(s,t),D_G(s,t))\mid s,t\in S, s\not= t\}$
	is admissible. Because of Lemma~\ref{le:withholes1}, the system of abstract bisectors
	$\{ (\bis_{P,G}(s,t),D_{P,G}(s,t))\mid s,t\in S, s\not= t\}$ is obtained
	by deleting the edges of $(E(G)\setminus E(P))^*$ from the description of the bisectors,
	which amounts to contracting those edges in the dual graph.
	Consider a contraction of a dual edge and how it transforms the bisectors. 
	If we keep the bisectors as simple curves (not self-touching), as discussed above, 
	then the transformation of the bisectors during an edge contraction 
	can be done with a homeomorphism of the plane onto itself.
	Since the properties of being an admissible system of abstract bisectors are topological, 
	we obtain that $\{ (\bis_{P,G}(s,t),D_{P,G}(s,t))\mid s,t\in S, s\not= t\}$
	is admissible and $a_\infty$ plays the role of $p_\infty$. 
	
	The number of vertices does not change with the homeomorphism.
	Also, for any $s\in S$, the set $V(P)\cap \AVR(s,S)$ does not change during the homeomorphism,
	and therefore $\cell_{P,G}(s,S)= V(P)\cap \AVR(s,S)$.
\end{proof}

\paragraph{Remark.} 
Instead of using rerouting in the dual graph, another alternative 
is to use a variant of the line graph of the dual graph. 
The variant is designed to ensure that all the bisectors pass through $a_\infty$,
so that we can use abstract Voronoi diagrams.
Let us spell out an adapted construction of the graph, which we denote by $L_\infty$.
The vertex set of $L_\infty$ is $E(G)\cup \{ a_\infty\}$. Thus, each edge and the 
vertex $a_\infty$ corresponding to the outer face of $G$ are the vertices of $L_\infty$.
For each face $f$ of $G$ that is not the outer face, we put an edge in $L_\infty$ between each pair
of edges that appear in $f$.
For each edge $e$ on the outer face of $G$, we put an edge in $L_\infty$ between
$e$ and $a_\infty$. 
This finishes the description of $L_\infty$. The graph $L_\infty$ has a natural drawing inherited from the
embedding of $G$, that is not necessarily an embedding. ($L_\infty$ has large cliques when $G$ 
has large faces.) However, we can use a drawing of $L_\infty$ to represent the curves.
See Figure~\ref{fig:Linfty} for an example of (a drawing of) $L_\infty$.

Any non-crossing walk in $G^*$ that uses each edge at most once corresponds to
a cycle in $L_\infty$ because the portion of the walk inside a face $f$ of $G$ 
corresponds to a non-crossing matching inside $f$ between some edges on the boundary of $f$, 
and this matching is part of $L_\infty$. Intuitively, the edges of $L_\infty$ represent
shortcuts connecting edges of $G$ directly without passing through dual vertices.

\begin{figure}
	\centering
	\includegraphics[page=7,scale=1.15]{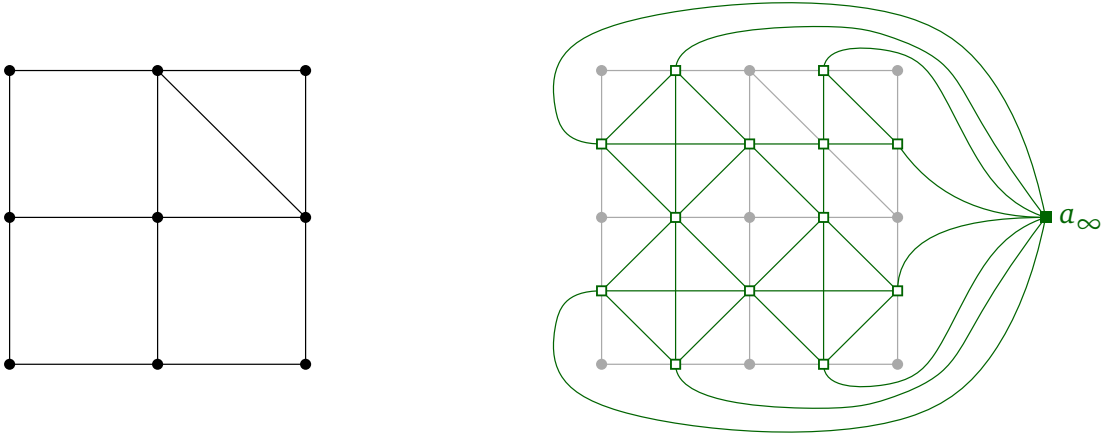}
	\caption{Left: a plane graph $G$. Right: a drawing of $L_\infty$ in green (and $G$ in gray).}
	\label{fig:Linfty}
\end{figure}

\section{Algorithmic aspects of Voronoi diagrams in planar graphs}
\label{sec:algorithms}
For the rest of this section, we assume that $G$ is a connected plane graph, 
$P$ is a connected subgraph of $G$, and 
the outer face of $P$ and $G$ coincide.
We use $r$ for the number of vertices in $P$.
Let $X$ be a set of $b$ vertices in the outer face of $P$.
We are interested in placing the sites at the vertices of $X$.
In this section we assume that the distances $d_G(\cdot,\cdot)$ 
from each vertex of $X$ to each vertex of $P$ are known and available.
We remark that the arcs of $G$ may have negative weights, but $G$ should not have negative cycles.

We next provide tools to manipulate portions of the bisectors and construct
Voronoi diagrams in planar graphs.

\begin{lemma}
\label{le:algbisector}
	For any two generic, independent sites $\{s,t\}$ placed at $X$
	we can compute $\bis_{P,G}(s,t)$ in $O(r)$ time.
\end{lemma}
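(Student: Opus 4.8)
The plan is to compute the bisector $\bis_{P,G}(s,t)$ by first identifying, for every vertex $v$ of $P$, whether $v$ lies in $\GD_G(s,t)$ or in $\GD_G(t,s)$, and then reading off the dual edges of the resulting cut. Since the distances $d_G(s,\cdot)$ and $d_G(t,\cdot)$ from the two site locations to all vertices of $P$ are assumed to be available, the classification of each vertex $v$ is a single comparison: $v\in\GD_G(s,t)$ iff $w_s+d_G(s,v)\le w_t+d_G(t,v)$. This takes $O(r)$ time over all $r$ vertices of $P$. By genericity the inequality is never an equality, and by independence both sides of the partition are nonempty.

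Next I would extract the edge set $E_P(s,t)$ of $P$ consisting of those edges $uv\in E(P)$ with $u$ and $v$ on opposite sides of this partition; scanning the $O(r)$ edges of $P$ once suffices. The dual edges $E_P(s,t)^*$ then need to be assembled into the cyclic sequence that describes the non-crossing closed walk $\gamma=\bis_{P,G}(s,t)$ in $P^*$. By Lemma~\ref{le:withholes1}, $\gamma$ is exactly what one gets from the dual cycle $\bis_G(s,t)$ in $G^*$ by deleting the edges dual to $E(G)\setminus E(P)$; so $\gamma$ is a closed walk whose edge multiset is $E_P(s,t)^*$. To recover the order, I would walk around each face $f$ of $P$ that is incident to edges of $E_P(s,t)$ and link consecutive such edges along the facial boundary; because $P$ is connected and $G$ and $P$ share the outer face, this local facial-matching information is enough to thread the edges of $E_P(s,t)^*$ into a single closed walk through the dual vertices (faces) of $P^*$, traversed clockwise. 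The total work is proportional to the sizes of the faces touched, hence $O(r)$.

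The step I expect to be the main obstacle is the bookkeeping needed to turn the \emph{unordered} cut $E_P(s,t)^*$ into the correctly \emph{ordered}, clockwise, non-crossing closed walk that subsequent sections (Theorems~\ref{thm:sumweights} and~\ref{thm:maxweight}) expect as input. The subtlety is that after deleting the edges not in $P$, several arcs of $\bis_G(s,t)$ that were separated by dual vertices of $G^*$ now meet at a single face of $P$, and one must pick the rerouting (the ``shortcut'' near each dual vertex, cf.\ Figure~\ref{fig:transformation}) consistently so that the walk stays non-crossing and so that $s$ ends up in $V_{\interior}(\gamma,P)$ with the clockwise orientation. This is handled by, at each face $f$, pairing up the incident cut-edges according to the cyclic order in which they appear around $f$ — a non-crossing perfect matching on the boundary of $f$ — which is precisely the structure exploited by $L_\infty$; once every face contributes its matching, the pieces glue into a unique closed walk, and a final orientation check against the side containing $s$ fixes the clockwise direction. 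All of this is linear in $r$, giving the claimed $O(r)$ bound.
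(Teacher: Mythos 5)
Your proposal matches the paper's proof: classify each vertex $v$ of $P$ by comparing $w_s+d_G(s,v)$ and $w_t+d_G(t,v)$ (genericity rules out ties, independence ensures both sides are nonempty), mark the edges of $P$ with one endpoint on each side, and assemble their duals into the closed walk in $P^*$. The paper's own proof is in fact terser than yours: after the classification it dispatches the assembly with the single sentence ``construct the closed walk $\bis_{P,G}(s,t)$ using the dual graph,'' so your discussion of threading the cut edges through the faces fills in detail the paper leaves implicit. One small caveat on that elaboration: at a hole of $P$ with four or more incident cut edges, the non-crossing matching consistent with the side labels of $V(P)$ is not uniquely determined by that local information (recovering the exact pairing prescribed by Lemma~\ref{le:withholes1} would require knowing how $\bis_G(s,t)$ runs inside the hole), so the walk your procedure emits could differ from $\bis_{P,G}(s,t)$ in its traversal order. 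Any consistent choice, however, has the same edge multiset and hence the same $V_{\interior}(\gamma,P)$, which is what the downstream data structures ultimately consume; the paper glosses over this point as well, so your argument is not weaker than the one given.
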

\begin{proof}
	For each vertex $x\in V(P)$, we compare 
	$w_s+d_G(s,x)$ and $w_t+d_G(t,x)$ to decide whether 
	$x$ belongs to $\cell_{P,G}(s,\{ s,t \})$ or $\cell_{P,G}(t,\{ s,t \})$.
	Note that $w_s+d_G(s,x)\neq w_t+d_G(t,x)$ because we assume generic sites.
	The sets $\cell_{P,G}(s,\{ s,t \})$ and $\cell_{P,G}(t,\{ s,t \})$ are nonempty
	because we assume independent sites.
	Now we can mark the edges of $P$ with one endpoint in
	each of those sets and construct the closed walk $\bis_{P,G}(s,t)$ using the dual graph.
\end{proof}

\begin{lemma}
\label{le:2sites}
	Consider any two vertices $\{v_s,v_t\}\subseteq X$ as placements of sites.
	Consider the family of bisectors $\bis_{P,G}((v_s,w_s),(v_t,w_t))$
	as a function of the weights $w_s$ and $w_t$.
	There are at most $O(r)$ different bisectors.
	We can compute and store all the bisectors in $O(r^2)$ time
	such that, given two values $w_s$ and $w_t$,
	the corresponding representation of $\bis_{P,G}((v_s,w_s),(v_t,w_t))$
	is accessed in $O(\log r)$ time.
\end{lemma}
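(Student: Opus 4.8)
The plan is to exploit the fact that $\bis_{P,G}((v_s,w_s),(v_t,w_t))$ depends on the two weights only through their difference $\Delta := w_t - w_s$. Indeed, by Lemma~\ref{le:withholes1} the bisector is completely determined by the partition of $V(P)$ into $\cell_{P,G}(s,\{s,t\})$ and $\cell_{P,G}(t,\{s,t\})$, and a vertex $x\in V(P)$ lies in $\cell_{P,G}(s,\{s,t\})$ exactly when $w_s + d_G(v_s,x) \le w_t + d_G(v_t,x)$, i.e. when $\rho(x) \le \Delta$, where $\rho(x) := d_G(v_s,x) - d_G(v_t,x)$. Since the distances $d_G(v_s,\cdot)$ and $d_G(v_t,\cdot)$ are available, each value $\rho(x)$ is known.

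First I would sort the $r$ values $\{\rho(x)\mid x\in V(P)\}$ and extract the distinct ones, say $\rho_1 < \rho_2 < \dots < \rho_k$ with $k\le r$, in $O(r\log r)$ time. The cell partition above, and hence the bisector, is constant as $\Delta$ ranges over any one of the open intervals $(\rho_i,\rho_{i+1})$; moreover both cells are nonempty precisely when $\rho_1 \le \Delta < \rho_k$ (that is, when the site pair is independent), and genericity of the pair excludes $\Delta\in\{\rho_1,\dots,\rho_k\}$. Thus there are at most $k-1 = O(r)$ relevant intervals and, correspondingly, at most $O(r)$ distinct bisectors, which gives the first claim.

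Next I would compute one bisector per relevant interval. For the interval $(\rho_i,\rho_{i+1})$, pick any representative value $\delta_i$ strictly between $\rho_i$ and $\rho_{i+1}$, take the (generic, independent) site pair $s=(v_s,0)$ and $t=(v_t,\delta_i)$, and invoke Lemma~\ref{le:algbisector} to produce $\bis_{P,G}(s,t)$ in $O(r)$ time. Over the $O(r)$ intervals this costs $O(r^2)$ time; each bisector is a non-crossing closed walk in $P^*$ with $O(r)$ edges, so the total storage is $O(r^2)$. Alongside the sorted array $\rho_1,\dots,\rho_k$ I would store, for each interval, a pointer to the corresponding bisector.

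Finally, a query $(w_s,w_t)$ is answered by computing $\Delta = w_t - w_s$, performing a binary search in $\rho_1,\dots,\rho_k$ to locate the interval containing $\Delta$ (and reporting the degenerate cases where the pair fails to be generic or independent), and returning the stored bisector; this takes $O(\log r)$ time. The only real subtlety is keeping the genericity/independence conventions consistent, so that the bisector stored for the interval $(\rho_i,\rho_{i+1})$ is literally the same dual closed walk as $\bis_{P,G}((v_s,w_s),(v_t,w_t))$ for every $(w_s,w_t)$ with $w_t - w_s$ in that interval; but this is immediate from Lemma~\ref{le:withholes1}, since the cell partition, and therefore the dual closed walk, depends on the weights only through $\Delta$.
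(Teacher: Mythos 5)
Your proposal is correct and takes essentially the same approach as the paper's proof: reduce to the single parameter $\Delta=w_t-w_s$, observe that the partition of $V(P)$ into the two cells is determined by comparing $\Delta$ against $\rho(x)=d_G(v_s,x)-d_G(v_t,x)$, sort these $O(r)$ threshold values, compute one representative bisector per interval via Lemma~\ref{le:algbisector}, and answer queries by binary search. The only cosmetic difference is that the paper justifies the $O(r)$ bound on the number of distinct bisectors by noting that they are nested and differ by at least one enclosed vertex, whereas you count the $O(r)$ intervals of $\Delta$ directly; these are two ways of saying the same thing.
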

\begin{proof}
	From the definition it is clear that
	\[
		\bis_{P,G}((v_s,w_s),(v_t,w_t))= 
		\bis_{P,G} ((v_s,0),(v_t,w_t-w_s)).
	\]
	Thus, it is enough to consider the bisectors
	$\bis_{P,G}((v_s,0),(v_t,w))$ parameterized by $w\in \RR$. 
	Each bisector $\bis_G((v_s,0),(v_t,w))$ is a cycle in 
	the dual graph $G^*$
	and the cycles are nested: as $w$ increases, the graphic dominance
	region $\GD_G(s,t)$ monotonically grows
	and $D_G(s,t)$ also monotonically increases.
	The same happens with $\bis_{P,G}((v_s,0),(v_t,w))$: as $w$ increases,
	the bisectors $\bis_{P,G}((v_s,0),(v_t,w))$ are nested and the region on
	one side monotonically grows.
	Since any two different non-crossing closed walks $\bis_{P,G}((v_s,0),(v_t,w))$
	are nested and must differ by at least one vertex of $P$ that is enclosed,
	there are at most $O(r)$ different bisectors.

	For each vertex $x\in V(P)$, define the value $\eta_x=d_G(s,x)-d_G(t,x)$.
	The vertex $x$ is in $\GD_G(s,t)$ when $w<\eta_x$,
	in $\GD_G(t,s)$ when $w>\eta_x$,
	and we have a degenerate (non-generic) case when $w=\eta_x$.
	Thus, we can compute the values $\{ \eta_x \mid x\in V(P)\}$,
	sort them and store them sorted in a table. 
	For each $w$ between two consecutive values of $\{ \eta_x \mid x\in V(P)\}$
	we compute the bisector using Lemma~\ref{le:algbisector}
	and store it with its predecessor of $\{ \eta_x \mid x\in V(G)\}$.
	Given a query with shifts $w_s, w_t$,
	we use binary search in $O(\log r)$ time 
	for the value $w_t-w_s$ and locate the relevant bisector.
\end{proof}

As mentioned before, an abstract Voronoi vertex is just a vertex of $P^*$
and an abstract Voronoi edge is encoded in the dual graph $P^*$
by a tuple $(s,t,aa',bb')$, meaning that the edge is the portion of $\bis_{P,G}(s,t)$
starting with the dual edge $aa'$ and finishing with the the dual edge $bb'$
in some prescribed order, like for example the clockwise order
of $\bis_{P,G}(s,t)$.

\begin{lemma}
\label{le:3sites}
	Consider any three vertices $\{v_q,v_s,v_t\}\subseteq X$ as placements of sites.
	Consider the family of abstract Voronoi diagrams for the sites $q=(v_q,w_q)$,
	$s=(v_s,w_s)$, and $t=(v_t,w_t)$
	as a function of the weights $w_q$, $w_s$ and $w_t$.
	We can compute and store all those Voronoi diagrams in $O(r^2)$ time
	such that, given the values $w_q$, $w_s$ and $w_t$,
	the corresponding representation of the abstract Voronoi diagram
	of those $3$ sites is accessed in $O(\log r)$ time.
\end{lemma}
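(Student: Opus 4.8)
The plan is to remove one degree of freedom from the weights, observe that the diagram is piecewise constant over an arrangement of $O(r)$ lines, and build one diagram per cell of that arrangement by an incremental traversal.

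First I would reduce to two parameters. As at the start of the proof of Lemma~\ref{le:2sites}, shifting all three weights by $-w_q$ changes no bisector, hence no abstract Voronoi region, so the diagram depends only on $\alpha := w_s-w_q$ and $\beta := w_t-w_q$; moreover $\bis_{P,G}(q,s)$ depends only on $\alpha$, $\bis_{P,G}(q,t)$ only on $\beta$, and $\bis_{P,G}(s,t)$ only on $\beta-\alpha$, and by Lemma~\ref{le:2sites} each is a step function of its argument with $O(r)$ breakpoints, every breakpoint having the form $d_G(v_q,x)-d_G(v_s,x)$ (or the analogous difference) for a single vertex $x$ of $P$. Collecting these breakpoints gives an arrangement $\mathcal A$ of three pencils of $O(r)$ parallel lines in the $(\alpha,\beta)$-plane — vertical, horizontal, and of slope $1$ — with $O(r^2)$ cells. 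On each cell all three pairwise bisectors are fixed, and since the abstract Voronoi diagram of three sites is determined by its three pairwise bisectors — its only vertex besides $a_\infty$ lies on all three of them (Lemma~\ref{le:admissible_holes}) — the diagram is constant on each cell. A query $(w_q,w_s,w_t)$ then reduces to three binary searches, one per pencil, to locate the cell of $(w_s-w_q,w_t-w_q)$ in $O(\log r)$ time, after which we return the stored diagram, say from a dictionary keyed by the triple of interval indices and using $O(r^2)$ space.

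Producing the $O(r^2)$ cell diagrams within $O(r^2)$ total time is the heart of the matter. I would compute one diagram from scratch in $O(r)$ time — the winning-site $3$-colouring of $V(P)$, the three bisectors by Lemma~\ref{le:algbisector}, and the interior Voronoi vertex $a_1$ (if it exists) together with the three tuples $(\cdot,\cdot,aa',bb')$ encoding the Voronoi edges — and then traverse the cells of $\mathcal A$ along shared arrangement edges (a DFS on the cell-adjacency graph). Each step crosses one line $\ell\in\mathcal A$, which is the breakpoint line of a single vertex $x$ of $P$: exactly one of the three bisectors changes, and only by moving $x$ across it, so only the dual edges of that bisector near $x$, the colour of $x$, and the portion of the diagram near $x$ can change, and these I would splice in $O(\deg_P(x))$ time. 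For the running time, each line $\ell$ is cut by the other $O(r)$ lines into $O(r)$ arrangement edges, each crossed $O(1)$ times in the DFS, so $\ell$ is responsible for $O(r)$ steps of cost $O(\deg_P(x))$ each; since every vertex defines at most three lines, the total is $\sum_{x\in V(P)}O(r\cdot\deg_P(x)) = O(r\cdot|E(P)|)=O(r^2)$, as $P$ is planar. Building $\mathcal A$ and the dictionary takes $O(r^2)$ as well.

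I expect the main obstacle to be exactly this local update: showing that after a single-vertex recolouring, relocating $a_1$ and re-splicing the one active bisector arc stay confined to dual edges near $x$, and handling the transitions in which $a_1$ is created or destroyed (the three-site diagram has one or zero vertices besides $a_\infty$) within the same amortised budget. The lever is that $a_1$ is pinned down by any two of the three bisectors: two distinct pairwise bisectors meet only at $a_\infty$ and at $a_1$, since a further common point would lie on all three bisectors and hence be a second interior Voronoi vertex, contradicting Lemma~\ref{le:admissible_holes}. Thus when one bisector moves, $a_1$ either remains the common point of the two frozen bisectors or ceases to be a Voronoi vertex, and both cases must be detected and repaired locally. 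Degeneracies — several breakpoints coinciding on one line, or non-generic or non-independent site placements — are absorbed by the infinitesimal perturbation of lengths already assumed in the paper, or by performing coinciding recolourings one at a time, which does not change the counts above.
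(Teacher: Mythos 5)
Your proposal reproduces the paper's high-level plan faithfully: normalize to the two differences $\alpha=w_s-w_q$, $\beta=w_t-w_q$; observe that each pairwise bisector is a step function of a single linear form with breakpoints $\eta_x^{\cdot\cdot}$, yielding an arrangement $\mathcal A$ of $O(r)$ lines in three parallel pencils and $O(r^2)$ cells on which the diagram is constant; build point location for $O(\log r)$ queries; and traverse $\mathcal A$ to produce all cell diagrams in $O(r^2)$ total time. That is the paper's decomposition exactly, and your binary-search-per-pencil locator is a harmless variant of its standard point-location step.

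The two proofs diverge at the traversal mechanism, and that is where your argument has a gap. The paper walks along each line $\ell_x^{qs}$ (two infinitesimal offsets), keeps the bisector $\bis_{P,G}(q,s)$ fixed along that walk, computes a threshold $\zeta(e_i)$ for each dual edge $e_i$ of that bisector, and then uses that the $\zeta(e_i)$ are monotone along the bisector (because the $t$-region always occupies a contiguous arc of it) to advance a pointer and read off the Voronoi-edge portion of $\bis_{P,G}(q,s)$ for each cell met, in $O(1)$ amortized time per cell. You instead propose a DFS over the cell-adjacency graph with an $O(\deg_P(x))$ local splice per crossing of $\ell_x^{\cdot\cdot}$. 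You correctly identify this splice as ``the main obstacle,'' but the lever you offer does not close it. First, the inference ``a further common point would lie on all three bisectors and hence be a second interior Voronoi vertex'' is not valid: a dual vertex can lie on all three bisectors without being a Voronoi vertex of the three-site diagram, since the face need only carry vertices realizing both outcomes of each \emph{pairwise} comparison, which is strictly weaker than carrying vertices belonging to all three Voronoi cells (e.g., a face whose incident vertices have site orderings $q{<}s{<}t$ and $t{<}s{<}q$ lies on all three bisectors but is incident to only the $q$- and $t$-regions). Second, even granting that the vertex is pinned, the proposal never says how to recompute the endpoint tuples $(\cdot,\cdot,aa',bb')$ of the Voronoi edges after the splice, nor how to detect the transition between zero and one interior Voronoi vertices, within the $O(\deg_P(x))$ budget. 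Those are precisely the details that the paper's monotone-$\zeta$ walk settles cleanly. So the skeleton of your argument matches the paper's, but the amortized per-cell update — the part that actually makes the $O(r^2)$ bound go through — is asserted rather than proved.
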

\begin{proof}
	We use Lemma~\ref{le:2sites} to compute 
	and store all the possible bisectors of each pair of vertices.
	This takes $O(r^2)$ time because we have $O(1)$ pairs of placements.

	Only the difference between weights of the sites is relevant.
	Thus, we can just assume that the weight $w_q$ is always $0$. 
	The relevant abstract Voronoi diagrams 
	can thus be parameterized by the plane $\RR^2$. The first coordinate is the
	weight $w_s$ and the second coordinate is the weight $w_t$.

	For each vertex $x\in V(P)$, we compute
	\[
		\eta_x^{qs} ~=~ d_G(q,x) -d_G(s,x), ~~~
		\eta_x^{qt} ~=~ d_G(q,x) -d_G(t,x), ~~~
		\eta_x^{st} ~=~ d_G(s,x) -d_G(t,x).
	\]
	Note that, once we fix the weights $w_s,w_t$ and $w_q=0$,
	the vertex $x\in V(P)$ belongs to $\cell_{P,G}(s,\{q,s,t\})$
	if and only if $w_s< \eta_x^{qs}$ and $w_t-w_s> \eta_x^{st}$.
	A similar statement holds for the other sites, $q$ and $t$.
	
	In the plane $(w_s,w_t)$ we consider the set of lines $L$
	that contains precisely the following lines
	\begin{align*}
		\forall x\in V(G):&~~ \ell_x^{qs}=\{ (w_s,w_t)\in \RR^2 \mid w_s=\eta_x^{qs} \}, \\
		\forall x\in V(G):&~~ \ell_x^{qt}=\{ (w_s,w_t)\in \RR^2 \mid w_t=\eta_x^{qt} \}, \\
		\forall x\in V(G):&~~ \ell_x^{st}=\{ (w_s,w_t)\in \RR^2 \mid w_t-w_s=\eta_x^{st}\}.
	\end{align*}
	Since $L$ has $O(r)$ lines,
	it breaks the plane $\RR^2$ into $O(r^2)$ cells, usually called
	the arrangement induced by $L$ and denoted by $\mathcal{A}(L)$.
	Such an arrangement can be computed in $O(r^2)$ time~\cite[Section 8.3]{bkos-08}.
	For each cell $c\in \mathcal{A}(L)$, the Voronoi diagram 
	defined by the sites $\{ (q,0), (s,w_s), (t,w_t)\}$ is the same for all $(w_s,w_t)\in c$.

	\begin{figure}
		\centering
		\includegraphics[page=1]{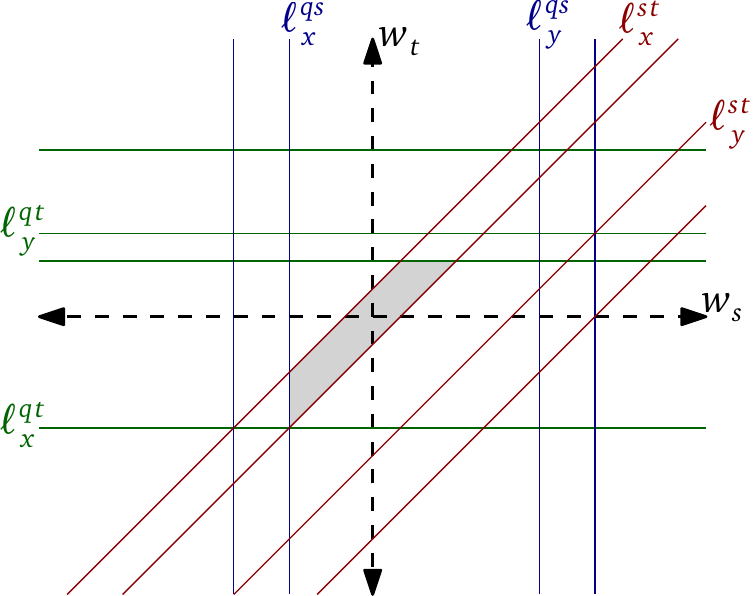}
		\caption{Example of lines $L$ defined by 4 vertices. 
			The cell of $\mathcal{A}(L)$ containing the origin is marked in gray.
			From the labels in th figure we can deduce that, in the Voronoi diagram of $q=(v_q,0)$,
			$s=(v_s,0)$ and $t=(v_t,0)$, the vertex $x$ belongs to $\cell_{P,G}(q,\{q,s,t\})$
			and the vertex $y$ belongs to $\cell_{P,G}(s,\{q,s,t\})$.}
		\label{fig:3sites}
	\end{figure}

	We can further preprocess $\mathcal{A}(L)$ for standard point location~\cite{Snoeyink04}.
	Thus, after $O(r^2)$ preprocessing,
	given a query point $(w_s,w_t)$, we can identify in $O(\log r)$ time
	the cell of $\mathcal{A}(L)$ that contains it. 

	In each cell $c$ of $\mathcal{A}(L)$ we store a description of the Voronoi diagram
	defined for weights on that cell.
	We can compute the relevant Voronoi diagram for each cell in $O(1)$ amortized time
	using a traversal of $\mathcal{A}(L)$. A simple way is as follows.
	Consider any line $\ell$ of $L$. Let us say that $\ell=\ell_x^{qs}\in L$;
	the other cases are similar. Let $\ell_\eps$ be a right shift
	of $\ell$ by an infinitesimal $\eps>0$.
	The value $w_s$ remains constantly equal to $\eta_x^{qs}+\eps$ as we walk along $\ell_\eps$,
	while the value $w_t$ changes.
	Consider the bisector $\bis_{P,G}((v_q,0),(v_s,w_s))$ and let $e_1,\dots,e_k$
	be the edges of $P$ that it crosses, as we walk from $a_\infty$ to $a_\infty$.
	Thus, the bisector is actually the non-crossing closed walk $e^*_1,\dots,e^*_k$
	in the dual graph $P^*$. For each such edge $e_i$, 
	we can compute a value $\zeta(e_i)$ such that
	$e_i$ is part of the abstract Voronoi edge of 
	$\{ (v_q,0), (v_s,w_s), (v_t,w_t)\}$ that separates the cell of $q$ and $s$
	if and only if $w_t> \zeta(e_i)$.
	Indeed, if $y_q$ is the endpoint of $e_i$ closer to $q$ and $y_s$
	the other endpoint,
	then $e_i$ is (part of) an abstract Voronoi edge of $\{ (v_q,0), (v_s,w_s), (v_t,w_t)\}$
	that separates the Voronoi cells of $q$ and $s$ 
	if and only if 
	\[ 
		w_q+d_G(q,y_q)< w_t +d_G(t,y_q) ~~\text{ and }~~
		w_s+d_G(s,y_s)< w_t +d_G(t,y_s).
	\]
	Using that $w_q=0$ and $w_s=\eta_x^{qs}+\eps$, this is equivalent to the condition
	\[
		w_t ~>~ \max\{ d_G(q,y_q)-d_G(t,y_q),~ d_G(s,y_s)-d_G(t,y_s)+ \eta_x^{qs}+\eps\} ~=:~ \zeta(e_i).
	\]
	Because of planarity, the values $\zeta(e_1),\dots,\zeta(e_k)$ are either
	monotonically increasing or decreasing. Indeed, the cell for $t$ can only grow
	when $w_t$ increases and the cell of $t$ has to take 
	always a contiguous part of the bisector $\bis_{P,G}((v_q,0),(v_s,w_s))$,
	as otherwise the Voronoi diagram of $\{ (q,0), (s,w_s), (t,w_t)\}$ would have at least $2$ vertices, besides $a_\infty$. 
	Therefore, the values $\zeta(e_1),\dots,\zeta(e_k)$
	are obtained already sorted. As we walk along $\ell_\eps$,
	we can identify the last edge $e_i$ such that $\zeta(e_i)< w_t$
	and identify the precise portion of $\bis_{P,G}((q,0),(s,w_s))$ that is in 
	the Voronoi diagram of $\{ (q,0), (s,w_s), (t,w_t)\}$.

	Repeating this procedure for each line $\ell^{qs}_x\in L$, 
	with two infinitesimal shifts per line, one on each side, 
	we can figure out in $O(1)$ amortized time per cell
	the portion of $\bis_{P,G}(q,s)$ in the abstract Voronoi diagram
	for each cell of $\mathcal{A}(L)$ bounded by one of those lines.
	If a cell is not bounded by a line $\ell^{qs}_x$ for some $x$,
	we figure out this information from a neighbour cell.
	A similar approach for the lines $\ell^{qt}_x\in L$ and $\ell^{st}_x\in L$
	determines the portions of $\bis_{P,G}(q,t)$ and $\bis_{P,G}(s,t)$, respectively.
	Thus, we obtain the abstract Voronoi diagrams for all cells $c\in \mathcal{A}(L)$ 
	in $O(1)$ amortized time per cell.
\end{proof}

Recall that $b$ is the cardinality of $X$.

\begin{lemma}
\label{le:4sites}
	There is a data structure with the following properties.
	The preprocessing time is $O(b^3 r^2)$.
 	For any generic, independent set $S$ of $4$ sites placed on $X$, 
	the abstract Voronoi diagram $\AVD(S)$ can be computed in 
	$O(\log r)$ time.
	The output is given combinatorially as a collection of
	abstract Voronoi vertices and edges encoded in the dual graph $P^*$.
\end{lemma}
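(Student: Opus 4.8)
The plan is to combine Lemma~\ref{le:3sites} with the structural fact that, among four cofacial sites, the abstract Voronoi diagram is governed by the pairwise bisectors and by the six restricted three-site diagrams. First I would run the preprocessing of Lemma~\ref{le:3sites} for every one of the $\binom{b}{3}=O(b^3)$ triples of vertices of $X$; each such preprocessing costs $O(r^2)$, so this stage takes $O(b^3 r^2)$ time, matching the claimed bound. I would also keep the $O(b^2)$ tables of all bisectors (with their nesting structure) produced along the way by Lemma~\ref{le:2sites}, since the four pairwise bisectors among the four chosen sites will be needed explicitly.

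For the query, given a generic, independent set $S=\{s_1,s_2,s_3,s_4\}$ placed on $X$ with prescribed weights, I would invoke Theorem~\ref{thm:AVD}: by Lemma~\ref{le:admissible_holes} the system of bisectors $\{(\bis_{P,G}(s_i,s_j),D_{P,G}(s_i,s_j))\}$ is admissible with $p_\infty=a_\infty$, and by the same lemma any three of these sites yield a diagram with at most one Voronoi vertex besides $a_\infty$. Hence the elementary operation required by Klein--Mehlhorn--Meiser --- building the abstract Voronoi diagram of four sites --- is exactly what we must supply, and once we can perform it in $O(\log r)$ time the theorem gives the whole four-site diagram in $O(\log r)$ time as well. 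Concretely, to assemble $\AVD(S)$ I would first look up the four pairwise bisectors $\bis_{P,G}(s_i,s_j)$ for the three or four pairs that can actually contribute edges, using the $O(\log r)$ access of Lemma~\ref{le:2sites}; then use the four three-site diagrams $\AVD(\{s_i,s_j,s_k\})$, retrieved via the point-location structures of Lemma~\ref{le:3sites} in $O(\log r)$ time each, to determine, along each bisector $\bis_{P,G}(s_i,s_j)$, which contiguous arc survives in the presence of the two remaining sites. By Lemma~\ref{le:1vertex}/Lemma~\ref{le:admissible_holes} each three-site sub-diagram contributes at most one interior vertex, so the four-site diagram has $O(1)$ vertices and $O(1)$ edges, and stitching the surviving bisector arcs together around these vertices (the combinatorial structure at each vertex being read off from the incident three-site diagrams) produces the plane graph $\AVD(S)$, encoded as tuples $(s_i,s_j,aa',bb')$ in $P^*$ as required.

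The main obstacle I anticipate is the bookkeeping of the gluing step: I must argue that the arc of $\bis_{P,G}(s_i,s_j)$ that appears in $\AVD(\{s_i,s_j,s_k\})$ and the arc appearing in $\AVD(\{s_i,s_j,s_\ell\})$ overlap in exactly the portion that is an edge of the full four-site diagram, i.e.\ that $\cell(s_i,S)\cap\cell(s_j,S)$ restricted to the bisector is the intersection of the two three-site arcs. This follows from $\cell_{P,G}(s_i,S)=\bigcap_{t\neq s_i}\GD_{P,G}(s_i,t)$ (the analogue of Lemma~\ref{le:intersection}) together with the monotone/nested behaviour of the regions $D_{P,G}(s_i,s_j)$ already exploited in Lemma~\ref{le:2sites} and Lemma~\ref{le:3sites}, but making it watertight --- especially ruling out that two surviving arcs meet $a_\infty$ in a way that creates spurious structure there --- is where the care is needed. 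Everything else is a direct packaging of the three-site machinery into the single elementary operation demanded by Theorem~\ref{thm:AVD}.
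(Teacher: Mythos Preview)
Your overall plan --- preprocess all $\binom{b}{3}$ triples via Lemma~\ref{le:3sites}, then at query time assemble the four-site diagram by intersecting, for each pair $(s_i,s_j)$, the arcs of $\bis_{P,G}(s_i,s_j)$ that survive in the two three-site diagrams $\AVD(\{s_i,s_j,s_k\})$ and $\AVD(\{s_i,s_j,s_\ell\})$ --- is correct and is essentially what the paper does. Your invocation of Theorem~\ref{thm:AVD} is misplaced, though: that theorem computes an $m$-site diagram \emph{using} four-site diagrams as its elementary operation, so appealing to it for $m=4$ is circular (``if I can compute the four-site diagram in $O(\log r)$ time, then Theorem~\ref{thm:AVD} gives me the four-site diagram in $O(\log r)$ time''). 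You should simply drop that sentence; the direct construction you describe afterwards is the actual content.

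There is one genuine gap in your query-time analysis. To intersect two arcs on the same bisector $\bis_{P,G}(s_i,s_j)$ in $O(1)$ time, you must compare the positions of their endpoint dual edges along that bisector. Lemma~\ref{le:2sites} only gives $O(\log r)$ access to a \emph{pointer} to the right bisector; it does not let you decide, for two given dual edges $aa'$ and $bb'$ on it, which comes first in the clockwise traversal from $a_\infty$. Without that, your ``stitching'' step has no stated time bound --- naively it costs $O(r)$ per pair. The paper closes this gap during preprocessing by building, for every stored bisector $\beta$, a rank table $T_\beta[\cdot]$ recording each dual edge's position along $\beta$, and a rank table $T_X[\cdot]$ for the cyclic order of vertices of $X$ on the outer face; this fits in the $O(b^2 r^2)$ budget. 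With those tables the paper finishes via a short case analysis (is some region bounded by a single bisector? if not, which pair of opposite sites in the cyclic order $p,q,s,t$ contributes the central edge?) rather than intersecting all six pairs, but either route works once the rank tables are in place.
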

\begin{proof}
	First, we make a table $T_X[\cdot]$ such that, for $u\in X$,
	$T_X[u]$ is the rank of $u$ when walking along the boundary of the outer face of $P$ and,
	for $u\notin X$, we have $T_X[u]$ undefined.
	Thus, given 3 vertices of $X$ we can deduce their circular ordering
	along the boundary of the outer face of $P$ in $O(1)$ time.

	We use Lemma~\ref{le:2sites} to compute 
	and store all the possible bisectors.
	Since there are $b^2$ different possible locations for the sites, 
	for each pair of locations there are $O(r)$ different bisectors, 
	and for each bisector we spend $O(r)$
	space and preprocessing time, we have spent a total of $O(b^2 r^2)$ time.

	For each bisector $\beta$, 
	we preprocess it to quickly figure out the circular order of its (dual) edges: 
	given two edges $aa'$ and $bb'$ on $\beta$,
	is the clockwise order along $\beta$ given by $aa',bb',a_\infty$ or by $bb',aa',a_\infty$?
	For each bisector $\beta$ we can make a table $T_\beta[\cdot]$ indexed
	by the edges such that $T_\beta[aa']$ is the position of $aa'$ along $\beta$,
	when we walk $\beta$ clockwise starting from $a_\infty$.
	We set $T_\beta[aa']$ to undefined when $aa'$ does not appear in $\beta$.
	Thus, given 2 edges of $\beta$, we can decide their relative order
	along $\beta$ in $O(1)$ time. 
	The time and space for this, over all bisectors, is also $O(b^2 r^2)$. 

	We make a table indexed by triples of vertices of $X$ and, for each triple,
	we use Lemma~\ref{le:3sites} and store in the table a pointer to the resulting data structure.
	We have $O(|X|^3)=O(b^3)$ choices for the vertices hosting the sites,
	and thus we spend $O(b^3 r^2)$ in the preprocessing step. 
	Given any three sites placed at $X$, we can get the abstract Voronoi diagram of those
	three sites in $O(\log r)$ time.
	This finishes the preprocessing.
	
	Assume that we are given a set $S$ of 4 sites placed at $X$ and 
	we want to compute its abstract Voronoi diagram.
	We recover the abstract Voronoi diagrams for each subset
	$\binom S3$ in $O(\log r)$ time, using the stored data. 

	If there are two sites $s,t\in S$ such that their bisector $\bis_{P,G}(s,t)$
	is in full in the Voronoi diagram of each subset $S'$ with 
	$|S'|=3$ and $\{s,t\}\subset S'\subset S$, then in the abstract
	Voronoi diagram of $S$ there is a region bounded only by $\bis_{P,G}(s,t)$.
	We can then compose that bisector and the abstract Voronoi diagram 
	of the other three sites
	to obtain the final Voronoi diagram. See the left of Figure~\ref{fig:4cases}.
	(It may be that we have more than one such ``isolated" abstract Voronoi region.)
	
	In the opposite case, in the abstract Voronoi diagram there is no abstract
	Voronoi region that is bounded by a unique bisector. The abstract Voronoi
	diagram restricted to the interior faces of $G$ is connected.
	The shape of such a Voronoi diagram can be only one of two, depending
	on which opposite sites share a common edge.
	See the center and right side of Figure~\ref{fig:4cases}.
	Let $p,q,s,t$ be the sites in clockwise order along the boundary of $G$.
	We can infer this order in $O(1)$ time through the table $T_X[\cdot]$.
	Assume, by renaming the sites if needed, that $\bis_{P,G}(s,p)$ has $s$ in its interior.
	From  $\AVD(\{p,q,s\})$ we obtain the edge $aa'$ of  $\bis_{P,G}(s,p)$ incident
	to the vertex of $\AVD(\{p,q,s\})$, and from $\AVD(\{p,s,t\})$ we obtain the edge $bb'$ of  
	$\bis_{P,G}(s,p)$ incident to the vertex of $\AVD(\{p,s,t\})$.
	If $a=b$, then  $\AVD(\{p,q,s,t\})$ has a common vertex of degree $4$
	that is incident to four abstract Voronoi edges.	
	If $aa'=bb'$ or the the cyclic order of $a_\infty,bb',aa'$ along $\bis_{P,G}(p,s)$ is clockwise,
	then the tuple $(p,s,aa',bb')$ defines an abstract edge in the abstract Voronoi diagram of $S$.
	Otherwise, there is tuple $(q,t,cc',dd')$ for some edges $cc'$ and $dd'$ that can be obtained
	by exchanging the roles of $p,s$ with $q,t$.
	From this information and the abstract Voronoi diagrams of each three sites,
	we can construct the abstract Voronoi diagram of $S=\{ p,q,s,t\}$.
\end{proof}

\begin{figure}
	\centering
	\includegraphics[width=\textwidth]{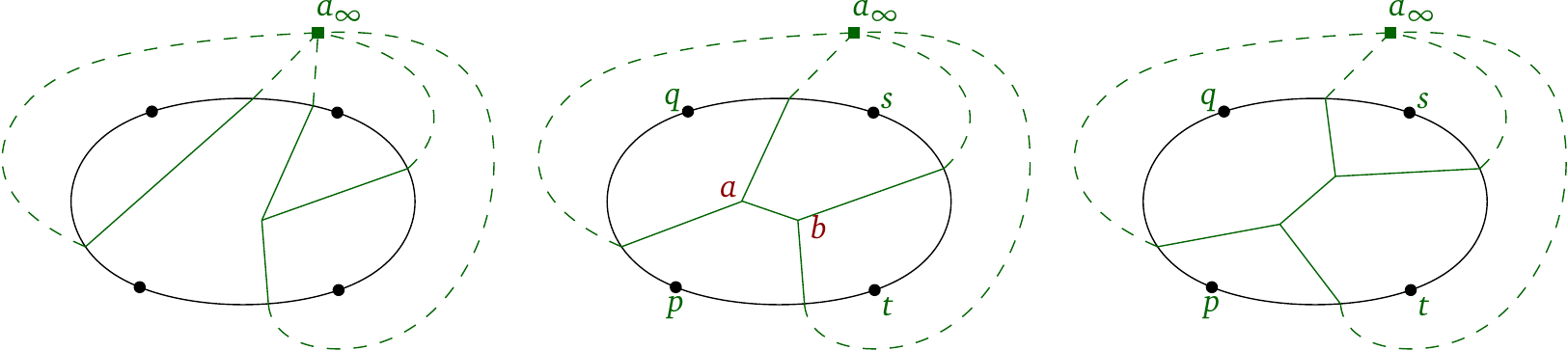}
	\caption{Left: an abstract Voronoi region is bounded by a single bisector.
		Center and right: possible configurations of the abstract Voronoi diagram
		of $4$ sites, when it is connected.}
	\label{fig:4cases}
\end{figure}

\begin{theorem}
\label{thm:abstract}
	Let $G$ be a plane graph and let $P$ be a connected subgraph of $G$
	with $r$ vertices such that $G$ and $P$ have the same outer face.
	Let $X$ be a set of $b$ vertices on the outer face of $P$.
	Assume that $G$ has no negative cycles
	and the distances $d_G(\cdot,\cdot)$ from each vertex of $X$ to each vertex of $P$ are available.
	There is a data structure with the following properties.
	The preprocessing time is $O(b^3 r^2)$.
 	For any generic and independent set $S$ of sites placed at $X$, 
	the abstract Voronoi diagram $\AVD_{P,G}(S)$ can be computed in 
	$\tilde O(b)$ expected time.
	The output is given combinatorially as a collection of
	abstract Voronoi vertices and edges encoded in the dual graph $P^*$.
\end{theorem}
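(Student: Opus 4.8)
The plan is to combine the data structure of Lemma~\ref{le:4sites}---which, after $O(b^3r^2)$ preprocessing, returns the abstract Voronoi diagram of any four sites placed at $X$ in $O(\log r)$ time---with the randomized incremental construction of Klein, Mehlhorn and Meiser (Theorem~\ref{thm:AVD}). The preprocessing of our data structure is precisely the preprocessing of Lemma~\ref{le:4sites}, costing $O(b^3r^2)$ time; among other things it stores, for every pair of vertices of $X$, all the $O(r)$ possible bisectors $\bis_{P,G}(\cdot,\cdot)$ together with tables that allow comparing positions of dual edges along each bisector.

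Now consider a query with a generic, independent set $S$ of sites placed at $X$. First I would note that $|S|\le b$: two sites placed at the same vertex cannot both have nonempty Voronoi cells, since the one with smaller weight dominates the other everywhere, which would contradict independence. By Lemma~\ref{le:admissible_holes} the system of abstract bisectors $\{(\bis_{P,G}(s,t),D_{P,G}(s,t))\mid s,t\in S,\ s\neq t\}$ is admissible with $a_\infty$ in the role of $p_\infty$, and moreover the abstract Voronoi diagram of any three of these sites has at most one vertex besides $a_\infty$. These are exactly the hypotheses under which Theorem~\ref{thm:AVD} applies with the elementary operation being the computation of an abstract Voronoi diagram of four sites. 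We therefore run that randomized incremental construction on $S$, answering each elementary operation in $O(\log r)$ time via Lemma~\ref{le:4sites}. Since the construction performs an expected $O(|S|\log|S|)=O(b\log b)$ elementary operations and spends $O(|S|\log|S|)$ further time on its internal bookkeeping, the total expected query time is $O(b\log b\log r)=\tilde O(b)$.

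For the output, Theorem~\ref{thm:AVD} returns $\AVD_{P,G}(S)$ as a plane graph together with, for each vertex and edge, a pointer into the abstract Voronoi diagram of at most four sites; by Lemma~\ref{le:4sites} these small diagrams are already encoded combinatorially in $P^*$. Following these pointers, every vertex of $\AVD_{P,G}(S)$ becomes a vertex of $P^*$ (or $a_\infty$) and every abstract Voronoi edge becomes a contiguous arc of a single bisector $\bis_{P,G}(s,t)$, which we record as a tuple $(s,t,aa',bb')$ in the format described before Lemma~\ref{le:3sites}; with the stored per-bisector position tables this rewriting costs $O(1)$ per feature. Finally, $\AVD_{P,G}(S)$ is a plane graph with $O(|S|)=O(b)$ vertices and edges, since it has one face per site, so producing the output takes $\tilde O(b)$ time as well.

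I expect the only delicate point to be lining up the two cited results: checking that the primitive demanded by the incremental construction is exactly the four-site abstract Voronoi diagram (the regime of Theorem~\ref{thm:AVD} unlocked by the ``at most one vertex for three sites'' clause of Lemma~\ref{le:admissible_holes}), that the symbolic rerouting of bisectors from Section~\ref{sec:GVD} is already accounted for by Lemma~\ref{le:admissible_holes}, and that the site-location and pointer-maintenance work the algorithm does per insertion is subsumed in the $O(m\log m)$ bound of Theorem~\ref{thm:AVD} rather than being an extra cost to us. Everything else follows directly from the cited statements.
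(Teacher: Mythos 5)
Your proof takes essentially the same route as the paper: apply the preprocessing of Lemma~\ref{le:4sites}, invoke Lemma~\ref{le:admissible_holes} to get admissibility and the one-vertex-per-triple property needed for the four-site regime of Theorem~\ref{thm:AVD}, and answer each elementary operation in $O(\log r)$ time. Your additional remarks---that $|S|\le b$ by independence, and the explicit accounting of how pointers resolve to tuples $(s,t,aa',bb')$ in $P^*$---are correct fleshings-out of details the paper leaves implicit rather than a different argument.
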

\begin{proof}
	We apply the preprocessing of Lemma~\ref{le:4sites}.
	We spend $O(b^3r^2)$ time and, given any four sites placed on $X$, 
	we can compute its abstract Voronoi diagram in $O(\log r)$ time.

	Assume that we are given a set $S$ of $b$ sites placed at vertices of $X$. 
	Because of Lemma~\ref{le:admissible_holes} (see also Lemma~\ref{le:1vertex}),
	any three sites have a vertex in common, besides the one at $p_\infty$ (or $a_\infty$).
	According to Theorem~\ref{thm:AVD}, we can compute the 
	abstract Voronoi diagram using $O(|S|\log r)=O(b\log r)$ expected time
	and expected number of elementary operations, where an elementary operation 
	is the computation of an abstract Voronoi diagram of 4 sites. 
	Since each elementary operation takes $O(\log r)$ time because 
	of the data structure of Lemma~\ref{le:4sites}, the result follows.
\end{proof}

\section{Data structure for planar graphs}
\label{sec:datastructure}
In this section we are going to use abstract Voronoi diagrams and
the data structures of Section~\ref{sec:dualcycle} to 
compute information about the distances from a fixed vertex
in a planar graph when the length of the edges incident to 
the fixed vertex are specified at query time.

Let $G$ be a plane graph with $n$ vertices and let $P$ be a connected subgraph of $G$
with $r$ vertices such that $G$ and $P$ have the same outer face.
Let $X$ be a set of $b$ vertices on the outer face of $P$.
Let $U$ be a subset of $V(P)$.
The graph $G$ may have arcs with negative edges, but it does not have any negative cycle.

For each subset $Y\subset X$, let $G^+(Y)$ be the graph obtained from $G$
by adding a new vertex $x_0$ and arcs $E_0(Y)=\{ \dart{x_0}{y} \mid y\in Y\}$.
See Figure~\ref{fig:G+}.
We want to preprocess $G$ and $P$ for different types of queries, as follows.
At preprocessing time, the lengths of the edges in $E_0(X)$ are undefined, unknown.
At query time we are given a subset $Y\subseteq X$ 
and the lengths $\lambda(\dart{x_0}{y})$ for the arcs $\dart{x_0}{y}$ of $E_0(Y)$. 
Using the notation introduced in Section~\ref{sec:preliminaries},
we are interested in the following information about the distances from the new vertex $x_0$:
\begin{align*}
	\diam(x_0,U,G^+(Y))~&=~\max \{d_{G^+(Y)}(x_0,u)\mid u \in U\},\\
	\adding(x_0,U,G^+(Y))~&=~\sum_{u\in U} d_{G^+(Y)}(x_0,u),\\
	\counting(x_0,U,G^+(Y),\delta) ~&=~ |\{ u \in U \mid d_{G^+(Y)}(x_0,u)\le \delta \}|.
\end{align*}
Note that we are only using the distances to the subset $U\subseteq V(P)$.

The set of vertices $Y$ and the lengths $\lambda(\dart{x_0}{y})$, where $y\in Y$,
will be given so that they satisfy the following condition:
\begin{equation}
\label{eq:condition}
	\forall y,y'\in Y,~ y\neq y': ~~~\lambda(\dart{x_0}{y}) < \lambda(\dart{x_0}{y'}) + d_{G}(y',y).
\end{equation}
This condition implies that, for all $y\in Y$, there is a unique shortest path
from $x_0$ to $y\in Y$ and this shortest path is just the arc $\dart{x_0}{y}$.
This condition is important in our scenario to ensure that, when using the vertices of $Y$
as sites with weights $\lambda(\dart{x_0}{y})$, the sites are generic and independent.

\begin{figure}
	\centering
	\includegraphics[page=1]{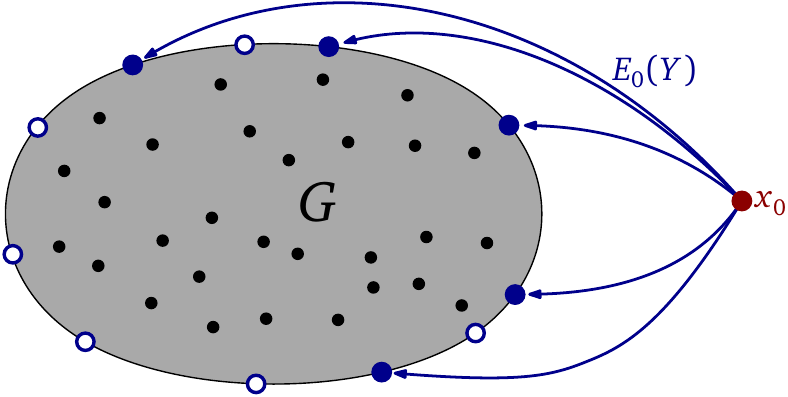}
	\caption{The graph $G^+(Y)$. The graph $G$ is represented by the gray region,
		vertices of $Y$ are marked with filled in dots, and vertices of $X\setminus Y$
		are represented with void dots.}
	\label{fig:G+}
\end{figure}

\begin{theorem}
\label{thm:perpiece_sum}
	Assume that $G$ is a weighted plane graph with $n$ vertices and no negative cycles.
	Let $P$ be a subgraph of $G$ with $r$ vertices 
	such that $G$ and $P$ have the same outer face.
	Let $X$ be a set of $b$ vertices on the outer face of $P$
	and let $U$ be a subset of $V(P)$.
	After $\tilde O(n + b^3 r^2)$ preprocessing time, we can handle
	the following queries in $\tilde O(b)$ expected time:
	given a subset of vertices $Y\subset X$ and weights for the darts
	$\lambda(\dart{x_0}{y})$, $y\in Y$, that satisfy the condition~\eqref{eq:condition},
	return $\adding(x_0,U,G^+(Y))$.
\end{theorem}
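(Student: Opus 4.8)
The plan is to recast the query as the computation of an additively weighted Voronoi diagram in $P$ with respect to the distances of $G$, and then to read the answer off region by region using the summation machinery of Section~\ref{sec:dualcycle}. The starting observation is that for $u\in V(P)$ a simple shortest path from $x_0$ in $G^+(Y)$ leaves $x_0$ through exactly one arc $\dart{x_0}{y}$, $y\in Y$, and afterwards stays inside $G$, so
\[
	d_{G^+(Y)}(x_0,u) ~=~ \min_{y\in Y}\bigl(\lambda(\dart{x_0}{y})+d_G(y,u)\bigr).
\]
Thus, if we place at each vertex $y\in Y$ a site of weight $w_y:=\lambda(\dart{x_0}{y})$ — a set $S$ that condition~\eqref{eq:condition} makes generic and independent — then $u\in V(P)$ lies in $\cell_{P,G}(y,S)$ exactly when $y$ attains that minimum, and in that case $d_{G^+(Y)}(x_0,u)=w_y+d_G(y,u)$. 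Writing $c_y:=|U\cap\cell_{P,G}(y,S)|$ and $D_y:=\sum_{u\in U\cap\cell_{P,G}(y,S)}d_G(y,u)$, this gives
\[
	\adding(x_0,U,G^+(Y)) ~=~ \sum_{y\in Y}\bigl( w_y\,c_y ~+~ D_y\bigr),
\]
so it suffices to compute $c_y$ and $D_y$ for every $y\in Y$ in $\tilde O(b)$ total time.

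At query time I would first compute $\AVD_{P,G}(S)$ with the data structure of Theorem~\ref{thm:abstract} in $\tilde O(b)$ expected time. From its combinatorial description I would obtain, for each site $y$, the boundary of $\cell_{P,G}(y,S)=V_{\interior}(\gamma_y,P)$ as a clockwise, non-crossing closed walk $\gamma_y$ in $P^*$: each edge of $\gamma_y$ lies on some abstract Voronoi edge, which in turn is a subpath of one of the precomputed bisectors $\bis_{P,G}(y,t)$, and that subpath is located in $O(1)$ time from the tables that record each bisector's cyclic order. Since $\AVD_{P,G}(S)$ has combinatorial complexity $O(b)$, the walks $\gamma_y$, over all $y\in Y$, together consist of $O(b)$ bisector subpaths. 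The crucial point is that $y\in\cell_{P,G}(y,S)=V_{\interior}(\gamma_y,P)$ — this is precisely independence of $S$ — so $\gamma_y$ is an admissible input to a summation structure of Theorem~\ref{thm:sumweights} whose prescribed vertex is $y$. Consequently, querying for each $y$ the two such structures attached to the location $y$, one built for the weight function $d_G(y,\cdot)$ restricted to $U$ and one built for the indicator of $U$, returns $D_y$ and $c_y$ in time proportional to the number of subpaths of $\gamma_y$; adding up $w_y\,c_y+D_y$ over $y\in Y$ then produces the answer in $\tilde O(b)$ total time.

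The preprocessing has three parts. First, after a single shortest-path computation in $G$ to obtain non-negative reduced lengths (as recalled in Section~\ref{sec:preliminaries}), build a multiple-source shortest-path data structure from the common outer face of $G$ and $P$; this takes $\tilde O(n)$ time and lets us extract all $O(br)$ distances $d_G(v,u)$ with $v\in X$, $u\in V(P)$ in $\tilde O(br)$ time. Second, run the preprocessing of Theorem~\ref{thm:abstract} in $O(b^3r^2)$ time; in particular this builds, via Lemma~\ref{le:2sites}, the family $\Pi$ of all $O(b^2r)$ distinct bisectors of the $O(b^2)$ pairs of locations of $X$ — each a closed walk with $O(r)$ dual edges, so $\Pi$ has $m=O(b^2r^2)$ edges in total — together with those cyclic-order tables. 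Third, for each of the $b$ candidate locations $v\in X$ instantiate Theorem~\ref{thm:sumweights} on the graph $P$ twice, both times with $v$ in the role of its prescribed vertex and with $\Pi$ as the family of walks: once with vertex weights $\omega(u)=d_G(v,u)$ for $u\in U$ and $\omega(u)=0$ otherwise, and once with $\omega$ the indicator of $U$. Each instance costs $O(r+m)$, so this third part costs $O\bigl(b(r+m)\bigr)=O(b^3r^2)$, and the whole preprocessing is $\tilde O(n+b^3r^2)$.

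The one genuinely non-mechanical step is the region-by-region extraction, and the obstacle there is that Theorem~\ref{thm:sumweights} only sums over the interior of a closed walk that encircles a \emph{prescribed} vertex, whereas the natural center of $\cell_{P,G}(y,S)$ is the site $y$, which ranges over all of $Y$. This is exactly what forces the factor $b$ in the preprocessing — one summation structure per candidate site location — and one then has to check that the bisector family has size $O(b^2r^2)$ so that replicating the structure $b$ times stays within the $O(b^3r^2)$ budget. Two lesser points still need care: that the boundary walks returned by the Voronoi data structure really are clockwise, non-crossing closed walks presented as short concatenations of subpaths of the stored bisectors, and that condition~\eqref{eq:condition} does make $S$ generic and independent, so that Lemma~\ref{le:admissible_holes} and the algorithmic results of Section~\ref{sec:algorithms} apply to $S$ without change.
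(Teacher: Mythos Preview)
Your proposal is correct and follows essentially the same approach as the paper: reduce the query to an additively weighted Voronoi diagram via Theorem~\ref{thm:abstract}, then extract the per-region sums $c_y$ and $D_y$ using one instance of Theorem~\ref{thm:sumweights} per candidate site location, with the two weight functions (indicator of $U$ and $d_G(y,\cdot)$ restricted to $U$). The only cosmetic difference is that the paper preprocesses, for each $x\in X$, only the family $\Pi_x$ of bisectors \emph{involving $x$} (size $O(br^2)$), so that step costs $O(b^2r^2)$ rather than your $O(b^3r^2)$; either way the bound is dominated by the $O(b^3r^2)$ from Theorem~\ref{thm:abstract}.
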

\begin{proof}
	We compute and store the distances $d_G(x,v)$ for all $x\in X$ and $v\in V(P)$.
	This can be done in $\tilde O(n+br)$ time, as follows.
	First we compute a single-source shortest-path tree 
	in $\tilde O(n)$ time~\cite{fr-pgnwe-06,KleinMW10,MozesW10}. 
	With this we have a potential function in $G$
	and for the next distances we can assume non-negative weights.
	Then we use that all the vertices of $X$ are incident to the outer face of $G$.
	Using~\cite{CabelloCE13,k-msspp-05} we obtain
	in $\tilde O(|V(P)|+ |X|\cdot |V(P)|)= \tilde O(n+br)$ time the distances $d_G(x,v)$, 
	for all $x\in X$ and $v\in V(P)$.
	
	We preprocess the pair of graphs $G$ and $P$ as described in Theorem~\ref{thm:abstract}.
	This takes $O(b^3r^2)$ time because we already have all the required distances.
	
	For each vertex $x\in X$ we proceed as follows. 
	We compute all the bisectors of the type $\bis_{P,G}((x,\cdot),(x',\cdot))$ for all $x'\in X$. 
	Let $\Pi_x$ be the resulting family of curves.
	Then, we preprocess $P$ with respect to $\Pi_x$ 
	as explained in Theorem~\ref{thm:sumweights}.
	More precisely, we use Theorem~\ref{thm:sumweights} for the following two vertex-weight functions:
	\begin{align*}
		\omega_x(v) ~=~ \begin{cases}
						d_G(x,v), &\text{if $v\in U$,}\\
						0, &\text{otherwise.}
			\end{cases}~~~~
		\omega'_x(v)~=~ \begin{cases}
						1, &\text{if $v\in U$,}\\
						0, &\text{otherwise.}
			\end{cases}
	\end{align*}
	We denote by $\sigma_x(\cdot)$ and $\sigma'_x(\cdot)$ the corresponding sums of weights.
	For example, $\sigma'_x(U')=\sum_{v\in U'} \omega'_x(v)$ for all $U'\subseteq V(P)$.
	This finishes the description of the preprocessing.

	Let us analyze the running time for the last step of the preprocessing.
	For each two vertices $x,x' \in X$, we spend $O(r^2)$ time 
	to compute the bisectors $\bis_{P,G}((x,\cdot),(x',\cdot))$ because of Lemma~\ref{le:2sites}. 
	It follows that $\Pi_x$ is a family of walks in $P^*$ with $O(br^2)$ dual edges, 
	counted with multiplicity.
	The preprocessing of Theorem~\ref{thm:sumweights}
	is $O(r+||\Pi_x||)=O(r+br^2)=O(br^2)$ per vertex $x\in X$, 
	where $||\Pi_x||$ denotes the number of edges in $\Pi_x$.
	Thus, over all $x\in X$, we spend $O(b^2r^2)$ time.
	
	Consider now a query specified by a subset $Y\subset X$
	and the edge weights $\lambda( \dart{x_0}{y})$, $y\in Y$, 
	that satisfy the condition~\eqref{eq:condition}. 
	For each $y\in Y$, define the site $s_y=(y,\lambda( \dart{x_0}{y}))$.
	Because of condition~\eqref{eq:condition}, 
	the set of sites $S_Y=\{ s_y\mid  y\in Y\}$ is independent and generic.
	Using the data structure of Theorem~\ref{thm:abstract},
	we compute the weighted Voronoi diagram for the sites $S_Y$.	
	Thus, we obtain the abstract Voronoi diagram in 
	$\tilde O(|Y|) = \tilde O(b)$ expected time.
	For each $y\in Y$, let $\gamma_y$ be the closed walk in the dual graph $P^*$ that
	defines the boundary of $\AVR(s_{y(v)},S)$.
	
	For each vertex $v\in V(P)$ there is precisely one vertex $y(v) \in Y$ such that
	$\AVR(s_{y(v)},S)$ contains $v$. Moreover, because of the definition of (graphic)
	Voronoi diagrams, we have 
	\[
		d_{G^+(Y)}(x_0,v) ~=~ \lambda(\dart{x_0}{y(v)}) + d_G(y(v),v).
	\]

	Note that
	\begin{align*}
		\adding(x_0,U,G^+(Y))~&=~\sum_{u\in U} d_{G^+(Y)}(x_0,u)\\
			&=~ \sum_{u\in U} \left( \lambda(\dart{x_0}{y(v)}) + d_G(y(v),v) \right)\\
			&=~ \sum_{y\in Y} ~\sum_{u\in U\text{ s.t.} y(u)=y } \left( \lambda(\dart{x_0}{y(v)}) + d_G(y(v),v) \right)\\
			&=~ \sum_{y\in Y} ~\sum_{u\in \AVR(s_y,S)\cap U } \left( \lambda(\dart{x_0}{y(v)}) + d_G(y,v) \right)\\			
			&=~ \sum_{y\in Y} \left( \lambda(\dart{x_0}{y(v)}) \cdot |\AVR(s_y,S)\cap U| 
							+ \sum_{u\in  \AVR(s_y,S)\cap U} d_G(y,v) \right)\\
			&=~ \sum_{y\in Y} \left( \lambda(\dart{x_0}{y(v)}) \cdot \sigma'_y(V_{\interior}(\gamma_y,P)) 
							+ \sigma_y (V_{\interior}(\gamma_y,P)) \right) \numberthis \label{eq:add}
	\end{align*}
	For each site $y\in Y$, we walk along $\gamma_y$, the boundary 
	of the abstract Voronoi region $\AVR(s_y,Y)$, and use the data structures
	of Theorem~\ref{thm:sumweights} for $\omega_y$ and $\omega'_y$ to collect the data
	\begin{align*}
		\forall y\in Y: ~~~\sigma_y(V_{\interior}(\gamma_y,P)) , \text{ and } \sigma'_y(V_{\interior}(\gamma_y,P)) .
	\end{align*}
	Here we are using that $y\in \AVR(s_y,S)$, and thus $y$ is in the interior of $\gamma_y$.
	For each $\gamma_y$ we spend $\tilde O(1)$ times the complexity of its description. 
	Over all $Y$, this takes	$\tilde O(|Y|)=\tilde O(b)$ time. 
	From this information we can compute $\adding(x_0,U,G^+(Y))$ using ~\eqref{eq:add},
	and the result follows.
\end{proof}

\begin{theorem}
\label{thm:perpiece_max}
	Consider the setting of Theorem~\ref{thm:perpiece_sum}.
	After $\tilde O(nb + b^3 r^2 + b^4)$ preprocessing time, we can handle
	the following queries in $\tilde O(b)$ expected time:
	given a subset of vertices $Y\subset X$, weights for the darts
	$\lambda(\dart{x_0}{y})$, $y\in Y$, that satisfy the condition~\eqref{eq:condition},
	return $\diam(x_0,U,G^+(Y))$.
\end{theorem}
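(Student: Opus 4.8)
The plan is to run the argument of Theorem~\ref{thm:perpiece_sum} essentially verbatim, but with the sum data structure of Theorem~\ref{thm:sumweights} replaced by the maximum data structure of Theorem~\ref{thm:maxweight}. The extra work comes from the fact that Theorem~\ref{thm:maxweight} only handles dual walks that are \emph{star-shaped} with respect to a fixed rooted tree. So for each possible site location $y\in X$ I would precompute a shortest-path tree from $y$ in $G$ and feed it to Theorem~\ref{thm:maxweight} as the tree $T_0$; this is the source of the new $\tilde O(nb)$ and $\tilde O(b^4)$ terms in the preprocessing.

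For the preprocessing I would first, exactly as in Theorem~\ref{thm:perpiece_sum}, compute a potential for $G$ and then the distances $d_G(x,v)$ for all $x\in X$, $v\in V(P)$ in $\tilde O(n+br)$ time, and preprocess the pair $G,P$ as in Theorem~\ref{thm:abstract} in $O(b^3r^2)$ time. Then, for each $y\in X$: compute a shortest-path tree $T_0^y$ from $y$ in $G$ ($\tilde O(n)$ time, hence $\tilde O(nb)$ over all of $X$); compute via Lemma~\ref{le:2sites} the family $\Pi_y$ of all bisectors $\bis_{P,G}((y,\cdot),(y',\cdot))$, $y'\in X$, which has $O(br^2)$ edges; and preprocess $P$ as in Theorem~\ref{thm:maxweight} with tree $T_0^y$, family $\Pi_y$, and the weight function
\[
	\omega_y(v) ~=~ \begin{cases} d_G(y,v), & v\in U,\\ -M, & v\in V(P)\setminus U, \end{cases}
\]
where $M$ is a sufficiently large constant depending on $G$ (computable from the stored distances at preprocessing time). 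The parameter ``$b$'' of Theorem~\ref{thm:maxweight} here is the total number of edges of $P$ on the faces of $P$ whose interiors meet $E(T_0^y)$; since $G$ and $P$ share the outer face and $T_0^y$ spans $G$, those faces are the holes of $P$, whose boundaries contain $O(b)$ edges in total in the intended setting (pieces of an $r$-division). Hence this step costs $O(n+br^2+b^3)$ per $y$ and $O(nb+b^2r^2+b^4)$ in all, so the whole preprocessing is $\tilde O(nb+b^3r^2+b^4)$.

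For a query with $Y\subseteq X$ and lengths $\lambda(\dart{x_0}{y})$ satisfying~\eqref{eq:condition}, set $s_y=(y,\lambda(\dart{x_0}{y}))$ and $S_Y=\{s_y\mid y\in Y\}$; as in Theorem~\ref{thm:perpiece_sum} this is a generic, independent set of sites, so Theorem~\ref{thm:abstract} yields $\AVD(S_Y)$ and the bounding walks $\gamma_y$ of the regions $\AVR(s_y,S_Y)$ in $\tilde O(b)$ expected time. Each vertex $v\in V(P)$ lies in a unique region, say $\AVR(s_{y(v)},S_Y)$, with $d_{G^+(Y)}(x_0,v)=\lambda(\dart{x_0}{y(v)})+d_G(y(v),v)$; hence, writing $\mu_y(\cdot)$ for the maximum-weight functional associated with $\omega_y$,
\[
	\diam(x_0,U,G^+(Y)) ~=~ \max_{y\in Y}\Bigl(\lambda(\dart{x_0}{y}) + \mu_y\bigl(V_{\interior}(\gamma_y,P)\bigr)\Bigr),
\]
since $\mu_y(V_{\interior}(\gamma_y,P))$ equals $\max\{d_G(y,u)\mid u\in \cell_{P,G}(y,S_Y)\cap U\}$ when that set is nonempty, and is the harmless $-M$ otherwise (the case $U\cap V(P)=\emptyset$ being trivial). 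Each $\mu_y(V_{\interior}(\gamma_y,P))$ is then read off with one query to the structure of Theorem~\ref{thm:maxweight}: $\gamma_y$ is a concatenation of abstract Voronoi edges, each of which is a subpath of some bisector $\bis_{P,G}(s_y,s_{y'})\in\Pi_y$ (located via Lemma~\ref{le:2sites}), and the total number of such edges over all $y\in Y$ is $O(|Y|)=O(b)$ by the linear size of (abstract) Voronoi diagrams of cofacial sites. Thus the queries together take $O(b)$ time, for a total of $\tilde O(b)$ expected time.

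The step I expect to be the crux is verifying that each $\gamma_y$ lies in $\widetilde\Xi(G,P,T_0^y)$, i.e. is $T_0^y$-star-shaped as a non-crossing walk in $P^*$ — Lemma~\ref{le:star} itself fails in this ``holes'' setting because a shortest path from $y$ may leave $P$. The fix is to apply Lemma~\ref{le:star} to the site set $S_Y$ in the ambient graph $G$: for every $u\in\cell_G(y,S_Y)$ the shortest $G$-path from $y$ to $u$, and in particular the tree path $T_0^y[\dart{y}{u}]$, stays within $\cell_G(y,S_Y)$. Combining this with the identity $V_{\interior}(\gamma_y,P)=\cell_{P,G}(y,S_Y)=\cell_G(y,S_Y)\cap V(P)$ from Lemma~\ref{le:withholes1} and Lemma~\ref{le:admissible_holes}, we get: for any $v\in V_{\interior}(\gamma_y,P)$, every vertex of $P$ on $T_0^y[\dart{y}{v}]$ lies in $\cell_G(y,S_Y)\cap V(P)=V_{\interior}(\gamma_y,P)\subseteq\interior(\gamma_y)$, and $y\in\interior(\gamma_y)$ — which is exactly the definition of $\widetilde\Xi(G,P,T_0^y)$. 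Orienting each $\gamma_y$ clockwise (the standing convention for non-crossing dual walks) then lets us invoke Theorem~\ref{thm:maxweight}, completing the argument.
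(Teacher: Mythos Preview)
Your proposal is correct and follows essentially the same approach as the paper: replace the sum data structure of Theorem~\ref{thm:sumweights} by the max data structure of Theorem~\ref{thm:maxweight}, instantiated for each $y\in X$ with a shortest-path tree $T_y$ from $y$ in $G$, and then read off $\mu_y(V_{\interior}(\gamma_y,P))$ for each Voronoi region. Your write-up is in fact more careful than the paper's on several points---you spell out why $\gamma_y\in\widetilde\Xi(G,P,T_0^y)$ via Lemma~\ref{le:star} applied in $G$ (the paper just cites Lemma~\ref{le:star}), you use $-M$ rather than $0$ for non-$U$ weights, and you flag that the ``$b$'' of Theorem~\ref{thm:maxweight} must be identified with $|X|$, which is only justified in the intended $r$-division application; the paper silently conflates the two.
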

\begin{proof}
	We use the same approach as in the proof of Theorem~\ref{thm:perpiece_sum}.
	We keep using the notation of that proof.
	The main difference is that we do not use the data structure of Theorem~\ref{thm:sumweights},
	but the data structure of Theorem~\ref{thm:maxweight}. We explain the details of this part.
	
	For each vertex $x\in X$ we proceed as follows. 
	Let $T_x$ be a shortest-path tree in $G$ from $x$. We do not compute $T_x$, but use it
	to argue correctness.
	Then, we use the data structure of Theorem~\ref{thm:maxweight}
	for $G$, $P$, the tree $T_x$, and the vertex-weights $\omega_x(\cdot)$.
	Let $\mu_x$ be the corresponding maximum function that the data structure returns.
	Thus, $\mu_x(U)=\max \{ \omega_x(u)\mid u\in U\}$.
	This finishes the description of the preprocessing.

	Let us analyze the running time for this step of the preprocessing.
	Like before, each $\Pi_x$ is computed in $O(br^2)$ time and has $O(br^2)$ dual edges, 
	counted with multiplicity.
	The preprocessing of Theorem~\ref{thm:maxweight} is
	$O(n+||\Pi_x||+b^3)= O(n+br^2+b^3)$ time for each $x\in X$. 
	Therefore, the total preprocessing used in this step
	is $O(nb+b^2r^2+b^4)$.
	
	Next, we note that each $\gamma_y$ is in $\widetilde\Xi(G,P,T_y)$ because
	of Lemma~\ref{le:star}. Therefore, we can obtain $\mu_y(V_{\interior}(\gamma_y,P))$
	in $\tilde O(1)$ times the complexity of the description of $\gamma_y$.
	Over all $y\in Y$, this takes	$\tilde O(|Y|)=\tilde O(b)$ time.

	With this data, the desired value is then obtained in $O(|Y|)=O(b)$ time using that
	\begin{align*}
		\diam(x_0,U,G^+(Y))~&=~\max \{ d_{G^+(Y)}(x_0,u) \mid u\in U\} \\
			&=~ \max_{y\in Y} ~\max_{u\in U\text{ s.t.} y(u)=y } \left( \lambda(\dart{x_0}{y(v)}) + d_G(y(v),v) \right)\\
			&=~ \max_{y\in Y} ~\left( \lambda(\dart{x_0}{y(v)}) ~+ \max_{u\in U\text{ s.t.} y(u)=y } d_G(y(v),v) \right)\\
			&=~ \max_{y\in Y} ~\left(\lambda(\dart{x_0}{y(v)}) ~+ \mu_y(V_{\interior}(\gamma_y,P))\right). \qedhere
	\end{align*}
\end{proof}

\begin{theorem}
\label{thm:perpiece_count}
	Consider the setting of Theorem~\ref{thm:perpiece_sum}.
	After $\tilde O(n + b^3 r^3)$ preprocessing time, we can handle
	the following queries in $\tilde O(b)$ expected time:
	given a subset of vertices $Y\subset X$, weights for the darts
	$\lambda(\dart{x_0}{y})$, $y\in Y$, that satisfy the condition~\eqref{eq:condition},
	and a real value $\delta$,
	return $\counting(x_0,U,G^+(Y),\delta)$.
\end{theorem}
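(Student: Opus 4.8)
The plan is to reuse the proof of Theorem~\ref{thm:perpiece_sum} almost verbatim, replacing the data structure of Theorem~\ref{thm:sumweights} by that of Corollary~\ref{cor:countweight}. First I would compute and store the distances $d_G(x,v)$ for all $x\in X$ and $v\in V(P)$ in $\tilde O(n+br)$ time: a single-source shortest-path computation in $\tilde O(n)$ time gives a potential for $G$, and then the multiple-source shortest-path machinery of~\cite{CabelloCE13,k-msspp-05}, applied to the common outer face of $G$ and $P$, answers all $br$ distance queries in $\tilde O(br)$ further time. I would also preprocess the pair $(G,P)$ for the construction of Voronoi diagrams as in Theorem~\ref{thm:abstract}, which costs $O(b^3r^2)$ time once the distances are available.

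Next, for each vertex $x\in X$ I would form the family $\Pi_x$ of all bisectors $\bis_{P,G}((x,\cdot),(x',\cdot))$, $x'\in X$, using Lemma~\ref{le:2sites} ($O(r^2)$ time per pair, hence $O(br^2)$ time per $x$, and $\Pi_x$ has $O(br^2)$ dual edges counted with multiplicity). I would then preprocess $P$ with respect to $\Pi_x$ using Corollary~\ref{cor:countweight}, taking $x$ itself as the distinguished vertex and using the vertex-weight function
\[
	\omega_x(v) ~=~ \begin{cases} d_G(x,v), & \text{if } v\in U,\\ +\infty, & \text{otherwise,} \end{cases}
\]
where the sentinel value $+\infty$ is the standard device that keeps vertices outside $U$ from ever being counted (it is the maximum element of the sorted weight array, and every finite query threshold falls strictly below it; a threshold smaller than all weights returns $0$). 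Denote the resulting counting function by $\kappa_x(\cdot,\cdot)$. By Corollary~\ref{cor:countweight}, preprocessing $P$ in this way costs $O(|V(P)|(|V(P)|+\|\Pi_x\|))=O(r(r+br^2))=O(br^3)$ time per $x$, hence $O(b^2r^3)$ over all $x\in X$; together with the earlier steps the preprocessing is $\tilde O(n+br+b^3r^2+b^2r^3)=\tilde O(n+b^3r^3)$, as claimed.

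For a query $(Y,\{\lambda(\dart{x_0}{y})\}_{y\in Y},\delta)$, I would proceed exactly as in Theorem~\ref{thm:perpiece_sum}. Condition~\eqref{eq:condition} makes the sites $S_Y=\{s_y=(y,\lambda(\dart{x_0}{y}))\mid y\in Y\}$ generic and independent, so Theorem~\ref{thm:abstract} produces the abstract Voronoi diagram $\AVD_{P,G}(S_Y)$ in $\tilde O(b)$ expected time; let $\gamma_y$ be the dual closed walk bounding $\AVR(s_y,S_Y)$, which is oriented clockwise, satisfies $y\in V_{\interior}(\gamma_y,P)$, and is described as a concatenation of subpaths of walks of $\Pi_y$. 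Since $d_{G^+(Y)}(x_0,v)=\lambda(\dart{x_0}{y(v)})+d_G(y(v),v)$ for the unique $y(v)\in Y$ with $v\in\AVR(s_{y(v)},S_Y)$, we get
\[
	\counting(x_0,U,G^+(Y),\delta) ~=~ \sum_{y\in Y}\,
		\bigl|\{u\in V_{\interior}(\gamma_y,P)\cap U \mid d_G(y,u)\le \delta-\lambda(\dart{x_0}{y})\}\bigr|
	~=~ \sum_{y\in Y} \kappa_y\bigl(V_{\interior}(\gamma_y,P),\,\delta-\lambda(\dart{x_0}{y})\bigr).
\]
Each term on the right is obtained by walking along $\gamma_y$ and issuing one query to the data structure of Corollary~\ref{cor:countweight} for $\omega_y$ with threshold $\delta-\lambda(\dart{x_0}{y})$, in $O(k_y+\log r)$ time, where $k_y$ is the number of pieces in the description of $\gamma_y$. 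Since $\AVD_{P,G}(S_Y)$ has $O(b)$ vertices and edges and each abstract Voronoi edge bounds two regions, $\sum_{y\in Y}k_y=O(b)$, so the whole query runs in $\tilde O(b)$ expected time.

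There is essentially no new obstacle here: the argument is a one-ingredient substitution into the proof of Theorem~\ref{thm:perpiece_sum}. The two points needing a little care are (i) the restriction to $U$, which for counting must be enforced with a $+\infty$ sentinel rather than the $0$ used for sums, and (ii) the bookkeeping showing that the heavier preprocessing of Corollary~\ref{cor:countweight}, which is a factor $\Theta(r)$ costlier than that of Theorem~\ref{thm:sumweights}, still fits inside the claimed $\tilde O(n+b^3r^3)$ bound; this extra $r$ factor is exactly why the statement has $b^3r^3$ in place of the $b^3r^2$ of the sum version.
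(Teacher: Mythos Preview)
Your proposal is correct and follows essentially the same approach as the paper: substitute Corollary~\ref{cor:countweight} for Theorem~\ref{thm:sumweights} in the proof of Theorem~\ref{thm:perpiece_sum}, absorbing the extra factor~$r$ in the preprocessing to get $\tilde O(n+b^3r^3)$. Your use of a $+\infty$ sentinel for vertices outside $U$ is the right fix for the counting setting (the paper reuses the weight function $\omega_x$ from the sum proof without spelling this out), and your bookkeeping for the preprocessing and query times matches the paper's.
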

\begin{proof}
	We use the same approach as in the proof of Theorem~\ref{thm:perpiece_sum} and keep using its notation.
	The main difference is that we do not use the data structure of Theorem~\ref{thm:sumweights},
	but the data structure of Corollary~\ref{cor:countweight} for the vertex-weights $\omega_x(\cdot)$.
	Let $\kappa_\le^x$ be the corresponding function.
	This means that, for each $x\in X$, we spend an extra factor $r$ in the preprocessing.
	Thus, for each $x$ we spend $O(b^2r^3)$ time, instead of $O(b^2r^2)$.
	Over all $x\in X$, this means that the preprocessing has an extra factor of $O(b^3r^3)$.
	
	The rest of the approach is the same. We just have to use that
	\begin{align*}
		\counting(x_0,U,G^+(Y),\delta)~&=~|\{ u \in U \mid d_{G^+(Y)}(x_0,u)\le \delta \}| \\
			&=~ \sum_{y\in Y} ~|\{ u \in  V_{\interior}(\gamma_y,P) \cap U \mid \lambda(\dart{x_0}{y(v)}) + d_G(y(v),v) \le \delta \} | \\
			&=~ \sum_{y\in Y} ~ \kappa_\le^y(V_{\interior}(\gamma_y,P ), \delta-\lambda(\dart{x_0}{y(v)}),
	\end{align*}
	and all values $\kappa_\le^y(V_{\interior}(\gamma_y,P ), \delta-\lambda(\dart{x_0}{y(v)})$ are recovered from the data
	structure of Corollary~\ref{cor:countweight} in $\tilde O(|Y|)=\tilde O(b)$ time.
\end{proof}

\section{Diameter and Sum of Distances in Planar Graphs}
\label{sec:together}

The data structures of Theorems~\ref{thm:perpiece_sum}, \ref{thm:perpiece_max} and~\ref{thm:perpiece_count}
are going to be used for each piece of an $r$-division. Then, for each vertex of $G$ we are going to query them.
We first explain the precise concept of piece and division that we use, and then explain its use.

\paragraph{Divisions.}
The concept of $r$-division for planar graphs
was introduced by Frederickson~\cite{f-faspp-87},
and then refined and used by several authors;
see for example~\cite{Cabello12,goodrich95,KleinS98,ItalianoNSW11} for a sample.
For us it is most convenient to use the construction of 
Klein, Mozes and Sommer~\cite{KleinMS13}.
We first state the definitions carefully, almost verbatim from the work of Klein, Mozes and Sommer~\cite{KleinMS13}.

Let $G$ be a plane graph. 
A \DEF{piece}\footnote{They use the term ``region", 
which in our opinion is more suitable. However, we are using such a term 
for so many things in this paper that in our context we prefer to use some other term.} 
$P$ of $G$ is an edge-induced subgraph of $G$. In each
piece we assume the embedding inherited from $G$.
A \DEF{boundary vertex} of a piece $P$ is a vertex of $P$ that is incident to some
edge in $E(G)\setminus E(P)$. 
A \DEF{hole} of a piece $P$ is a face of $P$ that is not a face of $G$.
Note that each boundary vertex of a piece $P$ is incident to some hole of $P$.
An \DEF{$r$-division with a few holes} of $G$ is a collection $\{ P_1,\dots, P_k\}$
of pieces of $G$ such that 
\begin{itemize}
	\item there are $O(n/r)$ pieces, that is, $k=O(n/r)$;
	\item each edge of $G$ is in at least one piece;
	\item each piece has $O(r)$ vertices;
	\item each piece has $O(\sqrt{r})$ boundary vertices;
	\item each piece has $O(1)$ holes.
\end{itemize}

\begin{theorem}[Klein, Mozes and Sommer~\cite{KleinMS13}]
\label{thm:division}
	There is a linear-time algorithm that, 
	for any biconnected triangulated planar embedded graph $G$, outputs
	an $r$-division of $G$ with few holes.
\end{theorem}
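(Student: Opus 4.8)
Since Theorem~\ref{thm:division} is exactly the main construction of Klein, Mozes and Sommer, the plan is to follow the recursive-separator paradigm and indicate where the work lies. The workhorse is Miller's simple-cycle separator theorem: a $2$-connected plane graph in which every face has bounded size (which holds here because $G$ is triangulated) admits, for any nonnegative weighting of its vertices, a simple cycle $C$ with $O(\sqrt{m})$ vertices whose open interior and open exterior each carry at most $\tfrac23$ of the weight, and $C$ can be found in linear time. Cutting a piece $Q$ along such a cycle splits it into $Q_{\mathrm{int}}$ and $Q_{\mathrm{ext}}$ sharing exactly $V(C)$; each side gains one new hole (the face behind $C$), and every pre-existing hole of $Q$ stays entirely on one side.

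First I would build an ordinary $r$-division by recursively cutting every piece of more than $r$ vertices, using Frederickson's weighted recursion in which vertices that are already boundary vertices are given weight so that separators are pushed through fresh interior; a potential argument over the recursion tree (depth $O(\log(n/r))$) then yields $O(n/r)$ pieces, each with $O(r)$ vertices and $O(\sqrt r)$ boundary vertices. To get \emph{linear} total time rather than $O(n\log n)$ one does not re-separate from scratch but reads the division off a single recursive separator decomposition, computed once in linear time, using that the vertex sets of the pieces at any fixed depth sum to $O(n)$ and that every hole created along the way can be retriangulated in time proportional to its size so that Miller's theorem keeps applying at every step.

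The genuinely delicate point is that such a piece can acquire $\Theta(\log(n/r))$ holes, since each cut on its root-to-leaf path adds one. Bringing this down to $O(1)$ is the part that needs a new idea: one interleaves a hole-reduction step, cutting pieces that carry too many holes along cycle separators \emph{balanced on the holes}, and one controls the boundary blow-up by a charging argument -- each newly created boundary vertex is charged either to an old boundary vertex or to one of the few holes of the piece being processed -- so that the final pieces still have $O(r)$ vertices, $O(\sqrt r)$ boundary vertices, and only $O(1)$ holes, while their number stays $O(n/r)$. I expect this reconciliation to be the main obstacle: the naive recursion that achieves the first three bounds necessarily violates the fourth, and any repair risks inflating the boundary count, so the construction must manage the vertex weight, the boundary weight and the hole weight simultaneously and bound all of them by an amortized analysis -- which is precisely the technical heart of the Klein--Mozes--Sommer algorithm, and for the full details I would refer to their paper.
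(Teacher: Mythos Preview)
The paper provides no proof of this theorem: it is quoted verbatim as a black-box result of Klein, Mozes and Sommer~\cite{KleinMS13}, with no argument beyond the citation. Your proposal therefore goes well beyond what the paper does --- you have written a reasonable high-level sketch of the cited construction (Miller cycle separators, Frederickson's weighted recursion, the hole-reduction phase, and the amortized charging to reconcile the competing bounds), and you correctly identify the hole-reduction step as the technical heart and end by deferring to the original paper for the details. For the purposes of this paper, simply citing \cite{KleinMS13} is all that is required.
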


In fact, we will only use that all pieces together have $O(n/r)$ holes.
Thus, other decompositions proposed by other authors could also be used.
Note that we can assume that each piece is connected because we could replace
each piece by its connected components, and we would get a new $r$-division
with a few holes.

\paragraph{Work per piece.}
We now describe how to compute the relevant information within a fixed piece
and the information between a fixed piece and all vertices outside the piece.
The next result is sufficient for our purposes; 
better results can be obtained using additional tools~\cite{MozesNW14,MozesS12}.

\begin{lemma}
\label{le:withinpiece}
	Let $P$ be a piece of $G$ with $r$ vertices and $O(\sqrt{r})$ boundary vertices. 
	Let $U$ be a subset of vertices in $P$.
	In $\tilde O(nr^{1/2}+r^2)$ time we can compute for all vertices $v\in V(P)$
	the values $\diam(v,U,G)$, $\adding(v,U,G)$, and $\counting(v,U,G,\delta)$ (for a given $\delta\in \RR$).
\end{lemma}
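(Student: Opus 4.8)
My plan is to reduce distance computations in $G$ between vertices of $P$ to distance computations in an auxiliary graph of size $O(r)$. Write $\partial P$ for the set of boundary vertices of $P$, so that $|\partial P|=O(\sqrt r)$. I will define a directed graph $H$ with $V(H)=V(P)$ and two kinds of arcs: (i) every arc of $G$ whose two endpoints lie in $V(P)$, keeping its original length; and (ii) for each ordered pair $(x,y)$ of vertices of $\partial P$, an arc $\dart{x}{y}$ of length $d_G(x,y)$. The type-(i) arcs form a subgraph of the planar graph $G$, hence there are $O(r)$ of them, and there are $|\partial P|^2=O(r)$ type-(ii) arcs, so $H$ has $O(r)$ vertices and $O(r)$ arcs.

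The key step is the claim that $d_H(v,u)=d_G(v,u)$ for all $v,u\in V(P)$. For ``$\ge$'': every arc $\dart{a}{b}$ of $H$ has length at least $d_G(a,b)$ (trivially for type-(i) arcs, with equality for type-(ii) arcs), so by the triangle inequality in $G$ every walk of $H$ has length at least the $G$-distance between its endpoints. For ``$\le$'': let $\pi$ be a shortest path in $G$ from $v$ to $u$, and list the vertices of $\pi$ lying in $V(P)$, in the order they occur along $\pi$, as $v=z_0,z_1,\dots,z_\ell=u$. For a consecutive pair $z_t,z_{t+1}$ adjacent along $\pi$, the arc between them is an arc of $G$ with both endpoints in $V(P)$, hence an arc of $H$. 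For a consecutive pair not adjacent along $\pi$, all internal vertices of the subpath of $\pi$ between them lie outside $V(P)$; the successor of $z_t$ and the predecessor of $z_{t+1}$ along $\pi$ then lie outside $V(P)$, so $z_t$ and $z_{t+1}$ are each incident to an arc of $E(G)\setminus E(P)$ and therefore belong to $\partial P$, and the subpath has length at least $d_G(z_t,z_{t+1})$, the length of the arc $\dart{z_t}{z_{t+1}}$ of $H$. Replacing each non-adjacent subpath by the corresponding type-(ii) arc turns $\pi$ into a walk of $H$ from $v$ to $u$ of length at most that of $\pi$, which proves ``$\le$''.

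It remains to turn the claim into an algorithm. First, in $\tilde O(n)$ time I would compute a shortest-path tree from an arbitrary source of $G$ using one of the near-linear planar algorithms~\cite{fr-pgnwe-06,KleinMW10,MozesW10}; this also detects a negative cycle if there is one, and yields a potential $\phi$ and non-negative reduced lengths $\tilde\lambda$ (see Section~\ref{sec:preliminaries}). Working with the lengths $\tilde\lambda$, I would run Dijkstra's algorithm in $G$ from each of the $O(\sqrt r)$ vertices of $\partial P$; this takes $\tilde O(n\sqrt r)$ time and gives the $\tilde\lambda$-distances $d_G(x,y)$ for all $x,y\in\partial P$. Then I build $H$ as above, putting $\tilde\lambda$ on the type-(i) arcs and the computed distances on the type-(ii) arcs, so that all arc lengths of $H$ are non-negative; this costs $O(r)$ time. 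Next, for each $v\in V(P)$ I run Dijkstra in $H$ from $v$ in $\tilde O(r)$ time (as $H$ has $O(r)$ arcs); by the claim applied to $G$ with lengths $\tilde\lambda$, this gives the $\tilde\lambda$-distance from $v$ to every $u\in V(P)$, which equals $d_G(v,u)+\phi(v)-\phi(u)$ by the properties of reduced lengths, so the true distances $d_G(v,u)$ are recovered. Finally I scan the $O(r)$ values $\{d_G(v,u)\mid u\in U\}$ to read off $\diam(v,U,G)$, $\adding(v,U,G)$ and $\counting(v,U,G,\delta)$. Over all $v\in V(P)$ the last two phases take $\tilde O(r^2)$ time, so the total is $\tilde O(n)+\tilde O(n\sqrt r)+\tilde O(r^2)=\tilde O(nr^{1/2}+r^2)$. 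I expect the only delicate point to be the structural claim above --- in particular the fact that $H$ must contain every arc of $G$ inside $V(P)$, not merely the arcs of the edge-induced piece $P$, since a subpath of a shortest path can stay inside $V(P)$ while using an edge that is absent from $P$; everything else is routine bookkeeping with planar shortest paths.
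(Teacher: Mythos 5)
Your proof is correct and takes essentially the same route as the paper: replace $G$ by a compressed graph on $V(P)$ that keeps the arcs inside the piece and adds a clique on the boundary vertices with lengths equal to $G$-distances, compute those distances by $O(\sqrt r)$ single-source shortest-path computations in $\tilde O(n\sqrt r)$ time, and then run all-pairs shortest paths on the $O(r)$-size compressed graph in $\tilde O(r^2)$ time. One small remark on the point you flagged as delicate: including every arc of $G$ with both endpoints in $V(P)$ (rather than only $E(P)$) is harmless but not actually necessary. If $\dart ab\in E(G)\setminus E(P)$ and $a,b\in V(P)$, then by the definition of boundary vertex both $a$ and $b$ lie in $\partial P$, and so the type-(ii) arc $\dart ab$ of length $d_G(a,b)\le\lambda(\dart ab)$ already covers this case. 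Thus the paper's graph $\widetilde P$, built from $E(P)$ plus the boundary clique, already satisfies $d_{\widetilde P}=d_G$ on $V(P)$ by exactly the argument you gave; your extra type-(i) arcs are subsumed by the boundary arcs. Your handling of negative lengths via a single potential computation up front, so that Dijkstra can be used throughout, is a clean way to implement the paper's parenthetical remark about potential functions.
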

\begin{proof}
	Let $\partial$ be the set of boundary vertices of $P$ in $G$.
	We compute shortest-path trees from each vertex $x\in \partial$ in $G$ 
	in near-linear time~\cite{fr-pgnwe-06,KleinMW10,MozesW10}. 
	This takes $|\partial|\cdot \tilde O(n)= \tilde O(nr^{1/2})$ time.

	We build a graph $\widetilde P$ by adding to $P$ arcs
	between each pair of vertices of $\partial$. The length of each
	new arc $\dart{x}{y}$ is set to $d_G(x,y)$.
	Standard arguments show that a distance between any two vertices of $P$ 
	is the same in $G$ and in $\widetilde P$.
	The graph $\widetilde P$ has $O(|E(P)|+|\partial|^2)=O(r)$ edges and $O(r)$
	vertices. We can compute all pairwise distances in 
	$\tilde O(|V(\widetilde P)|\cdot |E(\widetilde P)|)=\tilde O(r\cdot r)=\tilde O(r^2)$ time using
	standard approaches. (Since $\widetilde P$ may have negative weights, we
	may have to use a potential function.)
	
	From all the distances in $\widetilde P$, that are also distances in $G$,
	we can compute the desired values directly.
\end{proof}

\begin{lemma}
\label{le:outside_piece}
	Let $P$ be a piece of $G$ with $r$ vertices, $O(\sqrt{r})$ boundary vertices, and $h$ holes. 
	Let $U$ be a subset of vertices in $P$.
	\begin{itemize}
		\item In $\tilde O(nh + r^{7/2} + nr^{1/2} )$ expected time we can compute 
			the values $\adding(v,U,G)$ for all vertices $v\in V(G)\setminus V(P)$.
		\item In $\tilde O(nh + r^{7/2} + nr^{1/2} )$ expected time we can compute 
			the values $\diam(v,U,G)$ for all vertices $v\in V(G)\setminus V(P)$.
		\item In $\tilde O(nh + r^{9/2} + nr^{1/2} )$ expected time we can compute 
			the values $\counting(v,U,G,\delta)$ for a given $\delta$ and all vertices $v\in V(G)\setminus V(P)$.
	\end{itemize}
\end{lemma}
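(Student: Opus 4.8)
The plan is to build, for the fixed piece $P$, a small "outside graph" that sees $P$ only through its $O(\sqrt r)$ boundary vertices, and then to process all outside vertices $v$ against $P$ one at a time using the per-piece data structures of Section~\ref{sec:datastructure}. Concretely, let $\partial$ be the set of boundary vertices of $P$; since $P$ has $h$ holes, every boundary vertex lies on some hole, and the $O(\sqrt r)$ boundary vertices are distributed among the $h$ holes. First I would compute a potential function for $G$ by one single-source shortest-path computation in $\tilde O(n)$ time, so that from now on all lengths may be taken non-negative. Then, for each hole of $P$, the boundary vertices on that hole lie on a common face; using the multiple-source shortest-path machinery of~\cite{CabelloCE13,k-msspp-05} once per hole (each hole is a face of $P$, and $P$ has $O(r)$ vertices) I would compute $d_G(x,v)$ for all $x\in\partial$ and all $v\in V(P)$. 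This costs $\tilde O(n)$ per hole for getting a shortest-path tree into each relevant region plus $\tilde O(r)$ per boundary vertex, i.e.\ $\tilde O(nh + r^{1/2}r) = \tilde O(nh + r^{3/2})$ overall, which is within budget. These are exactly the distances that Theorem~\ref{thm:abstract} and Theorems~\ref{thm:perpiece_sum}, \ref{thm:perpiece_max}, \ref{thm:perpiece_count} require as input.

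Next I would install the per-piece data structure. Apply Theorem~\ref{thm:perpiece_sum} with $X=\partial$ (so $b=O(\sqrt r)$) and the given $U\subseteq V(P)$; its preprocessing is $\tilde O(n + b^3 r^2) = \tilde O(n + r^{3/2}r^2) = \tilde O(n + r^{7/2})$. For the diameter variant, Theorem~\ref{thm:perpiece_max} costs $\tilde O(nb + b^3 r^2 + b^4) = \tilde O(nr^{1/2} + r^{7/2} + r^2) = \tilde O(nr^{1/2} + r^{7/2})$. For the counting variant, Theorem~\ref{thm:perpiece_count} costs $\tilde O(n + b^3 r^3) = \tilde O(n + r^{9/2})$. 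These are exactly the $r$-dependent terms in the three claimed bounds, and the $nr^{1/2}$ term in each bound absorbs both the boundary shortest-path trees and the $\tilde O(nb)$ preprocessing of the max variant.

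Now I would iterate over all $v\in V(G)\setminus V(P)$. For a fixed such $v$, the key observation is that any shortest path from $v$ into $P$ must enter $P$ through $\partial$; so $d_G(v,u) = \min_{y\in\partial}\bigl(d_G(v,y) + d_G(y,u)\bigr)$ for every $u\in V(P)$. Thus, if I set $Y=\{\,y\in\partial : y \text{ actually carries a shortest path from } v\,\}$ and $\lambda(\dart{x_0}{y}) = d_G(v,y)$, then $d_{G^+(Y)}(x_0,u) = d_G(v,u)$ for all $u\in V(P)$, and condition~\eqref{eq:condition} holds because a potential/triangle-inequality argument rules out using one boundary vertex to reach another more cheaply (concretely, one may always prune $Y$ down to the subset satisfying~\eqref{eq:condition} without changing any of the distances $d_{G^+(Y)}(x_0,u)$). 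Hence each of $\adding(v,U,G)$, $\diam(v,U,G)$, $\counting(v,U,G,\delta)$ equals the corresponding per-piece query answer $\adding(x_0,U,G^+(Y))$, etc., answerable in $\tilde O(b)=\tilde O(r^{1/2})$ expected time. The only remaining ingredient is: for each $v$, obtain the values $d_G(v,y)$ for $y\in\partial$. But those are already available — they are a transpose of the table $d_G(x,v)$ for $x\in\partial$, $v\in V(P)$, once we note $d_G(v,y)$ for $v\notin P$ is $d_{G^R}(y,v)$ and running the same $\partial$-rooted shortest-path computation in $G^R$ gives all of them in another $\tilde O(nh + r^{3/2})$. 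So the total over all $n$ outside vertices is $n\cdot\tilde O(r^{1/2}) = \tilde O(nr^{1/2})$, and summing the preprocessing and the query phase gives the three stated bounds.

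The main obstacle I expect is the bookkeeping around the set $Y$ and condition~\eqref{eq:condition}: a priori the natural choice $Y=\partial$ with weights $d_G(v,y)$ need not satisfy~\eqref{eq:condition}, since two boundary vertices may be close to each other in $G$. The fix is to discard any $y$ for which some other $y'\in Y$ has $d_G(v,y') + d_G(y',y) \le d_G(v,y)$; such a $y$ is dominated on all of $V(P)$ and its removal changes no distance $d_{G^+(Y)}(x_0,u)$, and after this greedy pruning the surviving set satisfies~\eqref{eq:condition}. One must also check that the surviving sites are generic and independent in the sense required by Theorem~\ref{thm:abstract}; genericity follows from the global perturbation assumption, and independence (non-empty Voronoi cells) follows because each surviving $y$ is the unique closest site for at least the vertex $y$ itself. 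A secondary technical point is handling vertices that lie in several pieces of the $r$-division — but the statement of the lemma concerns a single fixed piece and its complement, so this is deferred to Section~\ref{sec:together} where the pieces are combined.
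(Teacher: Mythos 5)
Your proposal correctly identifies the high-level plan (precompute distances from boundary vertices, install a per-piece Voronoi data structure, then query it once per outside vertex with weights $\lambda(\dart{x_0}{y})=d_G(v,y)$, pruning $Y$ to satisfy condition~\eqref{eq:condition}), and your domination-based pruning of $Y$ is essentially equivalent to the paper's definition of the active boundary set. However, there is a genuine gap in the step where you ``apply Theorem~\ref{thm:perpiece_sum} with $X=\partial$''. The per-piece data structures (Theorems~\ref{thm:perpiece_sum}--\ref{thm:perpiece_count}, which rest on Theorem~\ref{thm:abstract} and Lemma~\ref{le:admissible_holes}) require the ambient graph and the piece $P$ to share the same outer face, and all sites $X$ to lie on that outer face. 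This is forced by axiom~(A3) of abstract Voronoi diagrams: every bisector must pass through the common point $a_\infty$, which only happens when all the site locations are cofacial. When $P$ has $h\ge 2$ holes, the boundary vertices $\partial$ are spread over $h$ distinct faces of $P$, and $G$ has vertices strictly inside each hole, so there is no embedding under which $G$ and $P$ share an outer face that contains all of $\partial$. A bisector between two sites located on different holes of $P$ has no reason to pass through $a_\infty$, so the system of bisectors is not admissible, and Theorem~\ref{thm:abstract} is not applicable.

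The paper resolves this by processing the holes one at a time. For each hole $i$, it deletes the set $A_i$ of vertices strictly inside that hole to obtain $G_i=G-A_i$, re-embeds $G_i$ so that hole $i$ is its outer face (now shared by $G_i$ and $P$), and restricts the sites to $X_i$, the boundary vertices on that hole. Only the vertices $v\in A_i$ are handled with this $i$-th data structure, and for such $v$ the key identity becomes $d_G(v,u)=\min\{d_G(v,y)+d_{G_i}(y,u):y\in Y_i^v\}$, where $Y_i^v\subseteq X_i$ and the inner distance is taken in $G_i$ (which requires computing shortest-path trees in $G_i$ as well as in $G$). Because there are $h$ such data structures, each with preprocessing $\tilde O(n+|X_i|^3 r^2)$, the $nh$ term in the lemma's running time arises from the preprocessing, not merely from the MSSP step as your accounting suggests. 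Without the per-hole decomposition, your single application of Theorem~\ref{thm:perpiece_sum} is not justified, and the proof does not go through for $h\ge 2$.
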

\begin{proof}
	Let $C_1,\dots, C_h$ be the facial walks of the holes of $P$. 
	For $i\in [h]$, let $A_i$ be the vertices of $G$ contained in the interior of the hole defined by $C_i$.
	Since each vertex of $V(G)\setminus V(P)$ is strictly contained in exactly one hole of $P$,
	the sets $A_1,\dots,A_h$ form a partition of $V(G)\setminus V(P)$.
	For each $i\in [h]$, we define the graph $G_i=G-A_i$ and  
	let $X_i$ be the set of boundary vertices that appear in $C_i$.
	See Figure~\ref{fig:perpiece} for an example.
	The sets $A_1,\dots,A_h$, $X_1,\dots,X_h$,
	and the graphs $G_1,\dots, G_h$ can be constructed in $O(nh)$ time.

	\begin{figure}
		\centering
		\includegraphics[page=1]{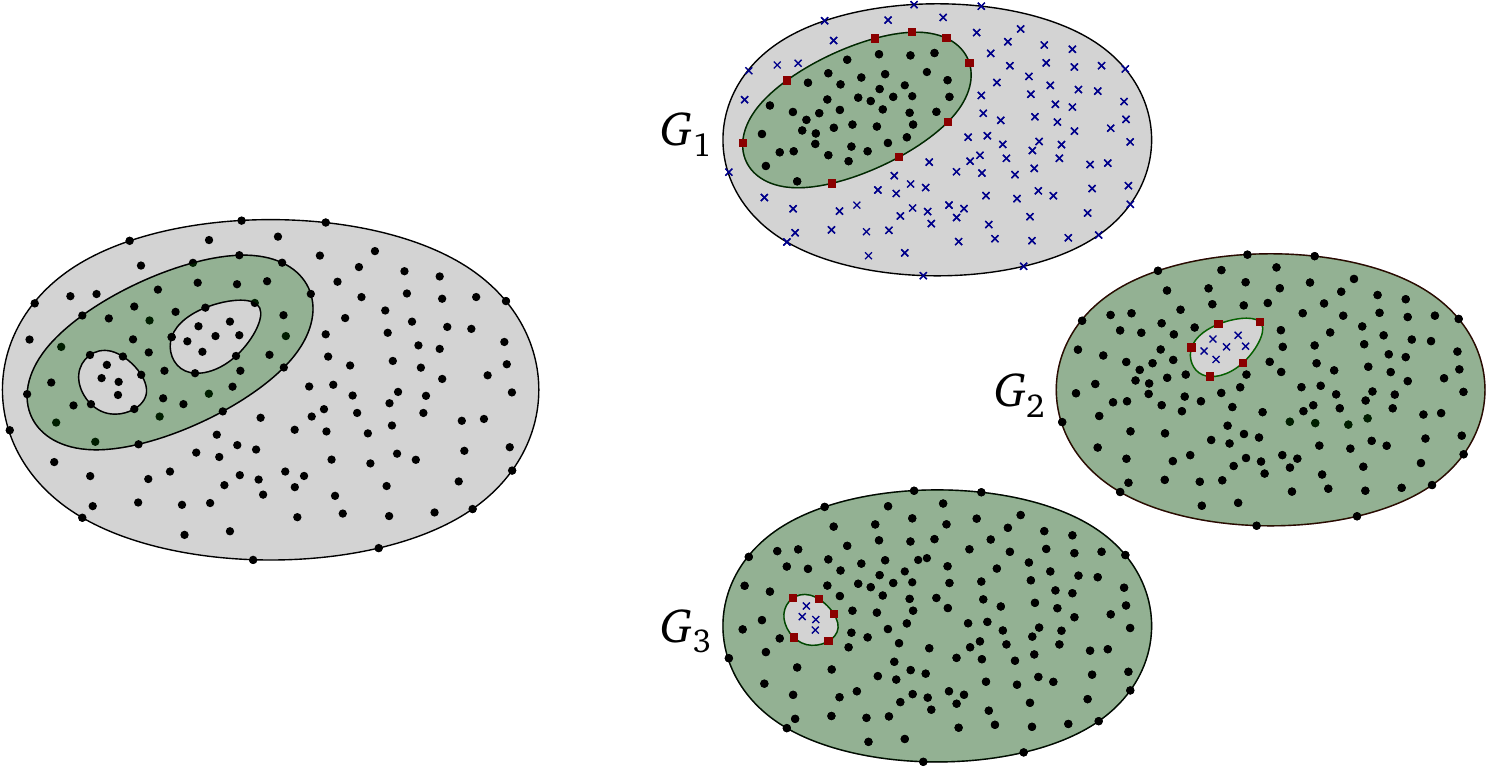}
		\caption{The graphs $G_i$ and the vertices $X_i$ in the proof of Lemma~\ref{le:outside_piece}.
			On the left we have a graph $G$ (in gray) and a piece $P$ (in light green) of $G$ with three holes.
			On the right we have, for $i\in [3]$, the graph $G_i$ (in light green), the set $X_i$ (red squares) and the set $A_i$ (blue crosses).}
		\label{fig:perpiece}
	\end{figure}

	We compute the distances in $G$ and in $G_i$ from and to all boundary vertices $X_i$.
	This can be done computing $4\cdot |X_i|= O(\sqrt{r})$ different shortest-path trees, each of
	them in $G$, $G_i$, or the reversed graphs $G^R$,  $G_i^R$.
	Since each single-source shortest path can be computed in
	$\tilde O(n)$ time~\cite{fr-pgnwe-06,KleinMW10,MozesW10},
	we spend in total $\tilde O(nr^{1/2})$ time. 	

	Consider any fixed index $i\in [h]$.
	For each $v\in A_i$, let $Y_i^v$ be the vertices $y$ of $X_i$ such that
	in the shortest path in $G$ from $v$ to $y$ the last arc is not contained in $G_i$. 
	For each $x\in X_i\setminus Y_i^v$, there exists some other boundary vertex $y\in Y^v_i$ such that
	$d_G(v,x)= d_G(v,y)+d_{G_i}(y,x)$. Therefore, for each $u\in V(P)$, we have
	\[
		d_{G}(v,u) ~=~ \min \{ d_G(v,x)+d_{G_i}(x,y)\mid x\in X_i\} 
					      ~=~ \min \{ d_G(v,y)+d_{G_i}(y,u)\mid y\in Y_i^v\} .
	\]
	Because the selection we made for $Y_i^v$ and the uniqueness of shortest paths in $G$,
	we have that
	\begin{equation}
		\forall y,y'\in Y_i^v, y\neq y': ~~~ d_G(v,y) < d_G(v,y')+ d_{G_i}(y',y).
		\label{eq:condition_piece}
	\end{equation}
	Using the shortest-path trees to $x\in X_i$, we can identify the relevant pairs 
	$\{ (v,y)\mid v\in A_i, y \in Y_i^v\}$ in $O(n|X_i| )$ time. 
	Since $\sum_i |X_i| = O(\sqrt{r})$, over all indices $i\in [h]$ we spend $O(nr^{1/2})$ time.
		
	For each $i\in [h]$, fix an embedding of $G_i$ such that 
	$C_i$ defines the outer face and thus $X_i$ lies in the outer face of $G_i$ and $P$.
	
	Now there are slight differences depending on the data we want to compute.
	The difference lies in which data structure we use.
	Let us first consider the problem of computing $\adding(v,U,G)$.
	We apply Theorem~\ref{thm:perpiece_sum} for the graph $G_i$ and the piece $P$ with respect
	to the set $X_i$. 
	Since $G_i$ has $O(n)$ vertices and $P$ has $O(r)$ vertices,	
	the preprocessing takes $\tilde O(n+ |X_i|^3 r^2))$ time.
	Now, for each vertex $v\in A_i$, we 
	consider the graph $G_i^+(Y_i^v)$ with edge weights 
	$\lambda(\dart{x_0}{y})=d_G(v,y)$ for all $y\in Y_i^v$.
	Note that, with these weights, the property in~\eqref{eq:condition_piece}
	corresponds to condition~\eqref{eq:condition}.
	Moreover, for each $u\in V(P)$ we have $d_G(v,u)=d_{G_i^+(Y_i^v)}(x_0,u)$.
	Therefore, we can use the data structure of Theorem~\ref{thm:perpiece_sum} 
	to get in $\tilde O(|Y_i^v|)=\tilde O(|X_i|)$ time
	\begin{align*}
				\adding(x_0,U,G_i^+(Y_i^v))~&=~\sum_{u\in U} d_{G_i^+(Y_i^v)}(x_0,u)
							~=~\sum_{u\in U} d_{G}(v,u)\\
							&=~\adding(v,U,G).
	\end{align*}

	Iterating over all $i\in [h]$ and noting that $A_1,\dots,A_h$ is a partition of $V(G)\setminus V(P)$,
	we obtain the desired values: $\adding(v,U,G)$ for all $v\in V(G)\setminus V(P)$.
	The running time for the preprocessing of this last step is
	\[
		\sum_i \tilde O( n + |X_i|^3 r^2) ~=~ \tilde O(  nh +|X|^3 r^2 )
		~=~ \tilde O(nh + r^{7/2}).
	\]
	and the running time for the queries is
	\[
		\sum_i \tilde O(|A_i| \cdot |X_i|)  ~=~ \tilde O\left( (\sum_i |A_i|) \cdot (\sum_i |X_i|) \right)
		~=~ \tilde O(n r^{1/2}).
	\]
	The result in the first item follows.
	
	For computing $\diam(v,U,G)$, we use the same approach, but employ Theorem~\ref{thm:perpiece_max}
	instead of Theorem~\ref{thm:perpiece_sum}.
	The preprocessing time for $i\in [h]$ has an extra factor $\tilde O(|X_i|^4)$.
	Therefore, the preprocessing time in the last step becomes
	\[
		\sum_i \tilde O( n + |X_i|^3 r^2 + |X_i|^4) ~=~ \tilde O(  nh +|X|^3 r^2 + |X|^4 )
		~=~ \tilde O(nh + r^{7/2}).
	\]
	The rest is essentially the same, and we obtain the claim in the second item.
	
	For computing $\counting(v,U,G,\delta)$, we use the same approach, but employ Theorem~\ref{thm:perpiece_count}
	instead of Theorem~\ref{thm:perpiece_sum}.
	The preprocessing time for $i\in [h]$ has an extra factor $\tilde O(|X_i|^3 r^3)$.
	Therefore, the preprocessing time in the last step becomes
	\[
		\sum_i \tilde O( n + |X_i|^3 r^3 ) ~=~ \tilde O(  nh +|X|^3 r^3)
		~=~ \tilde O(nh + r^{9/2}).
	\]
	The rest is essentially the same, and we obtain the claim in the third item.	
\end{proof}

\paragraph{Working over all pieces.}
We can now obtain our main result. 

\begin{proof}[Proof of Theorem~\ref{thm:main}]
	Adding edges of sufficiently large lengths, 
	we may assume that $G$ is triangulated. 
	We also embed $G$. These operations can be done in linear time.
	With a slight abuse of notation, we keep using $G$ for the resulting
	embedded, triangulated graph.
	
	We compute an $r$-division ${\mathcal P}=\{ P_1,\dots, P_k\}$ 
	of $G$ with few holes, for a parameter $r$ to be specified below. 
	According to Theorem~\ref{thm:division}, this takes $O(n)$ time.
	
	To avoid double counting we assign each vertex to a unique piece, as follows.
	For each vertex $x$ of $G$ we select a unique index $i(x)$ such that
	the piece $P_{i(x)}$ contains $x$. For each piece $P_j\in {\mathcal P}$,
	we define the set $U_j=\{ x\in V(P_j)\mid i(x)=j \}$.
	The sets $U_1,\dots, U_k$ are a partition of $V(G)$ and can be easily computed
	in linear time.
	
	Next, we iterate over the pieces and, for each piece $P_i\in {\mathcal P}$, we use 
	Lemma~\ref{le:outside_piece} to compute the values
	\[
		\adding(v,U_i,G),~\diam(v,U_i,G)
		~~~ \forall v\in V(G)\setminus V(P_i).
	\]
	We also use Lemma~\ref{le:withinpiece} to compute 
	\[
		\adding(v,U_i,G),~\diam(v,U_i,G)
		~~~ \forall v\in V(P_i).
	\]
	Since the piece $P_i$ has $O(1)$ holes, we spend $\tilde O(n + r^{7/2} + nr^{1/2})= \tilde O(r^{7/2} + nr^{1/2})$ time per piece.
	Iterating over the $O(n/r)$ pieces, we get 
	\[
		\adding_G(v,U_i),~\diam_G(v,U_i), ~~~ \forall v\in V(G),~ i\in [k]
	\]
	in time 
	\[
		O(n/r)\cdot \tilde O(r^{7/2} + nr^{1/2}) = \tilde O( nr^{5/2} + n^2 r^{-1/2}).
	\]
	Because $U_1,\dots,U_k$ is a partition of $V(G)$ we can easily compute the desired values because 
	\begin{align*}
		\adding(v,V(G),G) ~&=~ \sum_{i\in [k]}  \adding(v,U_i,G),\\
		\diam(v,V(G),G) ~&=~ \max \{ \diam(v,U_i,G)\mid i\in [k]\}.
	\end{align*}
	(For the diameter of course we do not need that the sets $U_1,\dots,U_k$ are disjoint.)
	Taking $r=n^{1/3}$ the running time becomes $\tilde O(n^{11/6})$ in expectation.
	
	For $\counting(\cdot)$ we use the third item of Lemma~\ref{le:outside_piece}
	to compute for each piece $P_i$
	\[
		\counting(v,U_i,G,\delta)~~~ \forall v\in V(G)\setminus V(P_i).
	\]
	Then, for each piece we spend 
	$\tilde O(n + r^{9/2} + nr^{1/2})= \tilde O(r^{9/2} + nr^{1/2})$.
	Over all pieces, the running time thus becomes
	\[
		O(n/r)\cdot \tilde O(r^{9/2} + nr^{1/2}) = \tilde O( nr^{7/2} + n^2 r^{-1/2}).
	\]	
	Choosing $r=n^{1/4}$ we obtain a running time of $\tilde O(n^{15/8})$.
	Again using that $U_1,\dots,U_k$ is a partition of $V(G)$, we can compute 
	\begin{align*}
		\counting(v,V(G),G,\delta) ~&=~ \sum_{i\in [k]}  \counting(v,U_i,G,\delta).
	\end{align*}
\end{proof}

\begin{corollary}
	Let $G$ be a planar graph with $n$ vertices, real abstract length on its arcs, and no negative cycle.
	In $O(n^{11/6}\polylog(n))$ expected time we can compute the diameter of $G$
	and the sum of the pairwise distances in $G$.
	For a given $\delta\in \RR$,
	in $O(n^{15/8}\polylog(n))$ expected time we can compute 
	the number of pairs of vertices in $G$ at distance at most $\delta$.
\end{corollary}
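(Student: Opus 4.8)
The plan is to derive this corollary directly from Theorem~\ref{thm:main} together with the elementary aggregation identities recorded at the end of Section~\ref{sec:preliminaries}. First I would note that, once unique shortest paths are enforced deterministically by lexicographic tie-breaking (see the discussion referenced after the statement of Theorem~\ref{thm:main}), every distance $d_G(\cdot,\cdot)$ is exactly the distance in the original graph, so the three global statistics $\diam(G)$, $\adding(G)$, and $\counting(G,\delta)$ are the quantities we want, and they decompose over the source vertex:
\begin{align*}
	\diam(G) &= \max\{ \diam(x,V(G),G)\mid x\in V(G)\},\\
	\adding(G) &= \sum_{x\in V(G)} \adding(x,V(G),G),\\
	\counting(G,\delta) &= \sum_{x\in V(G)} \counting(x,V(G),G,\delta).
\end{align*}

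Second, I would invoke Theorem~\ref{thm:main}: in $O(n^{11/6}\polylog(n))$ expected time it computes the values $\adding(x,V(G),G)$ and $\diam(x,V(G),G)$ for every vertex $x$ of $G$, and, for the given threshold $\delta\in\RR$, in $O(n^{15/8}\polylog(n))$ expected time it computes $\counting(x,V(G),G,\delta)$ for every vertex $x$. Finally I would aggregate: a single scan over the $n$ per-vertex values produces $\diam(G)$ as their maximum and $\adding(G)$ as their sum, and a second scan produces $\counting(G,\delta)$ as a sum; this postprocessing takes $O(n)$ time and is dominated by the running times of Theorem~\ref{thm:main}.

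There is essentially no obstacle here, since all of the difficulty has been absorbed into Theorem~\ref{thm:main}; the corollary is just the observation that the three global statistics split additively (or as a maximum) over the choice of source vertex. The only point deserving a word of care is that the randomization used to guarantee unique shortest paths must not alter the reported answers, which is precisely why one may use the deterministic lexicographic comparison: it changes only which shortest path is selected, never the value of a distance, and hence never $\diam(G)$, $\adding(G)$, or $\counting(G,\delta)$.
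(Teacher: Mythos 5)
Your argument is exactly the paper's (the corollary is left without an explicit proof because it follows immediately from Theorem~\ref{thm:main} and the aggregation identities recorded at the end of Section~\ref{sec:preliminaries}, which is precisely what you spell out). The extra remark about lexicographic tie-breaking not changing the distance values is a fine point of care but not needed beyond what the paper already discusses.
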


\section{Discussion}

We have decided to explain the construction through the use
of abstract Voronoi diagrams, instead of providing an 
algorithm tailored to our case. It is not clear to the author
which option would be better. In any case, for people 
familiar with randomized incremental constructions,
it should be clear that the details can be worked out,
once the compact representation of the bisectors using the dual graph is available.
Using a direct algorithm perhaps we could get rid of the 
assumption that the sites have to be in the outer face and
perhaps we could actually build a deterministic algorithm.
In fact, Gawrychowski et al.~\cite{ghmsw18} do follow this path and have
obtained a deterministic algorithm.

There are also deterministic algorithms to compute 
abstract Voronoi diagrams~\cite{Klein89,KleinLN09}.
However, they require additional elementary operations and properties. 
Also, when the abstract Voronoi diagram
has a forest-like shape, it can be computed in linear time~\cite{BohlerKL14}. 
It is unclear to the author whether these results are applicable in our case.

We think that the algorithm can be extended to graphs on surfaces of small genus, but
for this one should take care to extend the construction of abstract Voronoi diagrams to
graphs on surfaces or to 
planar graphs when the sites are in $O(g)$ faces, where $g$ is the genus of the surface.
Let us discuss the reduction to the planar case.
The first step is to find an $r$-division where each part is planar.
For this we can use the separator theorem of Eppstein~\cite{Eppstein03}. 
It computes a set of curves on the surface that pass through $O(\sqrt{gn})$ vertices of $G$ and
do not pass through the interior of any edge. 
Moreover, cutting along them gives a collection of planar patches, possibly with multiple holes.
Taking a maximal subset of the curves that are homologically independent, we will get
$O(g)$ curves that pass through $O(\sqrt{gn})$ vertices and cut the surface into planar patches.
Now we can compute an $r$-division in each of the patches.
We can also compute the distances from all the boundary vertices
because shortest paths in the presence of negative weights can be computed in subquadratic time~\cite{ChambersEN12,LT79}.
Now we run into problems. Consider a vertex $x$ and a piece $P$. The shortest paths from $x$ to different vertices
of $P$ can pass through different boundary cycles. In the planar case, we always have a boundary cycle that intersects
all the paths from $x$ to all vertices of $P$. There is no such cycle in the case of surfaces, for example,
when one of the planar patches is a cylinder. Computing Voronoi diagrams for sites placed in $O(g)$ cycles
would handle this problem. While we think that this should be doable, it requires non-trivial work. In particular,
some of the holes may have non-trivial topology. It seems that the new result by Gawrychowski et al.~\cite{ghmsw18}
may be the missing piece for making this work.

Gawrychowski et al.~\cite{ghmsw18} have managed to reduce our exponent $11/6$ for the diameter to $5/3$.
The author would be surprised if the problems considered in this paper can
be solved in near-linear time. Thus, the author conjectures that there should be 
some conditional lower bounds of the type $\Omega(n^{c})$ for some constant $c>1$.

\section*{Acknowledgments}
This work was initiated at the Dagstuhl seminar Algorithms for Optimization Problems in Planar Graphs, 2016. I am very grateful to Kyle Fox, Shay Mozes, Oren Weimann, and Christian Wulff-Nilsen for several discussions on the problems treated here.
I am also grateful to the reviewers of the paper for their many useful suggestions.

\end{document}